\newcommand{\basic}{\texttt{BASIC}}
\newcommand{\naive}{\texttt{BASIC}}
\newcommand{\lockstep}{\texttt{LOCKSTEP}}
\newcommand{\interleave}{\texttt{INTERLEAVE}}
\newcommand{\skeleton}{\texttt{SKELETON}}
\newcommand{\storm}{\texttt{STORM}}
\newcommand{\prism}{\texttt{PRISM}}
\newcommand{\SCC}{\mathsf{SCC}}
\newcommand{\MECs}{\mathsf{MECs}}
\newcommand{\MEC}{\mathsf{MEC}}
\newcommand{\sa}{\mathsf{sa}}
\newcommand{\Pick}{\mathsf{Pick}}
\newcommand{\Pre}{\mathsf{Pre}}
\newcommand{\Post}{\mathsf{Post}}
\newcommand{\ROut}{\mathsf{ROut}}
\newcommand{\Attr}{\mathsf{Attr}}
\newcommand{\init}{\mathsf{init}}
\newcommand{\nat}{\mathbb{N}}
\newcommand{\mdp}{\mathcal{M}}
\newcommand{\bigO}{\mathcal{O}}
\newcommand{\cudd}{\texttt{CUDD}}
\newcommand{\M}{\mathcal{M}}
\newcommand{\arb}{\mathsf{arb}}
\newcommand{\algo}[5]{
    \begin{algorithm}[t]
        \caption{#1}\label{#2}
        \textbf{Input:} #3 \\
        \textbf{Output:} #4
    \begin{algorithmic}[1]
        #5    
    \end{algorithmic}
    \end{algorithm}
}
\newcommand{\algoH}[5]{
    \begin{algorithm}[H]
        \caption{#1}\label{#2}
        \textbf{Input:} #3 \\
        \textbf{Output:} #4
    \begin{algorithmic}[1]
        #5    
    \end{algorithmic}
    \end{algorithm}
}
\newenvironment{hproof}{%
  \proof}{\endproof}
\begin{document}

\title{\texttt{INTERLEAVE}: A Faster Symbolic Algorithm for Maximal End Component Decomposition}
\titlerunning{\texttt{INTERLEAVE}: Empirically Faster Symbolic MEC Decomposition}
%
\author{Suguman Bansal\inst{1}\orcidlink{0000-0002-0405-073X}
\and
Ramneet Singh\inst{2}\orcidlink{0009-0001-3204-7562}\thanks{Corresponding Author}}
\authorrunning{S. Bansal and R. Singh}
\institute{Georgia Institute of Technology, Atlanta GA 30332, USA \\
\email{suguman@gatech.edu}\\
\and 
Indian Institute of Technology Delhi, Delhi - 110016, India 
 \\
\email{ramneet2001@gmail.com} 
}
\maketitle              

\sloppy
\begin{abstract}

This paper presents a novel symbolic algorithm for the {\em Maximal End Component (MEC)} decomposition of a {\em Markov Decision Process (MDP)}. The key idea behind our algorithm \texttt{INTERLEAVE} is to interleave the computation of Strongly Connected Components (SCCs) with eager elimination of redundant state-action pairs, rather than performing these computations sequentially as done by existing state-of-the-art algorithms. Even though our approach has the same complexity as prior works, an empirical evaluation of \texttt{INTERLEAVE} on the standardized Quantitative Verification Benchmark Set demonstrates that it solves \textbf{$\mathbf{19}$ more benchmarks} (out of $368$) than the closest previous algorithm. On the $149$ benchmarks that prior approaches can solve, we demonstrate a \textbf{$\mathbf{3.81 \times}$ average speedup} in runtime.

\keywords{Probabilistic model checking \and Symbolic algorithms \and Maximal end components}

\end{abstract}

\section{Introduction}
\label{sec:introduction}

{\em Maximal End Component (MEC) Decomposition} is a fundamental problem in {\em probabilistic model checking}. An {\em end-component} of an MDP is a set of state-action tuples such that the directed graph induced by the non-zero probability transitions of all state-action tuples in the end-component is (a) {\em strongly connected} i.e. there is a path from every state to every other state and (b) {\em self-contained} i.e. no edge in the induced graph enters a state not present in the end-component. An end-component is {\em maximal} if it is not contained in any other end component.
The problem of {\em MEC decomposition} that obtains all the MECs of an MDP is crucial to computing almost-sure reachability sets~\cite{chat:separation}, interval iteration for maximum reachability probabilities~\cite{interval-parker,mec:interval}, verification of $\omega$-regular properties~\cite{baier-book}, learning approaches for probabilistic verification~\cite{mec:learning} and so on. Furthermore, its implementations are critical to the performance of state-of-the-art probabilistic model checkers such as \storm{}~\cite{storm} and \prism{}~\cite{prism}.

In this work, we focus on the design of practically efficient symbolic algorithms for MEC decomposition. A previous evaluation \cite{faber:thesis} has found that a symbolic version of the naive explicit-state algorithm, hereby referred to as \basic{}~\cite{dealfaro-phd}, has the best runtime performance among symbolic MEC decomposition algorithms. \basic{} demonstrates better performance than more recent algorithms with better theoretical complexity~\cite{lockstep,collapsing}. 

The \basic{} algorithm closely follows the definition of Maximal End Components (MECs). Since an MEC is a strongly connected subgraph with the additional constraint of self-containment, \basic{} begins by computing all the Strongly Connected Components (SCCs) in the underlying graph of the MDP. Next, for each SCC, it checks whether there are any outgoing state-action pairs that violate self-containment. If no such pairs exist, the SCC is output as an MEC. Otherwise, all edges corresponding to violating state-action pairs are removed. Since the remaining sub-MDP may no longer be strongly connected, the process is repeated until either no violating state-action pairs remain or the component becomes empty.
\basic{} employs a symbolic SCC decomposition algorithm, referred to as \skeleton{}~\cite{skeleton}: Given a directed graph $(V,E)$, \skeleton{} picks a start vertex $v$, computes its SCC (call it $C_v$) and the set of vertices forward-reachable from $v$ (call it $F_v$). It outputs $C_v$ and recursively computes the SCCs of the subgraphs induced by the following two partitions - $(F_v \setminus C_v)$ and $(V \setminus F_v)$.

The need for our algorithm stems from noticing that when \skeleton{} is called within \basic{}, the recursive call on $(V \setminus F_v)$ does some unnecessary work: Consider an edge $(s, \alpha, t)$ such that $s \in  V\setminus F_v $ and $t \in F_v$, i.e., an edge that crosses from $V\setminus F_v$ to $F_v \setminus C_v$. We observe that the state-action pair $(s, \alpha)$ cannot be part of any MEC of the MDP. 
If $(s, \alpha)$ were to be in an MEC, the MEC would be required to intersect the two sets $V\setminus F_v $ and $F_v$. Since an MEC is also strongly connected, this would require a strongly connected component to intersect the two sets. But this is not possible since there is no path from a state in $F_v$ to a state in $V\setminus F_v$ (as $F_v$ is the forward-reachable set of $v$). Hence, all such edges and state-action pairs can be immediately removed from the MDP. However, \basic{} would process these edges/state-action pairs several times since it will first compute the SCCs in $V\setminus F_v$. This edge would be processed at least twice -- at least once during SCC computation within $V\setminus F_v$ and once when it is being removed from the SCC that state $s$ belongs to because it will be violating self-containment. 

To eliminate this redundant work, we propose a novel symbolic MEC decomposition algorithm, called \interleave{}, that interleaves the SCC decomposition with eager removal of such state-action pairs that will not be present in any MEC. 
Given an MDP, \interleave{} picks a starting vertex $v$, computes the SCC of $v$ ($C_v$) and the forward-reachable set of $v$ ($F_v$), then recursively computes the MECs of the following three partitions - $C_v$, $(F_v \setminus C_v)$ and $(V \setminus F_v)$. Before \interleave{} recurses on these, it identifies outgoing state-action pairs from each and removes their edges. If an SCC $C_v$ has no outgoing state-action pairs, it is output as an MEC. By performing early removal of edges out of $(V \setminus F_v)$, \interleave{} avoids the wasted work that \basic{} does. 

A theoretical analysis of \interleave{} reveals that it requires the same amount of symbolic operations and symbolic space as \naive{}, namely, $\bigO(n^2)$ symbolic operations and $\bigO(\log n)$ symbolic space for an MDP with $n$ vertices and $m$ edges. However, an empirical analysis showcases that \interleave{} is faster than previous symbolic MEC decomposition algorithms. 
Following the format of \cite{faber:code}, we provide an implementation of the \interleave{} algorithm in the \storm{} probabilistic model checker. We perform an experimental evaluation of \interleave{}, comparing it to the \naive{} and \lockstep{} algorithms on 368 benchmarks from the \href{https://qcomp.org/benchmarks/}{Quantitative Verification Benchmark Set (QVBS)} \cite{qvbs}. 
In contrast to the theoretical complexities in Table \ref{tab:results}, we show that \interleave{} is the fastest symbolic MEC Decomposition algorithm on QVBS, solving 19 more benchmarks than the closest other algorithm (\naive{}) given the same timeout (240 seconds) and achieving an average speedup of 3.81x on the 149 that both algorithms were able to solve.

\paragraph{Outline:}
We begin with preliminaries in Section~\ref{sec:preliminaries}. Section~\ref{sec:prior-algorithms} explains the MEC decomposition and SCC decomposition algorithms \basic{} and \skeleton{}, respectively. Section~\ref{sec:interleave} describes our algorithm, its correctness argument, and complexity analysis. Finally, Section~\ref{sec:empirical-evaluation} presents the empirical evaluation. 

\begin{table}[t]
\centering
\caption{Theoretical complexity of symbolic algorithms for MEC Decomposition where $n = |S|\cdot|A|$ and $m = |S|^2\cdot |A|$ in an MDP with states $S$ and actions $A$.}
\label{tab:results}
\begin{tabular}{@{}c|c|c@{}}
\toprule
\textbf{Algorithm} & \textbf{Symbolic Operations} & \textbf{Symbolic Space} \\ \midrule
\naive{}             & $\bigO(n^2)$                     & $\bigO(\log n)$             \\ \midrule
\lockstep{}          & $\bigO(n \sqrt{m} )$             & $\bigO(\sqrt{m})$           \\ \midrule
\interleave{}  (ours)      & $\bigO(n^2)$                     & $\bigO(\log n)$             \\ \bottomrule
\end{tabular}
\end{table}

\section{Preliminaries}
\label{sec:preliminaries}

\subsection{Markov Decision Process (MDP)}

A {\em Markov Decision Process} (MDP) is given by a tuple $\mathcal{M}  = (S, A, d_{\init}, \delta)$ where $S $ is a finite, non-empty set of states, $A$ is a finite set of actions, $d_{\init} : S \to [0,1]$ is an initial probability distribution over states i.e. $\Sigma_{s \in S}{d_{\init}(s)} = 1$, and $\delta : S \times A \times S \to [0,1]$ specifies the transition distributions for each state and each action i.e. $\Sigma_{s' \in S}{\delta(s,\alpha,s')} \in \{0,1\} $ for all $s \in S, \alpha \in A$. We say that an action $\alpha \in A$ is \emph{enabled} in state $s \in S$ if $\Sigma_{s' \in S}{\delta(s,\alpha,s')} = 1$ (or, equivalently, if $\exists s' \in S \,.\, \delta(s,\alpha,s') > 0$). For an action set $A' \subseteq A$ and state $s \in S$, the set of actions in $A'$ which are enabled in $s$ is denoted by $A'[s]$. Wlog, we assume that every state has at least one enabled action, i.e., $A[s] \neq \emptyset$ for all $s \in S$ and every action is enabled in some state, i.e., $\bigcup_{s \in S}{A[s]} = A$.

The underlying graph of an MDP $\mdp = (S, A, d_{\init}, \delta)$ is given by the labelled directed graph $G(\mdp) = (V,E)$ where  $V = S$, $E = \{ (s,\alpha, s') \in S \times A \times S \,\mid\, \delta(s,\alpha,s') > 0 \}$. 
A {\em strongly connected component (SCC) } in an MDP is given by a strongly connected component in its underlying graph. I.e. a set of vertices $T\subseteq S$ is strongly connected in MDP $\mdp$ 
 if (a) for all $s,t \in T$ there is a (labelled) path from $s$ to $t$ in $T$ and (b) there does not exist a $T' \subseteq S$ such that $T \subset T'$ and for all $s',t' \in T'$ there is a path from $s'$ to $t'$.

\subsection{Maximal End-Component (MEC)}

    A \emph{sub-MDP} of an  MDP $\mdp$ is a tuple $(T, \pi)$ where $ T \subseteq S$ is a non-empty set of states and $\pi : T \to 2^{A}$ such that (a) $\emptyset \neq \pi(s) \subseteq A[s]$ for all $s \in T$ and (b) for all $s \in T, \alpha \in \pi(s)$ and $s' \in S$, if $\delta(s,\alpha,s') > 0$, then $s' \in T$. We refer to the last condition as {\em self-containment}. The underlying graph of a sub-MDP refers to the labelled directed graph obtained from all edge transitions in the sub-MDP. Formally, the underlying graph $G(T,\pi) = (V,E)$ of a sub-MDP $(T, \pi)$ is defined as $V = T$ and $E = \{ (s,\alpha,s') \in T \times A \times T \,\mid\, \alpha \in \pi(s) \,\land\, \delta(s,\alpha,s') > 0 \}$.

    The \emph{state-action pair set} of a sub-MDP $(T, \pi)$ is given by $\sa(T,\pi) = \{ (s,\alpha) \in T \times A \,\mid\, \alpha \in \pi(s) \}$. A sub-MDP $(T_1,\pi_1)$ of $\mdp$ is said to be \emph{included in} another sub-MDP $(T_2,\pi_2)$ of $\mdp$, denoted $(T_1,\pi_1) \subseteq (T_2,\pi_2)$ if $T_1 \subseteq T_2$, and, for each $s \in T_1$, $\pi_1(s) \subseteq \pi_2(s)$. 

    An {\em end-component} of an MDP $\mdp$ is a sub-MDP $(T, \pi)$
    such that for all $s,t \in T$, there is a sequence $s_0, \alpha_0, s_1, \alpha_1 \dots, s_n \in T$ such that $s_0 = s$, $s_n = t$, $\alpha_i \in \pi(s_i)$ and $\delta(s_i, \alpha_i, s_{i+1}) > 0$ for all $i\in \{0,n-1\}$. In other words, an end-component is a sub-MDP such that there is a labeled path between every two states in the sub-MDP.  

    An end-component $(T, \pi)$ of an MDP $\mathcal{M}$ is \emph{maximal} if it is maximal with respect to sub-MDP inclusion in $\mdp$, i.e., there is no end component $(T',\pi')$ of $\mdp$ such that $(T,\pi) \subseteq (T',\pi')$ and $(T,\pi) \neq (T',\pi')$. In other words, a maximal end-component is a maximal set of state-action pairs of the MDP that are self-contained and strongly connected.   
    It is known that every state (and thus, every state-action pair) belongs to at most one MEC~\cite{baier-book}. 
    We denote the MEC of a state $s \in S$, if it exists, by $\MEC_{\mathcal{M}}(s)$. Similarly, we denote the set of MECs of a set of states $S' \subseteq S$ by $\MECs_{\mathcal{M}}(S') = \{ \MEC_{\mathcal{M}}(s) \,\mid\, s \in S' \text{ and } s \text{ is contained in an } \MEC \}$. We denote the set of MECs of all states in $\mdp$ by $\MECs(\mdp)$.  

\begin{definition}[MEC Decomposition]
\label{def:mecdecomp}
    Given an MDP $\mdp$, the problem of {\em MEC decomposition} is to compute the set of all MECs $\MECs({\mdp})$ of $\mdp$.  
\end{definition}
It is known that every MDP has a unique MEC decomposition.
MEC decomposition is known to be solvable in polynomial time in the number of states and actions in the MDP~\cite{dealfaro-phd}.

\subsection{Symbolic Representation}

The symbolic representation of an MDP $\mdp$ is given by a symbolic representation of its underlying labeled graph. The underlying graph $G = (V,E)$ is represented using two {\em Binary Decision Diagrams} (BDDs)~\cite{bry85}, one for the vertices $V$ and another for the (labeled) edge relation $E \subseteq V\times A \times V$.

Each symbolic operation corresponds to a primitive operation in a BDD library such as CUDD~\cite{cudd}. In this paper,  we allow basic set-based symbolic operations. Namely, a unit symbolic operation consists of a single union, intersection, complementation, cross-product, exists, or forall operation on sets. 
We also consider two special operations $\Pre$ and $\Post$ to compute the predecessor and successor sets in a graph as unit operations since they correspond to a single $\exists$ BDD operation. Formally, given a labeled graph $G = (V,E)$  the predecessor of a set of vertices $ U \subseteq V$  is given by $\Pre(U, G) = \{ v  \in V \,\mid\, \exists u, \alpha \,.\, (v, \alpha, u) \in E \text{ and } u \in U \}$. Similarly, the successor of a set of vertices $ U \subseteq V$  is given by $\Post(U, G) = \{ v \in V \,\mid\, \exists u, \alpha \,.\, (u, \alpha, v) \in E \text{ and } u \in U \}$.    
Finally, we consider the $\Pick(S)$ operation which returns an arbitrary vertex $v \in S$ and the $|S|$ operation which returns the cardinality of $S$.

Symbolic space is defined as the maximum number of BDDs present at any one instance during the execution of an algorithm. Since BDDs represent sets, we compute symbolic space in terms of the maximum number of sets present at any instance during the execution of an algorithm.   

\section{\basic{} Symbolic MEC Decomposition}
\label{sec:prior-algorithms}

We begin by describing the state-of-the-art symbolic algorithm for MEC decomposition \basic{}~\cite{lockstep}. 
We begin by defining few algorithmic concepts essential to symbolic algorithms for MEC decomposition in Section~\ref{sec:basic_concepts}, followed by a description of \basic{} in Section~\ref{sub:basic}.    

\subsection{Essential Concepts for Symbolic MEC Decomposition}
\label{sec:basic_concepts}

\subsubsection{Essential Non-Primitive Operations.}
We introduce two essential non-primitive operations that will be used extensively in \basic{} and in our improved algorithm \interleave{}, namely $\ROut$ and $\Attr$.

\begin{definition}[ROut of a state set in a sub-MDP]\label{def:rout}
    Let $(T, \pi)$ be a sub-MDP and $U \subseteq T$. 
    The {\em random out} of $U$ in $(T, \pi)$, denoted by $\ROut_{(T,\pi)}(U)$,  is defined as the set of state-action pairs in $(T,\pi)$ which can go outside $U$. Formally,
    $$
    \ROut_{(T, \pi)}(U) = \{ (s,\alpha) \in U \times A \,\mid\, \alpha \in \pi(s) \,\land\, \exists s' \in (T \setminus U) \,.\, (\delta(s,\alpha,s') > 0) \}
    $$ 
\end{definition}

\begin{definition}[Attractor of a state-action pair set in a sub-MDP]\label{def:attr}
    Let $(T, \pi)$ be a sub-MDP and $X \subseteq \sa(T,\pi)$ be a state-action pair set. The {\em attractor} of $X$ in $(T, \pi)$, denoted by $\Attr_{(T,\pi)}(X)$, is given by the tuple $ (S', X') = \left( \bigcup_{i \in \nat}{S_i}, \bigcup_{i \in \nat}{X_i} \right)$ where
    \begin{itemize}
        \item For $i = 0$, 
        \begin{itemize}
            \item $X_0 = X$
            \item $S_0 = \{ s \in T \,\mid\, \forall\, \alpha \in \pi(s) \,.\, ((s,\alpha) \in X_0) \}$
        \end{itemize}
        \item For $i > 0$,
        \begin{itemize}
            \item  $X_{i} = X_{i-1} \cup \{ (s,\alpha) \in \sa(T,\pi) \,\mid\, \exists\, s' \in S_{i-1} \,.\, (\delta(s,\alpha,s') > 0) \}$.
            \item $S_{i} = S_{i-1} \cup \{ s \in T \,\mid\, \forall\, \alpha \in \pi(s) \,.\, ((s,\alpha) \in X_{i}) \}$.
        \end{itemize}
    \end{itemize}
\end{definition}

Both $\ROut$ and $\Attr$  can be implemented using $\exists$ and other basic set operations in a straightforward manner from their definitions. Complete details of their implementations can be found in Algorithm~\ref{algo:rout} and Algorithm~\ref{algo:attr}, respectively, in \ref{sec:alg:routattr}.

\subsubsection{SCC Symbolic Decomposition Algorithm \skeleton{}.} We present a high-level view of a symbolic SCC computation algorithm called \skeleton{}.

We recall some basic definitions. The {\em forward-reachable set} of a state $v$ in a directed graph $G=(V,E)$ is the set of states that are reachable from $v$. Similarly, the {\em backward-reachable set} of a state $v$ is the set of states from which there is a path to $v$. For a vertex $v$, the forward-reachable and backward-reachable sets can be computed as the fixed points of $\Pre(\{v\},G)$ and $\Post(\{v\},G)$, respectively.

\skeleton{} is a recursive algorithm. Given as input a directed graph $(V,E)$ and a start vertex $v$, \skeleton{} computes the SCC of $v$ (denoted by $C_v$) by first computing the {forward-reachable set} of $v$ (denoted by $F_v$) and then computing the {backward-reachable set} of vertices that are also forward-reachable. It outputs this SCC $C_v$ and partitions the rest of the graph into two induced subgraphs, with vertex sets $F_v \setminus C_v$ and $V \setminus F_v$. The SCCs of these subgraphs can be computed independently, so \skeleton{} calls itself on both of these subgraphs recursively, if their vertex sets are non-empty. 

\subsection{\basic{} algorithm description}%
\label{sub:basic}

\basic{} is essentially the symbolic version of the classical algorithm for MEC decomposition when the MDP is given explicitly~\cite{dealfaro-phd}. The algorithm closely follows the definition of an MEC. Observe that an MEC can be seen as a strongly-connected subgraph in the underlying graph which is also self-contained. 
Then, the algorithm is described as follows: Let $(T,\pi)$ be a sub-MDP. All the MECs of this sub-MDP can be computed using the following symbolic algorithm: First compute all SCC using \skeleton{} algorithm. Once all SCCs have been obtained, evaluate each SCC for self-containment as follows: If the SCC is self-contained, return the SCC as an MEC. Otherwise remove state-action pairs that violate self-containment from the SCC and recurse on the remaining component.  
This algorithm requires $\bigO(n^2)$ symbolic operations and $\bigO(\log n)$ symbolic space \cite{collapsing,lockstep}. 

We claim that \basic{} does some redundant work. To see why, observe that \basic{} calls \skeleton{} to get an SCC decomposition, then removes vertices and edges that violate self-containment from those SCCs, then performs SCC decompositions on the remaining components, and so on. Now, during the SCC computation, when \skeleton{} recurses on $(V \setminus F_v)$, there could be edges crossing from $V \setminus F_v$ to $F_v$. Let $(s,\alpha)$ be one such  state-action pair. We claim that $(s, \alpha)$ cannot be present in any  MEC. By contradiction, suppose $(s, \alpha)$ were present in an MEC. Then, first note that the MEC must be disjointed from $F_v$. This is because if every MEC is  a connected-component and there cannot be any connected component spanning $F_v$ and $V\setminus F_v $ as there is no path from $F_v$ to $V\setminus F_v$ (Recall $F_v$ is the forward-reachable set of $v$ -- it is a fixed point).  Hence, the MEC must be contained entirely in $V\setminus F_v$. But this is not possible, since we have assumed that a target state of $(s, \alpha)$ goes to $F_v$. Hence, $(s, \alpha)$ must not be present in any MEC. Therefore, such state-action pairs can be removed as soon as the sets $F_v$ and $V\setminus F_v$ are made available as part of the SCC decomposition. Instead, \basic{} will remove these state-action pairs only after all SCCs have been created. As a result, these redundant state-action pairs will keep getting processed in every SCC computation, causing unproductive work. 

Our algorithm \interleave{} will eliminate this redundant work. Instead of creating all SCCs first, then removing state-action pairs that cross SCCs, we will remove state-action pairs that cross between $V\setminus F_v$ and $F_v$ as soon as possible. Infact, in Lemma~\ref{lem:u3-x3} we will show that we can remove all of $\Attr_{(T,\pi)}(\ROut_{(T, \pi)}(V\setminus F_v))$ as soon as the set $V \setminus F_v$ is made available. Hence, interleaving SCC computation with eager elimination of states and state-action pairs that will not be present in any MEC can reduce much redundant work.  

\section{\texttt{INTERLEAVE} Algorithm}
\label{sec:interleave}

The \interleave{} algorithm enhances MEC decomposition by interleaving the removal of unnecessary state-action pairs with SCC decomposition to avoid redundant work that \basic{} executes.  This integration ensures removals happen at the earliest possible point, avoiding redundant work. We first present a detailed description of the algorithm's operation, followed by an illustrative example of its execution. We then prove its correctness and analyze its complexity.

\subsection{Algorithm Description and Illustration}%
\label{sub:Algorithm Description and Illustration}

\subsubsection{Overview: }

Given an MDP $\mdp$, algorithm \interleave{} takes as input (a). a graph $G = (V, E) = G(T,\pi)$ where $(T, \pi)$ is a sub-MDP of $\mdp$ and (b) either a singleton set $\{v\} $ where  $v \in V$ is a vertex in $T$ or $v_{\arb{}}$, denoting an arbitrary start vertex.
The algorithm requires as a precondition that the sub-MDP $(T,\pi)$ is {\em MEC-closed}, meaning that for each vertex in $T$, its MEC (if it exists) must be fully contained within the sub-MDP. This MEC-closure property ensures the soundness of computing MECs independently on the sub-MDP. We provide a formal treatment of MEC-closed sub-MDPs in Section \ref{sub:Correctness Argument}. If \interleave{} is invoked with $v_{\arb{}}$, we arbitrarily pick a state $v$ from $T$ and use $\{v\}$ as the input singleton set. 
Given these inputs, the algorithm outputs the graphs of all MECs of states in $T$. Therefore, when interested in computing all MECs of an MDP $\mdp$,  we invoke \interleave{} with $G = G(\mdp)$ and $v_{\arb{}}$. 

Suppose \interleave{} is passed with $\{v\}$,  \interleave{} partitions the graph into three subgraphs  whose MECs it can independently compute (recursively). The vertex partitions are - the SCC of $v$ (call it $C_v$), the set $F_v \setminus C_v$ where $F_v$ the forward-reachable set of $v$, and $V\setminus F_v$. Recall that the inputs to the recursive calls need to be (graphs of) MEC-closed sub-MDPs. The MEC-closed bit is ensured because these three partitions have disjoint sets of SCCs and since an MEC is strongly connected, no two states in different SCCs can be in the same MEC (proven formally in Section~\ref{sub:Correctness Argument}). To ensure each subgraph represents a sub-MDP (which, recall, can't have outgoing state-action pairs), we remove the $\ROut$ of its vertex set (and $\Attr(\ROut)$) before recursing (removing vertices and edges earlier to avoid wasted work). This also provides the base-case for our recursion. If we find an SCC $C_v$ whose $\ROut$ is empty, we output it as an MEC. Termination is guaranteed because the subgraphs we recurse on always have at least one edge less than the input and we don't recurse on an empty subgraph.

\algo{\texttt{MEC-Decomp-Interleave}$(V,E,\{v\} = v_{\arb{}})$}{algo:interleave}
{$(V,E) = G(T,\pi)$ for some sub-MDP $(T, \pi)$ of some MDP $\mathcal{M} = (S, A, d_{\init}, \delta)$ and (optionally), a start vertex $v \in V$. For the initial call, $\{v\} = v_{\arb{}}$.}
{The set of graphs of $\MECs_{\mathcal{M}}(T)$, i.e., $\{ G(T',\pi') \,\mid\, (T',\pi') \in \MECs_{\mdp}(T) \}$.}
{
    \If{$\{v\} = v_{\arb{}}$}
        $\{ v \} \gets \Pick(V)$
    \EndIf \label{alg:pick}
    \State $C_v, F_v, \{v'\} \gets \texttt{SCC-Fwd-NewStart}(\{v\}, V, E)$
    \State {\textcolor{blue}{// Call on the SCC $C_v$}}
    \State $(U_1, X_1) \gets \texttt{Attr}(\texttt{ROut}(C_v, V, E), V, E)$ \Comment{$\Attr_{(T,\pi)}(\ROut_{(T, \pi)}(C_v))$} \label{line:attr1}
    \If{$X_1 = \emptyset$}
        \State  $(C_v, E \cap (C_v \times A \times C_v))$ is an MEC
    \Else
        \State $E_1 \gets E \setminus (X_1 \times V)$ \Comment{Remove the state-action pairs in $X_1$ from $E$.}
        \State $V_1 \gets C_v \setminus U_1$ \Comment{Remove the states in $U_1$ from $C_v$.}
        \If {$V_1 \neq \emptyset$}
            $\texttt{MEC-Decomp-Interleave}(V_1, E_1 \cap (V_1 \times A \times V_1), v_{\arb{}})$
        \EndIf
    \EndIf \label{line:p1e}

    \State {\textcolor{blue}{// Call on $F_v \setminus C_v$}}
    \State $V_2 \gets (F_v \setminus C_v)$ \label{line:p2s}
    \If{$V_2 \neq \emptyset$}
        $\texttt{MEC-Decomp-Interleave}(V_2, E \cap (V_2 \times A \times V_2), \{v'\})$
    \EndIf \label{line:p2e}

    \State {\textcolor{blue}{// Call on $V \setminus F_v$ }}

    \State $(U_3, X_3) \gets \texttt{Attr}(\texttt{ROut}(V \setminus F_v, V, E), V, E)$ \Comment{$\Attr_{(T,\pi)}(\ROut_{(T, \pi)}(V\setminus F_v))$}  \label{line:attr2}
    \State $E_3 \gets E \setminus (X_3 \times V)$ \Comment{Remove the state-action pairs in $X_3$ from $E$.}
    \State $V_3 \gets (V \setminus F_v) \setminus U_3$ \Comment{Remove the states in $U_3$ from $V \setminus F_v$.}
    \If{$V_3 \neq \emptyset$}
        $\texttt{MEC-Decomp-Interleave}(V_3, E_3 \cap (V_3 \times A \times V_3), v_{\arb{}})$
    \EndIf \label{line:p3e}
}

\subsubsection{Details: } Algorithm \ref{algo:interleave} presents a formal description of the \interleave{} algorithm. Given a graph $G = (V,E) = G(T,\pi)$ and a (given or arbitrarily chosen) start vertex $v \in V$, we call Algorithm~\ref{algo:scc-fwd-newstart} $\texttt{SCC-Fwd-NewStart}(\{v\}, V, E)$ to compute the SCC of $v$ ($C_v$), the forward-reachable set of $v$ ($F_v$) and a vertex $v'$ at maximum distance from $v$ in $G$. Then, we deal with three vertex partitions - $C_v$, $(F_v \setminus C_v)$ and $(V \setminus F_v)$. Note that these could be handled in any order as the three recursive calls are all independent of each other (we will use this fact to ensure efficient space complexity in Theorem~\ref{thm:space}). 

\algo{\texttt{SCC-Fwd-NewStart}$(\{v\}, V, E)$}{algo:scc-fwd-newstart}
{A singleton vertex set $\{v\}$ ($v \in V$) and a labelled graph $G = (V,E)$.}
{The SCC of $v$ in $G$, the set of vertices reachable from $v$, and a vertex at maximum distance from $v$.}
{
    \State $F_v,\{v'\} \gets \texttt{Fwd-NewVertex}(\{v\}, V, E)$

    \State $C_v \gets \{ v \}$.
    \While{$(\Pre(C_v, (V,E)) \cap F_v) \not\subseteq C_v$}
        \State $C_v \gets C_v \cup (\Pre(C_v, (V,E)) \cap F_v)$
    \EndWhile
    \State \Return $C_v, F_v, \{v'\}$
}

Lines \ref{line:attr1}-\ref{line:p1e} handle $C_v$. We compute $\Attr_{(T,\pi)}(\ROut_{(T,\pi)}(C_v))$, which returns a set of states $U_1$ and a set of state-action pairs $X_1$. If $X_1 = \emptyset$ (which, from the definition of $\ROut$ and $\Attr$, can only happen if $\ROut_{(T,\pi)}(C_v) = \emptyset$), then we output the subgraph induced by $C_v$ as an MEC. Otherwise, we remove all states in $U_1$ from $C_v$, remove all edges with state-action pairs in $X_1$ from $E$, and recurse on the remaining induced subgraph (if it is non-empty). Lemma~\ref{lem:u1-x1} will prove that none of the state-action pairs removed in this step can be present in any MEC. Lemma~\ref{lem:v1-e1} will show that the induced subgraph belong to a MEC-closed sub-MDP of the original MDP $\M$, enabling recursion on the subgraph. 

Lines \ref{line:p2s}-\ref{line:p2e} handle $(F_v \setminus C_v)$. Lemma~\ref{lem:u2-x2} will show that the $\ROut$ of this set is always empty. Therefore, we can recurse on this subgraph directly (if it is non-empty) without removing any vertices or edges. We pass the vertex $v'$ computed previously as the start vertex for this recursive call. We adopt this optimization from ~\cite{skeleton}. To see the benefit we get from passing a start vertex $v'$, suppose $v_0=v,\dots,v_k=v'$ is a shortest path from $v$ to $v'$. Note that computing the forward-reachable set of $v$ requires $\bigO(k)$ symbolic operations (all vertices must be discovered within $k$ $\Post$ calls in Algorithm~\ref{algo:fwd-newvertex} since $v'$ is at a maximum distance), so we can charge $\bigO(1)$ operations to each $v_i$. Then, in the recursive call on $F_v \setminus C_v$, when we compute $F_{v'}$, we are guaranteed that only those $v_i$s will be in $F_{v'}$ which are also in $C_{v'}$. So we will not charge the same vertices again in the immediate next call, except when they are in the SCC we compute. While this does not change the complexity of the algorithm, this optimization has shown to have empirical benefits. 

Finally, lines \ref{line:attr2}-\ref{line:p3e} deal with $(V \setminus F_v)$. We compute $\Attr_{(T,\pi)}(\ROut_{(T,\pi)}(V \setminus F_v))$, which returns a set of states $U_3$ and a set of state-action pairs $X_3$. We remove the state-action pairs in $X_3$ from $E$, remove the states in $U_3$ from $(V \setminus F_v)$, and recurse on the remaining induced subgraph (if it's non-empty). Similar to the case of $C_v$, Lemma~\ref{lem:u3-x3} and Lemma~\ref{lem:v3-e3} will show that none of the state-action pairs removed can be present in an MEC and that the induced subgraph belongs to a MEC-closed sub-MDP fo the original MDP $\M$.

Algorithm \ref{algo:scc-fwd-newstart} is a formal description of the $\texttt{SCC-Fwd-NewStart}(\{v\}, V, E)$ function. It calls $\texttt{Fwd-NewVertex}(\{v\}, V, E)$ to get the forward-reachable set of $v$ ($F_v$) and a vertex $v'$ at maximum distance from $v$. Then, by repeatedly performing $\Pre$ operations and intersecting with $F_v$, it iteratively computes the set of backward-reachable vertices from $v$ which are also forward-reachable from it, or in other words, the SCC of $v$. The $\texttt{Fwd-NewVertex}(\{v\}, V, E)$ function computes the fixed point of $\Post(. , (V,E))$ on $\{v\}$. This fixed-point is returned as the forward-reachable set of $v$, and any vertex added in the last iteration can be returned as a vertex $v'$ at maximum distance from $v$. Refer to Algorithm~\ref{algo:fwd-newvertex} in \ref{sec:Appendix 1} for details.

\subsubsection{Example Execution: } 
\begin{figure}[t]
\centering

    \subfloat[Call 1 : $v=P_5$, $C_v = \{P_5\}$\label{fig:example-interleave-1}]{%
    \begin{tikzpicture}
        \node[state, fill=green!10] (p1) {\footnotesize $\mathbf{P_1}$};
        \node[state, fill=green!10, right of=p1] (p2) {\footnotesize $\mathbf{P_2}$};
        \node[state, fill=green!10, below of=p1, yshift=1.2cm] (p3) {\footnotesize $\mathbf{P_3}$};
        \node[state, fill=green!10, below of=p2, yshift=1.2cm] (p4) {\footnotesize $\mathbf{P_4}$};
        \node[state, fill=cyan!10, right of=p4, xshift=-1.25cm] (p5) {\footnotesize $\mathbf{P_5}$};
        \node[state, fill=green!10, below right of=p3, xshift=-0.5cm, yshift=1.2cm] (p6) {\footnotesize $\mathbf{P_6}$};

        \draw[->, thick, bend left] (p1) to node[midway, below] {\footnotesize $\mathbf{\alpha_1}$} (p2);
        \draw[->, thick, bend left] (p2) to node[midway, below] {\footnotesize $\mathbf{\alpha_2}$} (p1);
        \draw[->, thick, bend right] (p2) to node[midway, left] {\footnotesize $\mathbf{\beta_2}$} (p4);
        \draw[->, thick, bend right] (p4) to node[midway, right] {\footnotesize $\mathbf{\beta_4}$} (p2);
        \draw[->, thick] (p4) to node[midway, above] {\footnotesize $\mathbf{\beta_4}$} (p5);
        \draw[->, thick] (p4) to node[midway, right] {\footnotesize $\mathbf{\alpha_4}$} (p6);
        \draw[->, thick] (p6) to node[midway, left] {\footnotesize $\mathbf{\alpha_6}$} (p3);
        \draw[->, thick] (p3) to node[midway, above] {\footnotesize $\mathbf{\alpha_3}$} (p4);
        \draw[->, thick, loop above] (p5) to node[midway, above] {\footnotesize $\mathbf{\alpha_5}$} (p5);
    \end{tikzpicture}
}\hfil
    \subfloat[Call 2 : $v = P_3$, $C_v = \{P_3,P_4,P_6\}$\label{fig:example-interleave-4}]{%
    \begin{tikzpicture}
        \node[state, fill=green!10] (p1) {\footnotesize $\mathbf{P_1}$};
        \node[state, fill=green!10, right of=p1] (p2) {\footnotesize $\mathbf{P_2}$};
        \node[state, fill=cyan!10, below of=p1, yshift=1.2cm] (p3) {\footnotesize $\mathbf{P_3}$};
        \node[state, fill=cyan!10, below of=p2, yshift=1.2cm] (p4) {\footnotesize $\mathbf{P_4}$};
        \node[state, right of=p4, xshift=-1.25cm, draw, dotted] (p5) {\footnotesize};
        \node[state, fill=cyan!10, below right of=p3, xshift=-0.5cm, yshift=1.2cm] (p6) {\footnotesize $\mathbf{P_6}$};

        \draw[->, thick, bend left] (p1) to node[midway, below] {\footnotesize $\mathbf{\alpha_1}$} (p2);
        \draw[->, thick, bend left] (p2) to node[midway, below] {\footnotesize $\mathbf{\alpha_2}$} (p1);
        \draw[->, thick, bend right] (p2) to node[midway, left] {\footnotesize $\mathbf{\beta_2}$} (p4);
        \draw[->, thick, bend right, dotted] (p4) to (p2);
        \draw[->, thick, dotted] (p4) to (p5);
        \draw[->, thick] (p4) to node[midway, right] {\footnotesize $\mathbf{\alpha_4}$} (p6);
        \draw[->, thick] (p6) to node[midway, left] {\footnotesize $\mathbf{\alpha_6}$} (p3);
        \draw[->, thick] (p3) to node[midway, above] {\footnotesize $\mathbf{\alpha_3}$} (p4);
        \draw[->, thick, loop above, dotted] (p5) to (p5);
    \end{tikzpicture}
}\hfil
    \subfloat[Call 3 : $v = P_1$, $C_v = \{P_1,P_2\}$\label{fig:example-interleave-7}]{%
    \begin{tikzpicture}
        \node[state, fill=cyan!10] (p1) {\footnotesize $\mathbf{P_1}$};
        \node[state, fill=cyan!10, right of=p1] (p2) {\footnotesize $\mathbf{P_2}$};
        \node[state, below of=p1, yshift=1.2cm, draw, dotted] (p3) {\footnotesize};
        \node[state, below of=p2, yshift=1.2cm, draw, dotted] (p4) {\footnotesize};
        \node[state, right of=p4, xshift=-1.25cm, draw, dotted] (p5) {\footnotesize};
        \node[state, below right of=p3, xshift=-0.5cm, yshift=1.2cm, draw, dotted] (p6) {\footnotesize};

        \draw[->, thick, bend left] (p1) to node[midway, below] {\footnotesize $\mathbf{\alpha_1}$} (p2);
        \draw[->, thick, bend left] (p2) to node[midway, below] {\footnotesize $\mathbf{\alpha_2}$} (p1);
        \draw[->, thick, bend right, dotted] (p2) to (p4);
        \draw[->, thick, bend right, dotted] (p4) to (p2);
        \draw[->, thick, dotted] (p4) to (p5);
        \draw[->, thick, dotted] (p4) to (p6);
        \draw[->, thick, dotted] (p6) to (p3);
        \draw[->, thick, dotted] (p3) to (p4);
        \draw[->, thick, loop above, dotted] (p5) to (p5);
    \end{tikzpicture}
}

    \caption{Example Execution of \interleave{}: ($C_v$ (SCC of $v$) \emph{in blue}), ($V \setminus F_v$ in \emph{green})}
\label{fig:example-interleave}
\end{figure}
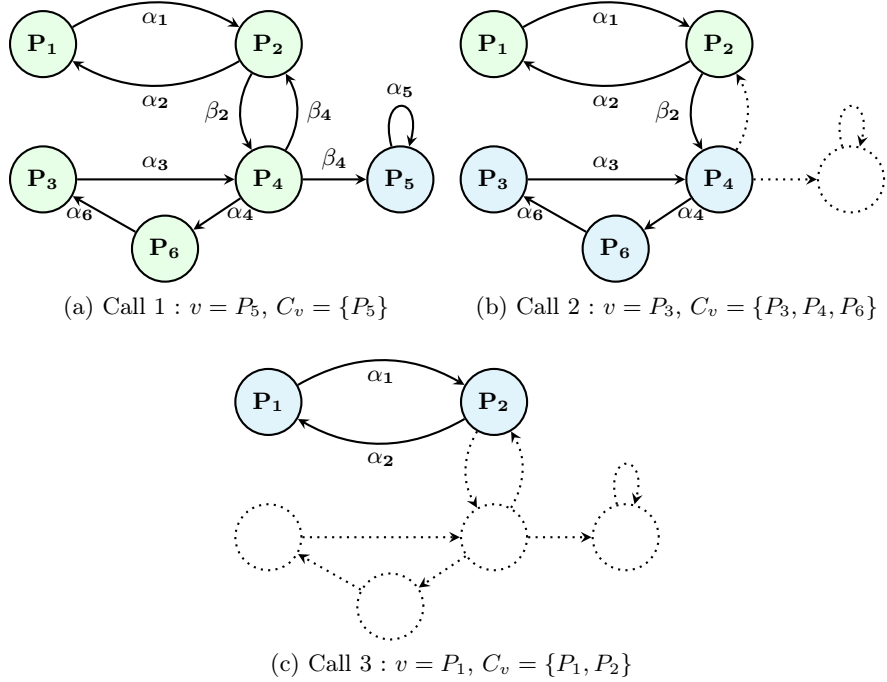
We compare the executions of \interleave{} and \basic{} on the example MDP given in Figure~\ref{fig:example-interleave}.

Figure~\ref{fig:example-interleave-1} shows the first call in \interleave{}. Here $v = P_5$ is picked as the starting vertex. We compute $C_v = \{P_5\}$ (blue), $F_v \setminus C_v = \emptyset$, and $V \setminus F_v = \{ P_1,P_2,P_3,P_4,P_6 \}$ (green). Since $\texttt{ROut}(C_v, V, E) = \emptyset$, it is output as an MEC (with edges $\{(P_5,\alpha_5,P_5)\}$).

Figure \ref{fig:example-interleave-4} shows  the next call. $\texttt{ROut}(V \setminus F_v, V, E) = \{ (P_4, \beta_4) \}$ (and its $\Attr$ is the same) is removed from the graph before the recursive call is made on $V \setminus F_v$. The subgraph passed to the recursive call consists of the solid vertices and edges. $v = P_3$ is picked as the vertex to start from. $C_v = \{ P_3, P_4, P_6 \}$ is computed (blue), $F_v \setminus C_v = \emptyset$ and $V \setminus F_v = \{ P_1, P_2 \}$ (green). Since $\texttt{ROut}(C_v, V, E) = \emptyset$, it is output as an MEC (with edges $\{(P_3,\alpha_3,P_4), (P_4,\alpha_4,P_6), (P_6,\alpha_6, P_3)\}$).

Figure \ref{fig:example-interleave-7} shows the final call. $\texttt{ROut}(V \setminus F_v, V, E) = \{ (P_2, \beta_2) \}$ (and its $\Attr$ is the same) is removed from the graph before the recursive call is made on $V \setminus F_v$. The subgraph passed to the recursive call consists of the solid vertices and edges. $v = P_1$ is picked as the vertex to start from. $C_v = \{P_1,P_2\}$ is computed (blue), $F_v \setminus C_v = \emptyset$ and $V \setminus F_v = \emptyset$. Since $\texttt{ROut}(C_v, V, E) = \emptyset$, it is output as an MEC (with edges $\{(P_1,\alpha_1,P_2), (P_2,\alpha_2,P_1)\}$).

Consider what \basic{} does on this example. It first calls \skeleton{}. Assuming \skeleton{} starts from the same $v = P_5$, it computes $C_v = \{ P_5 \}$ (same as \interleave{}'s first step). Then \textbf{when \skeleton{} removes the SCC $\{ P_5 \}$ and recurses on $(V \setminus F_v)$, it doesn't know that $\beta_4$ will be removed later by \basic{}.} So it outputs the entire induced subgraph with $\{ P_1, P_2, P_3, P_4, P_6 \}$ as an SCC. After processing $\{P_5\}$ and outputting it as an MEC, when \basic{} processes this SCC, it identifies $(P_4,\beta_4)$ in the $\ROut$ and removes it. Then it calls \skeleton{} again, which does the same as steps 2 and 3 of \interleave{} and outputs $\{P_3, P_4, P_6\}$ and $\{ P_1, P_2 \}$ as SCCs, which \basic{} outputs as MECs. The work that \skeleton{} does to compute the subgraph $\{P_1, P_2, P_3, P_4, P_6\}$ as an SCC for \basic{} is wasted, and \interleave{} avoids it by removing $(P_4,\beta_4)$ early as part of the SCC computation.

\subsection{Correctness Argument}%
\label{sub:Correctness Argument}

We begin by formally defining an {\em MEC-closed sub-MDP} (which was used in the precondition for \interleave{}). 

\begin{definition}[MEC-closed sub-MDP]
    A sub-MDP $(T,\pi)$ is called {\em MEC-closed} if for every state $s \in T$, $\MEC_{\mathcal{M}}(s)$, if it exists, satisfies $\MEC_{\mathcal{M}}(s) \subseteq (T, \pi)$.
\end{definition}
To prove the correctness of our algorithm (i.e. it outputs all the MECs of any MEC-closed sub-MDP passed to it), we will require the following three types of results. We will prove some and omit others which have similar proofs. Detailed proofs for all of them can be found in \ref{sec:Appendix 2}.

\begin{itemize}
    \item Whenever we output a subgraph as an MEC, it is actually an MEC. [Soundness, Theorem \ref{thm:sound}]
    \item All vertices and state-action pairs that are removed cannot be part of an MEC [Lemmas \ref{lem:no-attr}-\ref{lem:u3-x3}]
    \item The recursive calls satisfy the preconditions, i.e., the subgraphs passed are graphs of MEC-closed sub-MDPs of $\mdp$. [Lemmas \ref{lem:v1-e1}-\ref{lem:v3-e3}]
\end{itemize}
For all the results to come, let $(T, \pi)$ be an MEC-closed sub-MDP of MDP $\mathcal{M} = (S,A,d_{\init},\delta)$ and $(V,E) = G(T,\pi)$. Consider the execution of \texttt{MEC-Decomp-Interleave}$(V,E ,\{v\})$ where $\{v\} = v_{\arb{}}$ or $v \in V$. If $\{v\} = v_{\arb{}}$, let $s \in V$ be the vertex picked on line \ref{alg:pick}. Otherwise, let $s = v$.

\subsubsection{Soundness. }
To prove soundness, we want to establish that everytime the algorithm outputs a subgraph, the subgraph is an MEC. For this, we observe that Algorithm~\ref{algo:interleave} produces an output only when when $X_1 = \emptyset$.  Therefore, we prove the following:
\begin{theorem}[Soundness]
    \label{thm:sound}
    If $X_1 = \emptyset$, then $(C_s, E \cap (C_s \times A \times C_s)) = G(\MEC_{\mathcal{M}}(s))$.
\end{theorem}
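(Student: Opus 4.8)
The plan is to identify the output subgraph with an explicit candidate sub-MDP and show it is exactly $\MEC_{\mathcal{M}}(s)$. Define $\pi_s$ by $\pi_s(t) = \pi(t)$ for every $t \in C_s$, so that the candidate is $(C_s,\pi_s)$. First I would unpack the hypothesis. Since $X_1$ is the state-action component of $\Attr_{(T,\pi)}(\ROut_{(T,\pi)}(C_s))$ and the attractor only grows its seed (i.e. $X_0 = \ROut_{(T,\pi)}(C_s) \subseteq X_1$), the assumption $X_1 = \emptyset$ forces $\ROut_{(T,\pi)}(C_s) = \emptyset$. By Definition~\ref{def:rout} together with the self-containment of $(T,\pi)$, this means that for every $t \in C_s$ and every $\alpha \in \pi(t)$, all successors of $(t,\alpha)$ lie in $C_s$. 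Hence $\pi_s$ defines a genuine sub-MDP (action sets stay non-empty, self-containment holds), and its graph $G(C_s,\pi_s)$ is exactly $E \cap (C_s \times A \times C_s)$, matching the claimed output.

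Next I would show that $(C_s,\pi_s)$ is an end-component. Self-containment is already established, so only strong connectivity remains. Here I would invoke the standard fact that the subgraph induced by an SCC is strongly connected using only internal edges: for $u,w \in C_s$, any path $u \rightsquigarrow w$ in $G$ has every intermediate vertex mutually reachable with $s$, hence inside $C_s$, so the path lies within $C_s$; since $\pi_s = \pi$ on $C_s$, it is a valid labeled path in the sub-MDP. Thus $(C_s,\pi_s)$ is an end-component containing $s$, so $\MEC_{\mathcal{M}}(s)$ exists, and because every end-component containing $s$ is contained in the unique MEC of $s$, I obtain $(C_s,\pi_s) \subseteq \MEC_{\mathcal{M}}(s)$; writing $(T^*,\pi^*) := \MEC_{\mathcal{M}}(s)$, this gives $C_s \subseteq T^*$.

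For the reverse inclusion I would use MEC-closure of $(T,\pi)$: since $s \in T$, we have $(T^*,\pi^*) \subseteq (T,\pi)$. As $(T^*,\pi^*)$ is strongly connected, all states of $T^*$ are mutually reachable in $G$, and since $s \in T^*$ this forces $T^* \subseteq C_s$. Combined with $C_s \subseteq T^*$ this yields $T^* = C_s$; and sandwiching $\pi_s(t) = \pi(t) \supseteq \pi^*(t) \supseteq \pi_s(t)$ for each $t \in C_s$ forces $\pi^* = \pi_s$. Therefore $\MEC_{\mathcal{M}}(s) = (C_s,\pi_s)$ and $G(\MEC_{\mathcal{M}}(s)) = E \cap (C_s \times A \times C_s)$, as required.

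The main obstacle I anticipate is not the set-inclusion bookkeeping but the clean justification of the two structural facts the argument leans on: that the SCC-induced subgraph is strongly connected using only internal edges (so $(C_s,\pi_s)$ genuinely is an end-component rather than merely a strongly connected vertex set), and that any end-component containing $s$ sits inside $\MEC_{\mathcal{M}}(s)$ (the end-component ``merging'' property underlying uniqueness of the MEC). Both are standard and can be cited from \cite{baier-book,dealfaro-phd}, but I would state them explicitly since the rest of the proof reduces to these two facts plus the MEC-closure precondition.
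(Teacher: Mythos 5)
Your proof is correct and takes essentially the same route as the paper's: the paper likewise deduces $\ROut_{(T,\pi)}(C_s) = \emptyset$ from $X_1 = \emptyset$, takes the candidate $(C_s, \pi\mid_{C_s})$, verifies it is a sub-MDP and an end-component whose graph equals $E \cap (C_s \times A \times C_s)$, and uses MEC-closure together with strong connectivity to place $\MEC_{\mathcal{M}}(s)$ inside the candidate before concluding by maximality. The only cosmetic difference is that you additionally derive $(C_s,\pi_s) \subseteq \MEC_{\mathcal{M}}(s)$ from the end-component merging property, whereas the paper obtains equality directly from the definition of maximality once the reverse inclusion is established.
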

\begin{hproof}
    From the algorithm, we know $(U_1, X_1) = \Attr_{(T,\pi)}(\ROut_{(T,\pi)}(C_s))$. From the definition of $\Attr$, if $X_1 = \emptyset$, then we must have $\ROut_{(T,\pi)}(C_s) = \emptyset$. Now, the subgraph $(C_s, E \cap (C_s \times A \times C_s))$ (call it $G[C_s]$) is strongly-connected, has no outgoing state-action pairs, and has at least one edge (because $s$ must have an outgoing edge for the input to be the graph of a valid sub-MDP). So it is an end-component. Further, since the input
    sub-MDP is MEC-closed, the MEC of $s$ is contained in $G$. It is also contained in $G[C_s]$ since an MEC is strongly connected. As $G[C_s]$ is an end-component that contains a maximal end component, we must have $G[C_s] = G(\MEC_{\mdp}(s))$. $\blacksquare$
\end{hproof}

\subsubsection{All that is removed is redundant.}
Lemma~\ref{lem:no-attr} asserts that if a set of state-action pairs cannot be a part of any MEC, then its attractor cannot be a part of any MEC either. This justifies the optimization to remove $\Attr(\ROut)$ instead of just $\ROut$.

\begin{lemma}
    \label{lem:no-attr}
    Let $X \subseteq \sa(T,\pi)$ be a state-action pair set such that for all $(T', \pi') \in \MECs(\mathcal{M})$, $\sa(T', \pi') \cap X = \emptyset$. Suppose $\Attr_{(T,\pi)}(X) = (S', X')$. Then, for all $(T', \pi') \in \MECs(\mathcal{M})$, $T' \cap S' = \emptyset$ and $\sa(T', \pi') \cap X' = \emptyset$.
\end{lemma}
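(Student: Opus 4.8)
The plan is to prove the statement by induction on the stages $S_i, X_i$ of the attractor construction in Definition~\ref{def:attr}. Concretely, I would establish the strengthened claim: for every $i \in \nat$ and every MEC $(T', \pi') \in \MECs(\mathcal{M})$, both $T' \cap S_i = \emptyset$ and $\sa(T', \pi') \cap X_i = \emptyset$. Since $\Attr_{(T,\pi)}(X) = (\bigcup_{i} S_i, \bigcup_{i} X_i)$, taking the union over all $i$ then immediately yields $T' \cap S' = \emptyset$ and $\sa(T',\pi') \cap X' = \emptyset$, which is the desired conclusion. The crucial tool throughout is that $(T,\pi)$ is MEC-closed: if a state $s \in T$ lies in some MEC $(T',\pi')$, then that MEC is $\MEC_{\mathcal{M}}(s)$ and is contained in $(T,\pi)$, so $\pi'(s) \subseteq \pi(s)$. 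This lets me convert membership in a MEC's own action set into membership in $\pi(s)$, which is what the definitions of $X_i$ and $S_i$ are stated in terms of.

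For the base case ($i = 0$), the claim for $X_0 = X$ is exactly the hypothesis. For $S_0$, I would argue by contradiction: suppose some $s \in S_0 \cap T'$. Since $s \in S_0 \subseteq T$ and $s \in T'$, MEC-closure gives $\pi'(s) \subseteq \pi(s)$; picking any $\alpha \in \pi'(s)$ (nonempty, as $(T',\pi')$ is a sub-MDP), the defining property of $S_0$ forces $(s,\alpha) \in X_0 = X$, contradicting $\sa(T',\pi') \cap X = \emptyset$.

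For the inductive step, assuming the claim for $i-1$, I would handle $X_i$ and $S_i$ in that order. For $X_i$, the part inherited from $X_{i-1}$ is covered by the inductive hypothesis, so consider a newly added pair $(s,\alpha) \in \sa(T,\pi)$ with some $s' \in S_{i-1}$ and $\delta(s,\alpha,s') > 0$. If $(s,\alpha) \in \sa(T',\pi')$, then $\alpha \in \pi'(s)$, and \emph{self-containment} of the end-component $(T',\pi')$ forces $s' \in T'$; but $s' \in S_{i-1}$ while $T' \cap S_{i-1} = \emptyset$ by the inductive hypothesis, a contradiction. For $S_i$, the part from $S_{i-1}$ is again covered by the hypothesis, so consider a newly added state $s \in T$ with $(s,\alpha) \in X_i$ for all $\alpha \in \pi(s)$. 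If $s \in T'$, MEC-closure again yields some $\alpha \in \pi'(s) \subseteq \pi(s)$ with $(s,\alpha) \in X_i \cap \sa(T',\pi')$, contradicting the $X_i$ part just established.

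The main obstacle, such as it is, is reconciling two distinct action maps: the sub-MDP's $\pi$ (in terms of which $X$ and its attractor are defined) against each MEC's own map $\pi'$. The MEC-closure precondition is precisely what bridges the two, guaranteeing $\pi'(s) \subseteq \pi(s)$ at every relevant state, while self-containment of end-components supplies the complementary fact needed for the $X_i$ step. Once these two properties are isolated, the induction itself is routine bookkeeping over the construction stages.
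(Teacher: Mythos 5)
Your proof is correct and takes essentially the same route as the paper's: induction over the attractor stages $S_i, X_i$, using MEC-closure to obtain $\pi'(s) \subseteq \pi(s)$ for the state part and self-containment of the MEC for the state-action part. If anything, your version is slightly more careful than the paper's appendix proof, since by establishing the $X_i$ claim before the $S_i$ claim and arguing $T' \cap S_0 = \emptyset$ directly (instead of asserting $S_0 = \emptyset$, which matches the initialization in Algorithm~\ref{algo:attr} but not the $S_0$ of Definition~\ref{def:attr}), you handle the attractor exactly as defined.
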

\begin{hproof}
    Let $(T', \pi') \in \MECs(\mathcal{M})$. From the definition of $\Attr$, we have $(S',X') = \Attr_{(T,\pi)}(X) = \left( \bigcup_{i \in \nat}{S_i},  \bigcup_{i \in \nat}{X_i} \right)$. We will show by induction on $i$ that for all $i \in \nat$, $T' \cap S_i = \emptyset$ and $\sa(T',\pi') \cap X_i = \emptyset$. For the base case $i=0$, $S_0 = \emptyset$ (implies $T' \cap S_0 = \emptyset$) and $X_0 = X$ (implies $\sa(T',\pi') \cap X_0 = \emptyset$ by assumption) by definition of $\Attr$.

    For the induction step, we need to show that if (the states of) $S_i$ and (the state-action pairs of) $X_i$ can't be in the MEC $(T',\pi')$, then $S_{i+1}$ and $X_{i+1}$ can't either. We show both of these by contradiction. First, assume there is some state $s' \in S_{i+1} \cap T'$. Since $S_i \cap T' = \emptyset$, we have $s' \in S_{i+1} \setminus S_i$. From the definition of $\Attr$, this implies all state-action pairs of $s'$ are in $X_i$. But it must have at least one state-action pair in $(T',\pi')$ (by defn. of MEC), contradicting the I.H. that $\sa(T',\pi') \cap X_i = \emptyset$. A similar argument can show that $X_{i+1} \cap \sa(T',\pi') = \emptyset$. Please refer to \ref{sec:Appendix 2} for details. $\blacksquare$
\end{hproof}

We use Lemma~\ref{lem:no-attr} to prove that no state-action pair returned by $(U_1, X_1) = \Attr_{(T, \pi)}(\ROut_{(T, \pi)}(C_s))$ is present in MEC, and therefore can be safety removed:
\begin{lemma}
    \label{lem:u1-x1}
    For any $(T',\pi') \in \MECs(\mathcal{M})$, $T' \cap U_1 = \emptyset$ and $\sa(T',\pi') \cap X_1 = \emptyset$.
\end{lemma}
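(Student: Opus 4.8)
The plan is to combine Lemma~\ref{lem:no-attr} with a direct argument that the base set $\ROut_{(T,\pi)}(C_s)$ itself cannot intersect any MEC. Since $(U_1,X_1) = \Attr_{(T,\pi)}(\ROut_{(T,\pi)}(C_s))$, Lemma~\ref{lem:no-attr} tells us that if we can establish the hypothesis ``$\sa(T',\pi') \cap \ROut_{(T,\pi)}(C_s) = \emptyset$ for every $(T',\pi') \in \MECs(\mathcal{M})$'', then the conclusion $T' \cap U_1 = \emptyset$ and $\sa(T',\pi') \cap X_1 = \emptyset$ follows immediately. So the entire task reduces to verifying that hypothesis; the attractor bookkeeping is already done by the earlier lemma.

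To verify the hypothesis, I would argue as follows. Fix an arbitrary MEC $(T',\pi') \in \MECs(\mathcal{M})$ and suppose toward a contradiction that some pair $(t,\alpha) \in \sa(T',\pi') \cap \ROut_{(T,\pi)}(C_s)$. By the definition of $\ROut$ (Definition~\ref{def:rout}), $(t,\alpha) \in \ROut_{(T,\pi)}(C_s)$ means $t \in C_s$, $\alpha \in \pi(t)$, and there exists $s' \in T \setminus C_s$ with $\delta(t,\alpha,s') > 0$. Since $(t,\alpha) \in \sa(T',\pi')$ we have $t \in T'$ and $\alpha \in \pi'(t)$, and by the self-containment of the sub-MDP $(T',\pi')$ this forces $s' \in T'$ as well (because $\delta(t,\alpha,s')>0$). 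The key structural fact I need is that $T' \subseteq C_s$: an MEC is strongly connected, so all of its states lie in a single SCC of the underlying graph, and because $(t,\alpha)$ witnesses $t \in T' \cap C_s$, that SCC must be exactly $C_s$ (recall $C_s$ is the SCC of $s$ computed by \texttt{SCC-Fwd-NewStart}, and distinct SCCs are disjoint). But then $s' \in T' \subseteq C_s$, contradicting $s' \in T \setminus C_s$. This contradiction shows $\sa(T',\pi') \cap \ROut_{(T,\pi)}(C_s) = \emptyset$, discharging the hypothesis of Lemma~\ref{lem:no-attr}.

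The main obstacle is justifying the claim $T' \subseteq C_s$ cleanly. I must argue that every MEC is contained within a single SCC of $(V,E)$, and that the particular SCC containing $t$ is precisely $C_s$. The first half follows because any two states of an end-component have a labeled path between them using only edges of the sub-MDP, and those edges are a subset of $E$, so the states are mutually reachable in $G(T,\pi)$ and hence lie in one SCC. The second half uses that $C_s$ is genuinely the full SCC of the graph containing $s$ (so it contains every vertex in the SCC of $t$ once $t \in C_s$), which is exactly what \texttt{SCC-Fwd-NewStart} computes via the backward-reachable-within-forward-reachable fixpoint. I would state these SCC facts explicitly as they are the crux; everything downstream is a straightforward application of the $\ROut$ definition and self-containment, and the attractor closure is handed to us by Lemma~\ref{lem:no-attr}.

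One subtlety worth flagging: the hypothesis of Lemma~\ref{lem:no-attr} is phrased for an arbitrary state-action set $X$, so I should instantiate it with $X = \ROut_{(T,\pi)}(C_s) \subseteq \sa(T,\pi)$ and confirm this containment holds (it does, since $\ROut$ only ever returns pairs $(s,\alpha)$ with $\alpha \in \pi(s)$). With that instantiation and the hypothesis verified, Lemma~\ref{lem:no-attr} yields exactly the two conclusions of Lemma~\ref{lem:u1-x1}, completing the proof.
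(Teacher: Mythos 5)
Your overall route is the same as the paper's: instantiate Lemma~\ref{lem:no-attr} with $X = \ROut_{(T,\pi)}(C_s)$ (checking $X \subseteq \sa(T,\pi)$, which you rightly flag), reducing everything to showing the base set meets no MEC, and then derive a contradiction from strong connectivity. Your phrasing of the contradiction ($T' \subseteq C_s$, so the $\ROut$ target $s' \in T \setminus C_s$ cannot lie in $T'$) is only cosmetically different from the paper's ($s'$ and the target are mutually reachable in $(T,\pi)$, forcing the target into $C_s$); the content is identical.

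There is, however, one step you assert without justification, and it is exactly the load-bearing one: ``those edges are a subset of $E$.'' The edges of the MEC $(T',\pi')$ are edges of $G(\mathcal{M})$, not automatically of $G(T,\pi)$ --- the MEC may use states outside $T$ or actions outside $\pi$. This is precisely where the standing precondition that $(T,\pi)$ is MEC-closed must be invoked: since your witness satisfies $t \in T' \cap C_s \subseteq T$, MEC-closedness gives $\MEC_{\mathcal{M}}(t) = (T',\pi') \subseteq (T,\pi)$, and only then do the MEC's labeled paths live in $G(T,\pi)$, putting $T'$ inside the single SCC $C_s$. The hypothesis is not decorative. Concretely: let $S = \{a,b\}$ with $\delta(a,\alpha,b) = \delta(b,\gamma,a) = \delta(b,\beta,b) = 1$, and take the sub-MDP $T = \{a,b\}$, $\pi(a) = \{\alpha\}$, $\pi(b) = \{\beta\}$; this is a valid (self-contained) sub-MDP but not MEC-closed, since $\MEC_{\mathcal{M}}(a)$ uses $\gamma \notin \pi(b)$. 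With $s = a$ we get $C_s = \{a\}$ and $(a,\alpha) \in \ROut_{(T,\pi)}(C_s) \subseteq X_1$, yet $(a,\alpha)$ belongs to the MEC of $\mathcal{M}$ --- so the conclusion of Lemma~\ref{lem:u1-x1} is false here, and your sentence about the MEC's edges fails (the return path $b \xrightarrow{\gamma} a$ is not in $E$). The fix is a one-line appeal to MEC-closedness, exactly as in the paper's proof; with that inserted, the rest of your proposal is sound.
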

\begin{proof}
    We know, from the algorithm, that $(U_1, X_1) = \Attr_{(T,\pi)}(\ROut_{(T,\pi)}(C_s))$, where $C_s$ is the SCC of $s$. Because of lemma \ref{lem:no-attr}, we only need to show that for any $(T',\pi') \in \MECs(\mathcal{M})$, $\sa(T',\pi') \cap \ROut_{(T,\pi)}(C_s) = \emptyset$. Proof is by contradiction. Assume that there is some MEC $(T',\pi')$ and $(s',\alpha) \in \sa(T',\pi') \cap \ROut_{(T,\pi)}(C_s)$.

    Since $(s',\alpha) \in \ROut_{(T,\pi)}(C_s)$, there is a state $t \in (T \setminus C_s)$ such that $\delta(s',\alpha,t) > 0$. By the definition of an MEC, this implies $t \in T'$ too. So $s'$ and $t$ can reach each other in $(T',\pi')$. As $(T',\pi') \subseteq (T,\pi)$ (from the assumption that $(T,\pi)$ is MEC-closed), they can reach other in $(T,\pi)$ too, contradicting the fact that $t \not\in C_s$. $\blacksquare$
\end{proof}

Similarly, we show that all state-action pairs returned by $(U_3,X_3)=  \Attr_{(T, \pi)}(\ROut_{(T, \pi)}(C_s))$ are redundant:
\begin{lemma}
    \label{lem:u3-x3}
    For any $(T',\pi') \in \MECs(\mathcal{M})$, $T' \cap U_3 = \emptyset$ and $\sa(T',\pi') \cap X_3 = \emptyset$.
\end{lemma}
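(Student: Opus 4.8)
The plan is to follow the same two-step strategy used for Lemma~\ref{lem:u1-x1}. First I would appeal to Lemma~\ref{lem:no-attr} in order to reduce the statement about the attractor $(U_3, X_3) = \Attr_{(T,\pi)}(\ROut_{(T,\pi)}(V \setminus F_s))$ to the corresponding statement about its seed set. That is, it suffices to prove that $\sa(T',\pi') \cap \ROut_{(T,\pi)}(V \setminus F_s) = \emptyset$ for every MEC $(T',\pi') \in \MECs(\mathcal{M})$; Lemma~\ref{lem:no-attr} then automatically propagates this disjointness from the seed to the entire attractor, yielding both $T' \cap U_3 = \emptyset$ and $\sa(T',\pi') \cap X_3 = \emptyset$.

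Second, I would establish this seed disjointness by contradiction. Suppose some MEC $(T',\pi')$ contains a pair $(s',\alpha) \in \sa(T',\pi') \cap \ROut_{(T,\pi)}(V \setminus F_s)$. Unfolding Definition~\ref{def:rout} with $U = V \setminus F_s$, and noting that $T \setminus U = F_s$ (as $V = T$ and $F_s \subseteq V$), membership in $\ROut$ gives $s' \in V \setminus F_s$ together with a state $t \in F_s$ satisfying $\delta(s',\alpha,t) > 0$. Because $\alpha \in \pi'(s')$ and the MEC is self-contained, the target $t$ must itself lie in $T'$.

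The crux of the argument, and the one place where the direction of reasoning differs from Lemma~\ref{lem:u1-x1}, is to turn this into a contradiction using the fact that $F_s$ is forward-closed. Since $(T',\pi')$ is an end-component it is strongly connected, so there is a labelled path from $t$ back to $s'$ inside $(T',\pi')$; and since $(T,\pi)$ is MEC-closed we have $(T',\pi') \subseteq (T,\pi)$, so this path also exists in $(V,E) = G(T,\pi)$. Now $t \in F_s$ and $F_s$ is the forward-reachable set of $s$ (a fixed point of $\Post$), hence every vertex reachable from $t$, in particular $s'$, also lies in $F_s$. This yields $s' \in F_s$, contradicting $s' \in V \setminus F_s$.

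I expect the main obstacle to be stating the forward-closure step cleanly. Unlike the $C_s$ case, where the contradiction was immediate from $t \notin C_s$ and mutual reachability within the SCC, here I must explicitly invoke that reachability composes ($s \rightsquigarrow t \rightsquigarrow s'$ implies $s \rightsquigarrow s'$) and that $F_s$ is closed under $\Post$. Once that is in place, combining it with self-containment and the strong connectivity of the MEC closes the argument, and Lemma~\ref{lem:no-attr} lifts the conclusion from $\ROut_{(T,\pi)}(V \setminus F_s)$ to the full attractor $(U_3, X_3)$.
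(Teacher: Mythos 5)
Your proposal is correct and matches the paper's own proof essentially step for step: both reduce the claim to seed disjointness $\sa(T',\pi') \cap \ROut_{(T,\pi)}(V \setminus F_s) = \emptyset$ via Lemma~\ref{lem:no-attr}, then derive a contradiction by using self-containment to place the target $t \in F_s$ inside the MEC, strong connectivity plus MEC-closedness to obtain a path from $t$ back to $s'$ within $(T,\pi)$, and forward-closure of $F_s$ to conclude $s' \in F_s$, contradicting $s' \in V \setminus F_s$. The ``crux'' you flag is exactly the step the paper makes (``since $s_2 \in F_s$, this implies $s_1 \in F_s$''), and your parenthetical observation that $\ROut_{(T,\pi)}(V \setminus F_s) \subseteq \sa(T,\pi)$ satisfies the remaining precondition of Lemma~\ref{lem:no-attr} is likewise noted in the paper.
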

\begin{proof}
    The proof is similar to Lemma~\ref{lem:u1-x1}'s proof and is omitted (see \ref{sec:Appendix 2} for the full proof). $\blacksquare$
\end{proof}

Last but not the least, we show that we cannot eliminate anything from $F_s\setminus C_s$. 
\begin{lemma} 
    \label{lem:u2-x2}
    $\ROut_{(T,\pi)}(F_s \setminus C_s) = \emptyset$.
\end{lemma}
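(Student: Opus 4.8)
The plan is to prove the statement by contradiction, ruling out any state-action pair in $F_s \setminus C_s$ that can leave this set. By the definition of $\ROut$, the set $\ROut_{(T,\pi)}(F_s \setminus C_s)$ consists of those $(u,\alpha)$ with $u \in F_s \setminus C_s$, $\alpha \in \pi(u)$, and an edge $(u,\alpha,t) \in E$ whose target $t$ lies in $T \setminus (F_s \setminus C_s)$. Since $C_s \subseteq F_s \subseteq V = T$, this complement decomposes cleanly as $T \setminus (F_s \setminus C_s) = (V \setminus F_s) \cup C_s$. So it suffices to show that, for every $u \in F_s \setminus C_s$ and $\alpha \in \pi(u)$, there is no edge $(u,\alpha,t)$ with $t \in (V \setminus F_s) \cup C_s$. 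I would then split into the two corresponding cases, which are genuinely exhaustive precisely because $C_s \subseteq F_s$.

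For the case $t \in V \setminus F_s$, I would invoke that $F_s$, being the forward-reachable set of $s$, is a fixed point of $\Post(\cdot, (V,E))$ and hence closed under successors. Since $u \in F_s \setminus C_s \subseteq F_s$ and $(u,\alpha,t) \in E$, the target $t$ is a successor of a vertex already in $F_s$ and must therefore lie in $F_s$, contradicting $t \in V \setminus F_s$.

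For the case $t \in C_s$, I would argue via the strong connectivity of the SCC. Because $t \in C_s$, there is a path from $t$ back to $s$ in the underlying graph; the edge $(u,\alpha,t)$ provides a path from $u$ to $t$; composing these gives a path from $u$ to $s$. Conversely, $u \in F_s$ gives a path from $s$ to $u$. Hence $s$ and $u$ lie on a common cycle and belong to the same SCC, i.e., $u \in C_s$, contradicting $u \in F_s \setminus C_s$.

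Since both cases are impossible, no qualifying state-action pair exists, and therefore $\ROut_{(T,\pi)}(F_s \setminus C_s) = \emptyset$. I do not expect any genuine obstacle here; the only point requiring care is the bookkeeping of the complement set $T \setminus (F_s \setminus C_s)$ and the observation $C_s \subseteq F_s$, which together guarantee that the forward-closure argument and the SCC argument exhaust all possibilities.
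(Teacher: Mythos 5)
Your proof is correct and follows essentially the same route as the paper's: the same decomposition $T \setminus (F_s \setminus C_s) = (T \setminus F_s) \uplus C_s$, forward-closure of $F_s$ to rule out targets in $V \setminus F_s$, and the path-composition argument (target in $C_s$ reaches $s$, so the source would lie in $C_s$) to rule out the remaining case. Your explicit two-case presentation is merely a cosmetic restructuring of the paper's single chain of deductions.
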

\begin{proof}
    Assume, for the sake of contradiction, that there is some $(s_1,\alpha) \in \ROut_{(T,\pi)}(F_s \setminus C_s)$. From the definition of $\ROut$, there is some $s_2 \in (T \setminus (F_s \setminus C_s)) = ((T \setminus F_s) \uplus C_s)$ such that $\delta(s_1, \alpha, s_2) > 0$. First note that $s_2 \in F_s$ since $s_1 \in F_s$ and there is an edge (labelled $\alpha$) from $s_1$ to $s_2$. So we must have $s_2 \in C_s$. But then, there is a path from $s_2$ to $s$, and thus from $s_1$ to $s$. Combined with $s_1 \in F_s$, this means $s_1 \in C_s$, contradicting the fact that $(s_1,\alpha) \in \ROut_{(T,\pi)}(F_s \setminus C_s)$. $\blacksquare$
\end{proof}

\subsubsection{Induced Subgraphs and MEC-closed sub-MDPs of $\M$.}

The following proves that the subgraph induced by removing the attractor of the random-out of $C_s$ is an MEC-closed sub-MDP:
\begin{lemma}
    \label{lem:v1-e1}
     If $X_1 \neq \emptyset$ and $V_1 \neq \emptyset$, then $(V_1, E_1 \cap (V_1 \times A \times V_1)) = G(T_1,\pi_1)$ for some MEC-closed sub-MDP $(T_1,\pi_1)$.
\end{lemma}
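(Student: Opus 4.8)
The plan is to exhibit the sub-MDP explicitly and then verify, in turn, that it is a valid sub-MDP and that it is MEC-closed. I would set $T_1 = V_1 = C_s \setminus U_1$ and define the action assignment by keeping exactly the surviving state-action pairs, i.e. $\pi_1(s') = \{\alpha \in \pi(s') \mid (s',\alpha) \notin X_1\}$ for each $s' \in T_1$. Matching this against the definition of $G(T_1,\pi_1)$ and the construction $E_1 = E \setminus (X_1 \times V)$, $V_1 = C_s \setminus U_1$, one checks directly that $G(T_1,\pi_1) = (V_1,\, E_1 \cap (V_1 \times A \times V_1))$: an edge $(s',\alpha,s'')$ survives precisely when $s',s'' \in V_1$, $(s',\alpha) \in \sa(T,\pi)$, and $(s',\alpha) \notin X_1$. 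This identification is routine, so the content lies in the next two steps.

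To show $(T_1,\pi_1)$ is a genuine sub-MDP I must check $\pi_1(s') \neq \emptyset$ for every $s' \in T_1$ and self-containment. For non-emptiness I would use the fixed-point characterization of the attractor: since $(U_1,X_1) = \Attr_{(T,\pi)}(\ROut_{(T,\pi)}(C_s))$ is a fixed point, $U_1 = \{t \in T \mid \forall\, \alpha \in \pi(t)\,.\,(t,\alpha) \in X_1\}$; hence any $s' \in C_s \setminus U_1 = V_1$ must have some $\alpha \in \pi(s')$ with $(s',\alpha) \notin X_1$, so $\pi_1(s') \neq \emptyset$. For self-containment, take $s' \in T_1$, $\alpha \in \pi_1(s')$ (so $(s',\alpha) \notin X_1$), and $s''$ with $\delta(s',\alpha,s'') > 0$; I must show $s'' \in V_1$. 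First, $s'' \in T$ since $(T,\pi)$ is self-contained. Next, because $X_1 \supseteq \ROut_{(T,\pi)}(C_s)$ and $(s',\alpha) \notin X_1$, the pair $(s',\alpha)$ is not in $\ROut_{(T,\pi)}(C_s)$, so it has no successor in $T \setminus C_s$; thus $s'' \in C_s$. It remains to rule out $s'' \in U_1$.

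The hardest step is exactly this last exclusion, and I would argue it by contradiction using the inductive definition of $\Attr$. If $s'' \in U_1 = \bigcup_i S_i$, then $s'' \in S_i$ for some $i$, and the defining rule for $X_{i+1}$ adds every pair $(t,\beta) \in \sa(T,\pi)$ having a successor in $S_i$; since $(s',\alpha) \in \sa(T,\pi)$ and $\delta(s',\alpha,s'') > 0$ with $s'' \in S_i$, this forces $(s',\alpha) \in X_{i+1} \subseteq X_1$, contradicting $(s',\alpha) \notin X_1$. Hence $s'' \in C_s \setminus U_1 = V_1$, completing self-containment. I expect this to be the main technical obstacle, since it is the only place where the attractor's inductive structure is genuinely needed rather than just its output.

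Finally, for MEC-closedness, fix $s' \in T_1$ and suppose $\MEC_{\mathcal{M}}(s') = (T',\pi')$ exists; I must show $(T',\pi') \subseteq (T_1,\pi_1)$. Since $(T,\pi)$ is MEC-closed and $s' \in V_1 \subseteq T$, we have $(T',\pi') \subseteq (T,\pi)$, so all labeled paths of $(T',\pi')$ lie in $(T,\pi)$; as $(T',\pi')$ is strongly connected and $s' \in C_s$, every $t \in T'$ is mutually reachable with $s'$, hence with $s$, in $(T,\pi)$, giving $T' \subseteq C_s$. By Lemma~\ref{lem:u1-x1}, $T' \cap U_1 = \emptyset$ and $\sa(T',\pi') \cap X_1 = \emptyset$; the former yields $T' \subseteq C_s \setminus U_1 = T_1$, and the latter, together with $\pi'(t) \subseteq \pi(t)$, yields $\pi'(t) \subseteq \pi_1(t)$ for every $t \in T'$. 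Thus $(T',\pi') \subseteq (T_1,\pi_1)$, establishing MEC-closure and the lemma.
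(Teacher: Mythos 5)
Your proof is correct and takes essentially the same route as the paper's: exhibit $(T_1,\pi_1)$ explicitly, verify the sub-MDP conditions (non-emptiness of $\pi_1$ via the fixed-point characterization of $\Attr$, self-containment via $\ROut_{(T,\pi)}(C_s) \subseteq X_1$ and the inductive rule adding pairs with a successor in some $S_i$ to $X_{i+1}$), and derive MEC-closure from the MEC-closedness of $(T,\pi)$, strong connectivity of MECs, and Lemma~\ref{lem:u1-x1}. The only cosmetic difference is that you define $\pi_1$ by deleting the $X_1$-pairs from $\pi$, whereas the paper defines $\pi_1(s')$ from the surviving edges of the induced subgraph; the two definitions coincide once self-containment is established, and each merely shifts which of the routine checks (graph identity vs.\ non-emptiness) is immediate.
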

\begin{hproof}
    We define $(T_1,\pi_1)$ as $T_1 = V_1$ and for all $v \in V_1$, $\pi_1(v) = \{ \alpha \in A \,\mid\, \exists\, u \in V_1 \text{ s.t. } (v,\alpha,u) \in E_1 \}$. By definition, $G(T_1,\pi_1) = (V_1, E_1 \cap (V_1 \times A \times V_1))$. Recall that $V_1 = C_s \setminus U_1$ and $E_1 = E \setminus (X_1 \times V)$, where $(U_1,X_1) = \Attr_{(T,\pi)}(\ROut(C_s, (T,\pi)))$.
    
    Now we need to prove two claims. First, $(V_1,\pi_1)$ is a sub-MDP. Every state in $V_1$ has an action in $\pi_1$ because otherwise all its state-action pairs are in $X_1$, which means it's in $U_1$ by definition of $\Attr$, contradicting the fact that it's in $V_1 = C_s \setminus U_1$. Similarly, for any $s' \in V_1$ and $\alpha \in \pi_1(s')$, $(s',\alpha)$ cannot go outside $V_1$ because if it reaches $U_1$, then $(s',\alpha) \in X_1$ by definition of $\Attr$, and if it reaches $(T
    \setminus C_s)$, then $(s',\alpha) \in \ROut_{(T,\pi)}(C_s)$ and thus in $X_1$. Second, we need to show that $(V_1,\pi_1)$ is MEC-closed. This follows from the fact that $(T,\pi)$ is MEC-closed, $C_s$ is an SCC (MECs are strongly-connected) and lemma \ref{lem:u1-x1} ($U_1,X_1$ can't be a part of any MEC). See \ref{sec:Appendix 2} for details. $\blacksquare$
\end{hproof}

For $F_s\setminus C_s$, we show that the induced graph is an MEC-closed sub-MDP. Recall, the random-out for this partition is empty, hence we retain the entire partition:
\begin{lemma}
    \label{lem:v2-e2}
     If $V_2 \neq \emptyset$, then $(V_2, E \cap (V_2 \times A \times V_2)) = G(T_2,\pi_2)$ for some MEC-closed sub-MDP $(T_2,\pi_2)$.
\end{lemma}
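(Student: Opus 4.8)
The plan is to mirror the proof of Lemma~\ref{lem:v1-e1}, but to exploit that for this partition the random-out is empty (Lemma~\ref{lem:u2-x2}), so that no states or state-action pairs are removed and the sub-MDP conditions become essentially immediate. First I would define the candidate sub-MDP $(T_2,\pi_2)$ by setting $T_2 = V_2 = F_s \setminus C_s$ and, for each $v \in T_2$, $\pi_2(v) = \{ \alpha \in A \,\mid\, \exists\, u \in V_2 \text{ s.t. } (v,\alpha,u) \in E \}$. By construction $G(T_2,\pi_2) = (V_2, E \cap (V_2 \times A \times V_2))$, matching the subgraph in the statement, so it remains to verify that $(T_2,\pi_2)$ is a sub-MDP and that it is MEC-closed.

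For the sub-MDP conditions, both the non-emptiness of each $\pi_2(v)$ and self-containment follow directly from Lemma~\ref{lem:u2-x2}. Since $(T,\pi)$ is a sub-MDP, any $v \in V_2 \subseteq T$ has some enabled action $\alpha \in \pi(v)$; because $\ROut_{(T,\pi)}(F_s \setminus C_s) = \emptyset$, the pair $(v,\alpha)$ has no successor in $T \setminus V_2$, and as $\alpha$ is enabled it has at least one successor inside $V_2$, giving $\alpha \in \pi_2(v)$ and hence $\pi_2(v) \neq \emptyset$. The same empty-random-out fact yields self-containment: every successor of every $(v,\alpha)$ with $v \in V_2$ and $\alpha \in \pi_2(v)$ already lies in $V_2$.

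The main work lies in establishing MEC-closure. Fix $s' \in T_2$ and suppose $\MEC_{\mathcal{M}}(s') = (T',\pi')$ exists; since $(T,\pi)$ is MEC-closed we know $(T',\pi') \subseteq (T,\pi)$, so it suffices to show $T' \subseteq V_2$ and $\pi'(v) \subseteq \pi_2(v)$ for every $v \in T'$. The key structural observation I would establish is that each SCC of the graph $(V,E)$ is contained entirely in exactly one of the three partitions $C_s$, $F_s \setminus C_s$, $V \setminus F_s$: because $F_s$ is forward-closed, any SCC meeting $F_s$ is contained in $F_s$, and $C_s$ is itself a single SCC, so the remaining SCCs split cleanly between $F_s \setminus C_s$ and $V \setminus F_s$. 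Since an MEC is strongly connected, $T'$ lies inside a single SCC of $(V,E)$; as $s' \in T' \cap (F_s \setminus C_s)$, that SCC is contained in $F_s \setminus C_s = V_2$, whence $T' \subseteq V_2$. Finally, for $v \in T'$ and $\alpha \in \pi'(v)$, self-containment of the MEC sends every successor of $(v,\alpha)$ into $T' \subseteq V_2$, while $\alpha \in \pi(v)$ since $(T',\pi') \subseteq (T,\pi)$; hence $(v,\alpha,u) \in E$ for some $u \in V_2$, giving $\alpha \in \pi_2(v)$.

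The step I expect to be the main obstacle is the structural claim that every SCC of $(V,E)$ lies wholly within one of the three partitions, which underpins the containment $T' \subseteq V_2$. Everything else is a direct consequence of Lemma~\ref{lem:u2-x2} (for the sub-MDP conditions) and of MECs being strongly connected together with the MEC-closure of $(T,\pi)$ (for the action-set inclusion).
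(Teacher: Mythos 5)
Your proof is correct and takes essentially the same route as the paper's: the same definition of $(T_2,\pi_2)$ via edges internal to $V_2$, Lemma~\ref{lem:u2-x2} for both non-emptiness of the action sets and self-containment, and MEC-closure of $(T,\pi)$ together with strong connectivity of MECs for Claim~2. Your ``every SCC of $(V,E)$ lies wholly in one of $C_s$, $F_s \setminus C_s$, $V \setminus F_s$'' claim is just a packaged form of the paper's direct reachability argument (there: $s_1$ reaches $s_2$ in $(T,\pi)$ forces $s_2 \in F_s$, and $s_2 \in C_s$ would make $s_1$ reach $s$ and hence lie in $C_s$, a contradiction), and it is valid since $(T',\pi') \subseteq (T,\pi)$ ensures the MEC's transitions are edges of $E$ — a point you correctly secured before invoking it.
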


Finally, the subgraph induced by removing the attractor of the random-out of $V\setminus F_s$ is an MEC-closed sub-MDP:
\begin{lemma}
    \label{lem:v3-e3}
     If $V_3 \neq \emptyset$, then $(V_3, E_3 \cap (V_3 \times A \times V_3)) = G(T_3,\pi_3)$ for some MEC-closed sub-MDP $(T_3,\pi_3)$.
\end{lemma}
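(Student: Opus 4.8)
The plan is to mirror the proof of Lemma~\ref{lem:v1-e1} almost verbatim, replacing the SCC $C_s$ by the set $V \setminus F_s$ and invoking Lemma~\ref{lem:u3-x3} in place of Lemma~\ref{lem:u1-x1}. First I would define the candidate sub-MDP $(T_3,\pi_3)$ by setting $T_3 = V_3$ and, for each $v \in V_3$, $\pi_3(v) = \{ \alpha \in A \,\mid\, \exists\, u \in V_3 \text{ s.t. } (v,\alpha,u) \in E_3 \}$, so that by construction $G(T_3,\pi_3) = (V_3, E_3 \cap (V_3 \times A \times V_3))$. Recall that $V_3 = (V \setminus F_s) \setminus U_3$, that $E_3 = E \setminus (X_3 \times V)$, and that $(U_3,X_3) = \Attr_{(T,\pi)}(\ROut_{(T,\pi)}(V \setminus F_s))$. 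It then remains to verify two things: that $(T_3,\pi_3)$ is a genuine sub-MDP, and that it is MEC-closed.

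For the sub-MDP check, the central observation I would establish first is a self-containment property: for every $t \in V_3$ and every $\alpha \in \pi(t)$ with $(t,\alpha) \notin X_3$, all $\delta$-successors of $(t,\alpha)$ lie in $V_3$. This splits into two cases, each closed by contradiction with $(t,\alpha) \notin X_3$: a successor in $F_s$ would put $(t,\alpha)$ into $\ROut_{(T,\pi)}(V \setminus F_s) \subseteq X_3$ (using the identity $T \setminus (V \setminus F_s) = F_s$), while a successor in $U_3$ would force $(t,\alpha) \in X_3$ by the $X_i$-update rule in the definition of $\Attr$. Granting this, the non-emptiness of $\pi_3$ at each state follows exactly as in Lemma~\ref{lem:v1-e1}: a state $t \in V_3$ satisfies $t \notin U_3$, so by the defining condition of the attractor's state set not all of $t$'s pairs are in $X_3$; picking a surviving $\alpha \in \pi(t)$, its (necessarily existing) successor lands in $V_3$ by the self-containment property, witnessing $\alpha \in \pi_3(t)$.

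For MEC-closedness, fix $t \in V_3$ whose MEC $(T',\pi') = \MEC_{\mathcal{M}}(t)$ exists. I would argue $T' \subseteq V_3$ in two moves. Since an MEC is strongly connected while $F_s$ is forward-closed in $(V,E)$ (no edge leaves $F_s$, hence no path runs from $F_s$ into $V \setminus F_s$), $(T',\pi')$ cannot contain any state of $F_s$ alongside $t \in V \setminus F_s$; thus $T' \subseteq V \setminus F_s$. Combined with $T' \cap U_3 = \emptyset$ from Lemma~\ref{lem:u3-x3}, this gives $T' \subseteq V_3$. The action-level inclusion $\pi'(s') \subseteq \pi_3(s')$ then follows because $\sa(T',\pi') \cap X_3 = \emptyset$ (also Lemma~\ref{lem:u3-x3}) keeps every edge of the MEC in $E_3$, while self-containment of the end-component keeps its successors inside $T' \subseteq V_3$; together these witness $\alpha \in \pi_3(s')$ for each $\alpha \in \pi'(s')$.

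I expect the only genuinely substantive step to be the forward-closure argument ruling out $F_s$-states from the MEC, since this is the one place where the proof structurally departs from Lemma~\ref{lem:v1-e1}, which instead leaned on $C_s$ being an SCC. Everything else is a mechanical transcription of the earlier proof, with the bookkeeping identity $T \setminus (V \setminus F_s) = F_s$ playing the role that $T \setminus C_s$ did before.
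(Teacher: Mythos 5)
Your proposal is correct and follows essentially the same route as the paper's proof: the same definition of $\pi_3$, the same two-case contradiction for self-containment (a successor in $F_s$ forces membership in $\ROut_{(T,\pi)}(V \setminus F_s) \subseteq X_3$, a successor in $U_3$ forces membership in $X_3$ by the attractor update), and the same invocation of Lemma~\ref{lem:u3-x3} for MEC-closedness. Your observation that the forward-closure of $F_s$ replaces the SCC argument of Lemma~\ref{lem:v1-e1} is exactly how the paper's proof departs from that earlier lemma as well (using, as you implicitly do, that $(T',\pi') \subseteq (T,\pi)$ by MEC-closure so the reachability argument runs inside $(T,\pi)$).
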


\subsubsection{Correctness.}
Finally, we are ready to prove the correctness of Algorithm~\ref{algo:interleave} \interleave{}.

\begin{theorem}[Correctness]\label{thm:correctness}
    Let $\mathcal{M} = (S, A, d_{\init}, \delta)$ be an MDP, $(T,\pi)$ be an MEC-closed sub-MDP of $\mathcal{M}$ and $G(T,\pi) = (V,E)$. \texttt{MEC-Decomp-Interleave}$(V,E, \{v\})$ where $\{v\} = v_{\arb{}}$ or $v \in V$ outputs the graphs of all MECs in $\MECs_{\mathcal{M}}(T)$.
\end{theorem}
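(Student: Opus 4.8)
The plan is to prove Theorem~\ref{thm:correctness} by strong induction on the number of edges $|E|$ of the input graph $G = G(T,\pi)$, establishing two inclusions simultaneously: that every graph the algorithm emits is the graph of a MEC in $\MECs_{\mathcal{M}}(T)$, and that every MEC in $\MECs_{\mathcal{M}}(T)$ is emitted. Well-foundedness of the induction rests on the termination fact already noted in the overview: each of the three recursive calls is on a strictly smaller graph. Indeed, the start vertex $s$ lies in $C_v$, which is disjoint from $V_2 = F_v \setminus C_v$ and from $V_3 \subseteq V \setminus F_v$, and $s$ has at least one outgoing edge (since $(T,\pi)$ is a sub-MDP), so that edge is absent from $E_2$ and $E_3$; likewise $V_1 = C_v \setminus U_1$ loses at least the edges carried by the nonempty $X_1$. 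Thus every recursive call satisfies $|E_i| < |E|$ and the induction hypothesis applies to it.

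The structural backbone I would set up first is the claim that every MEC $(T',\pi') \in \MECs_{\mathcal{M}}(T)$ is contained entirely in exactly one of the three vertex partitions $C_v$, $F_v \setminus C_v$, $V \setminus F_v$. Since $(T,\pi)$ is MEC-closed we have $(T',\pi') \subseteq (T,\pi)$, so all of the MEC's edges occur in $G$ and $T'$ is strongly connected \emph{in $G$}; hence $T'$ sits inside a single SCC of $G$. Because $C_v$ is itself an SCC, and no SCC can straddle $F_v$ and $V \setminus F_v$ (a vertex in $F_v$ cannot reach $V \setminus F_v$, as $F_v$ is the forward-closed reachable set computed by \texttt{SCC-Fwd-NewStart}), each SCC of $G$, and therefore $T'$, lies inside one partition. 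This is the step I expect to be the crux, since it is where the reachability geometry of the SCC computation meets the definition of a MEC.

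With this in hand, \textbf{completeness} is a three-way case analysis on the partition containing $T'$. If $T' \subseteq C_v$: when $X_1 = \emptyset$, Theorem~\ref{thm:sound} gives $C_v = \MEC_{\mathcal{M}}(s)$, and since every state belongs to at most one MEC, any $t \in T' \subseteq C_v$ forces $(T',\pi') = \MEC_{\mathcal{M}}(t) = \MEC_{\mathcal{M}}(s) = C_v$, which is output directly; when $X_1 \neq \emptyset$, Lemma~\ref{lem:u1-x1} shows $T' \cap U_1 = \emptyset$ and $\sa(T',\pi') \cap X_1 = \emptyset$, so $T'$ survives intact into $V_1$, and by Lemma~\ref{lem:v1-e1} the recursive input is MEC-closed with $(T',\pi') \in \MECs_{\mathcal{M}}(V_1)$, so the induction hypothesis outputs it. If $T' \subseteq F_v \setminus C_v$, Lemma~\ref{lem:u2-x2} ensures nothing is removed and Lemma~\ref{lem:v2-e2} gives MEC-closure of $V_2$, so the induction hypothesis finishes the case. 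If $T' \subseteq V \setminus F_v$, Lemmas~\ref{lem:u3-x3} and~\ref{lem:v3-e3} play exactly the roles of~\ref{lem:u1-x1} and~\ref{lem:v1-e1}, again closing the case by the induction hypothesis.

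For the reverse inclusion (\textbf{soundness} of the output set), I would observe that the algorithm emits a graph only in the $X_1 = \emptyset$ branch, where Theorem~\ref{thm:sound} certifies the emitted graph is exactly $G(\MEC_{\mathcal{M}}(s))$; since $s \in T$ and $(T,\pi)$ is MEC-closed, this MEC lies in $\MECs_{\mathcal{M}}(T)$. Every other emitted graph arises from a recursive call on some $V_i \subseteq V$, and by the induction hypothesis (together with Lemmas~\ref{lem:v1-e1}--\ref{lem:v3-e3} guaranteeing the preconditions) such outputs are graphs of MECs in $\MECs_{\mathcal{M}}(V_i) \subseteq \MECs_{\mathcal{M}}(T)$. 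Combining the two inclusions shows that the set of graphs output equals $\{\, G(T',\pi') \mid (T',\pi') \in \MECs_{\mathcal{M}}(T) \,\}$, which completes the induction and hence the proof.
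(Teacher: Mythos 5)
Your proof is correct and takes essentially the same route as the paper's: strong induction on $|E|$, with Theorem~\ref{thm:sound} handling the $X_1 = \emptyset$ branch and Lemmas~\ref{lem:u1-x1}--\ref{lem:v3-e3} justifying the removals and the MEC-closure preconditions of the three recursive calls. The only differences are presentational: you organize the argument MEC-by-MEC via the observation that each MEC lies in exactly one of the three partitions (which the paper encodes implicitly in its closure lemmas, instead accounting state-by-state, e.g.\ that $V_1 = C_s \setminus U_1$ and states of $U_1$ have no MEC), and you make the well-foundedness of the induction and the reverse inclusion explicit where the paper relegates them to the overview and to Theorem~\ref{thm:sound}.
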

\begin{hproof}
    If $\{v\} = v_{\arb{}}$, let $s \in V$ be the vertex picked on line \ref{alg:pick}. Otherwise, let $s = v$. Proof is by strong induction on the number of edges in $G(T,\pi) = (V,E)$. \textbf{Base Case ($|E| = 1$):} Since each state in an MDP must have at least one action, we must have $(V,E) = ( \{s\}, \{(s,\alpha,s)\} )$. This is also what our algorithm outputs and is the MEC of $s$ because $(T,\pi)$ is MEC-closed.

    \textbf{Induction Step ($|E| = k+1$):} If $X_1 = \emptyset$, then from Theorem \ref{thm:sound}, we know that the algorithm outputs the MEC of all states in $C_s$. If $X_1 \neq \emptyset$ and $V_1 = \emptyset$, then it vacuously outputs all the MECs of $V_1$. If $X_1, V_1 \neq \emptyset$, then we apply Lemma \ref{lem:v1-e1} and the induction hypothesis to get that it outputs all the MECs of $V_1$. In all cases, since the states in $U_1$ don't have an MEC (Lemma \ref{lem:u1-x1}) and $V_1 = C_s \setminus U_1$, the algorithm outputs all the MECs of $C_s$. Similarly, we apply Lemma \ref{lem:v2-e2} (resp. Theorem \ref{lem:v3-e3}) and the induction hypothesis to get that the algorithm outputs all the MECs of $V_2 = (F_s \setminus C_s)$ (resp. $V_3$). Then, for $V_3$, we apply Lemma \ref{lem:u3-x3} to say that it outputs all the MECs of $V_3 \cup U_3 = (V \setminus F_s)$. Putting together $C_s$, $(F_s \setminus C_s)$ and $(V \setminus F_s)$, we get that the algorithm outputs all the MECs of $V$. $\blacksquare$
\end{hproof}

\subsection{Complexity Analysis}%
\label{sub:Complexity Analysis}

Since symbolic operations are more expensive than non-symbolic operations, symbolic time is defined as the number of symbolic operations in a symbolic algorithm. Previous literature \cite{collapsing,lockstep} focuses only on the number of $\Pre$/$\Post$ operations in a symbolic algorithm. We additionally include $\exists$ operations. Like in all previous algorithms, the number of basic set operations in \interleave{} is asymptotically at most the number of $\Pre$/$\Post$/$\exists$ operations. Symbolic space is defined as the maximum number of sets (regardless of the size of the sets) stored by a symbolic algorithm at any point of time. 

\begin{theorem}\label{thm:time}
    For an MDP $\mdp = (S, A, d_{\init}, \delta)$, and $(V,E) = G(\mdp)$, $\texttt{MEC-Decomp-Interleave}(V,E,v_{\arb{}})$ performs $\bigO(|S|^2 \cdot |A|)$ symbolic operations.
\end{theorem}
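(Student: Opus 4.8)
The plan is to bound the total number of symbolic operations by summing a per-call cost over the recursion tree and then charging that sum to the states of the MDP. Write $c$ for a single (recursive) invocation of \texttt{MEC-Decomp-Interleave} and let $V(c)$ denote its input vertex set. First I would establish that each call performs only $\bigO(|V(c)|)$ symbolic operations. The forward search in \texttt{Fwd-NewVertex} and the backward $\Pre$-iteration that builds $C_v$ in \texttt{SCC-Fwd-NewStart} each terminate within $|V(c)|$ iterations, since every iteration adds at least one new vertex. Each of the two $\Attr$ computations also terminates within $|V(c)|$ iterations: by the definition of $\Attr$, if an iteration adds no new state to $S_i$ then the $X$- and $S$-sequences have already stabilized, so every productive iteration adds at least one new state of the sub-MDP, of which there are at most $|V(c)|$. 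Since each such iteration, together with the $\ROut$ calls and the remaining union/intersection/restriction steps, uses $\bigO(1)$ symbolic operations, the per-call cost is $\bigO(|V(c)|)$.

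Summing over all calls gives a total cost of $\bigO\bigl(\sum_c |V(c)|\bigr)$. I would rewrite this as $\sum_{s} \ell(s)$, where $\ell(s)$ is the number of calls whose input vertex set contains $s$ — equivalently, the length of the root-to-node path that $s$ traces through the recursion tree, since at every call each surviving state is routed into exactly one of the three disjoint partitions $C_v$, $F_v \setminus C_v$, $V \setminus F_v$ (or is dropped into $U_1$/$U_3$, or settled when $C_v$ is output as an MEC). It therefore suffices to prove $\ell(s) = \bigO(|S|\cdot|A|)$ for every state $s$, since summing over the at most $|S|$ states then yields $\sum_c |V(c)| = \bigO(|S|^2\cdot|A|)$.

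To bound $\ell(s)$ I would classify each step on $s$'s path by which recursive partition it enters. A step into $F_v\setminus C_v$ or into $V\setminus F_v$ strictly decreases the vertex count of $s$'s current graph (the former discards $C_v \ni v$, the latter discards $F_v \supseteq C_v \ni v$); since the vertex count never increases along the path and starts at most $|S|$, there are at most $|S|$ such steps. A step into $C_v$ (the peeling branch $V_1$) occurs only when $X_1 \neq \emptyset$, and the recursive graph is obtained by deleting the state-action pairs of $X_1$ and the states of $U_1$, so the number of state-action pairs of $s$'s current graph strictly decreases. Because the recursive sub-MDPs only ever restrict $\pi$ and $T$ (Lemmas~\ref{lem:v1-e1}--\ref{lem:v3-e3}), this count is monotonically non-increasing along the whole path and starts at most $|S|\cdot|A|$, so there are at most $|S|\cdot|A|$ peeling steps. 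Hence $\ell(s) \le |S| + |S|\cdot|A| = \bigO(|S|\cdot|A|)$, giving $\bigO(|S|^2\cdot|A|)$ symbolic operations overall.

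The delicate point — and the step I would be most careful about — is the bound on peeling steps. It relies on the fact that a state-action pair, once deleted inside $X_1$ or $X_3$, never re-enters $s$'s subtree, so that each state can be re-examined after peeling at most $|S|\cdot|A|$ times rather than arbitrarily often; this is precisely where \interleave{} must be shown to avoid the blow-up that naively re-running SCC decomposition after every removal could incur. Making the monotonicity of the state-action count rigorous, and verifying that the two distinct decreasing measures — vertex count for the $F_v\setminus C_v$ and $V\setminus F_v$ branches, state-action count for the $C_v$ branch — can be combined additively along a single path because every step strictly decreases at least one of them, is the crux of the argument.
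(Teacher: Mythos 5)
Your proof is correct and takes essentially the same route as the paper's: a cost linear in the vertex set for each invocation (the paper phrases this per level of the recursion tree and amortizes the $\Attr$/$\ROut$ operations globally against removed states and state-action pairs, whereas you bound them per call), combined with a depth bound of $|S| + |S|\cdot|A|$ that rests on exactly the paper's observation that every recursive call strictly decreases the number of states and/or state-action pairs while neither quantity ever increases. Your per-state accounting $\sum_{s}\ell(s)$ is a Fubini-style repackaging of the paper's per-level sum $\sum_{\text{levels}} 2|S|$, so nothing essential differs.
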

\begin{proof}
    There are two kinds of $\Pre$/$\Post$/$\exists$ operations. The first kind are the $\exists$ operations in the \texttt{ROut} and \texttt{Attr} computations on lines \ref{line:attr1} and \ref{line:attr2}. Each such operation discovers at least one new state or state-action pair that is then removed from the graph and never seen again (see algorithms \ref{algo:rout}, \ref{algo:attr} in \ref{sec:Appendix 1} for details). So over the entire algorithm the cost of these is $\bigO(|S| + |S| \cdot |A|)$ symbolic operations. The second kind are those in the \texttt{SCC-Fwd-NewStart} computation. For a graph with $k$ vertices, these can be at most $2k$ ($k$ \texttt{Post} calls for forward and $k$ \texttt{Pre} calls for backward). Now, if you look at the tree of recursive calls, the top-level cost is $\leq 2 |S|$ (recall $V = S$). At the second level, the cost is $\leq 2 (|V_1| + |V_2| + |V_3|) \leq 2 (|C_v| + |F_v \setminus C_v| + |V \setminus F_v|) = 2 |S|$. The same is true for every level. There can be at most $(|S| + |S| \cdot |A|)$ levels since the number of states and/or state-action pairs decreases by at least one in every recursive call. So the number of symbolic operations is $\bigO(|S| \cdot ( |S| + |S| \cdot |A|)) = \bigO( |S|^2 \cdot |A| )$. $\blacksquare$
\end{proof}

\begin{theorem}\label{thm:space}
    For an MDP $\mdp = (S, A, d_{\init}, \delta)$, and $(V,E) = G(\mdp)$, $\texttt{MEC-Decomp-Interleave}(V,E,v_{\arb{}})$ uses $\bigO(\log |S|)$ symbolic space.
\end{theorem}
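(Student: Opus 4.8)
The plan is to bound the maximum number of BDDs (sets) that are simultaneously live at any point during the execution. The key structural fact I would exploit is the one the authors already flagged in the algorithm description: the three recursive calls on $C_v$, $F_v \setminus C_v$, and $V \setminus F_v$ are \emph{independent} of one another. Because they are independent, we do not need to keep the intermediate data of one call alive while another runs; we can sequence them so that before descending into a recursive call we free everything except what that call needs, and we can charge the per-frame storage to whichever subtree we are currently exploring. Thus the space cost is governed not by the total amount of data across the whole recursion tree but by the number of stack frames simultaneously active along a single root-to-leaf path, multiplied by the $\bigO(1)$ sets each frame holds. The whole theorem therefore reduces to bounding the \emph{depth} of the recursion by $\bigO(\log |S|)$.

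First I would verify that a single invocation of \texttt{MEC-Decomp-Interleave} (excluding its recursive calls) uses only $\bigO(1)$ sets: the inputs $V,E$, the outputs $C_v, F_v, \{v'\}$ of \texttt{SCC-Fwd-NewStart}, the pairs $(U_1,X_1)$ and $(U_3,X_3)$, and the derived $E_1,V_1,E_3,V_3$ are each a constant number of BDDs, and the \texttt{ROut}/\texttt{Attr} subroutines themselves run in $\bigO(1)$ space (each maintains a fixed number of working sets across its fixpoint iterations; I would cite Algorithms~\ref{algo:rout} and~\ref{algo:attr} in \ref{sec:Appendix 1} for this). Likewise \texttt{SCC-Fwd-NewStart} and \texttt{Fwd-NewVertex} compute fixpoints of $\Pre$/$\Post$ reusing a constant number of sets. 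Hence each live stack frame contributes $\bigO(1)$ space and the bound follows once the depth is controlled.

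The main obstacle — and the crux of the argument — is bounding the recursion depth by $\bigO(\log |S|)$, since a naive analysis only gives depth $\bigO(|S|\cdot|A|)$ (matching the edge-decrease argument used for time in Theorem~\ref{thm:time}), which is far too weak. The standard resolution, following the \skeleton{} space analysis of~\cite{skeleton}, is to order the three recursive calls cleverly and to observe that each recursive call is made on a vertex set that is a \emph{proper subset} of $V$, so that the vertex count strictly decreases down any path. The decisive refinement is that the call on $C_v$ and the call on $F_v \setminus C_v$ together partition $F_v$, while the call on $V \setminus F_v$ takes the complementary half; by always recursing first (i.e. keeping on the stack) into the branch with the \emph{smaller} vertex set and only then into the larger one — equivalently, by the $v'$-at-maximum-distance bookkeeping adopted from~\cite{skeleton} — one guarantees that the size of the vertex set handled by a frame at least halves every constant number of levels along any root-to-leaf path. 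A set of size $|S|$ can be halved only $\bigO(\log|S|)$ times, giving the desired depth bound.

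Therefore I would structure the proof as: (i) each frame is $\bigO(1)$ space; (ii) independence of the three calls lets us keep only one root-to-leaf path's worth of frames live, provided we free data before descending; (iii) a size-halving argument on the vertex sets (via the smaller-branch-first ordering / the $v'$ optimization of~\cite{skeleton}) bounds the depth of that path by $\bigO(\log|S|)$; and (iv) multiplying (i) by (iii) yields $\bigO(\log|S|)$ total symbolic space. The one subtlety I would be careful to address rigorously is that the halving must be justified on the vertex partition $\{C_v, F_v\setminus C_v, V\setminus F_v\}$ rather than on edges or state-action pairs, and that removing $U_1$/$U_3$ before recursing only ever shrinks the sets, so it cannot hurt the halving bound.
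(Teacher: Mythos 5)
There is a genuine gap, and it sits exactly at the step you call the crux. You reduce the theorem to ``bounding the \emph{depth} of the recursion by $\bigO(\log|S|)$,'' but the raw recursion depth of \texttt{MEC-Decomp-Interleave} is \emph{not} $\bigO(\log|S|)$: nothing prevents executions in which, at every level, one of the three parts (say $V\setminus F_v$, or $C_v$ after removing $U_1$) contains all but $\bigO(1)$ vertices, giving a chain of length $\Theta(|S|)$. Reordering the calls smallest-first does not change which subsets arise, only when they are visited, so your claimed invariant that ``the vertex set handled by a frame at least halves every constant number of levels along any root-to-leaf path'' is false along the chain of largest children. The paper does not bound the depth of the recursion tree at all; it bounds the number of simultaneously \emph{live} frames, and the indispensable mechanism is tail-call elimination on the \emph{last} call: the three calls are issued in increasing order of $|V_1|,|V_2|,|V_3|$, and after the two smaller calls return, the current frame's $\bigO(1)$ sets are deleted \emph{before} the third (largest) call is made. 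Hence frames on the largest-child chain contribute no live storage, and every live frame corresponds to a descent into one of the two smaller parts. Since $V_1,V_2,V_3$ are disjoint subsets of $V$, each of the two smaller has size $<\tfrac{2}{3}|V|$ (otherwise the largest would too, exceeding $|V|$), so the number of live frames is at most $\log_{3/2}|S| = \bigO(\log|S|)$. Your proposal gestures at freeing memory (``before descending into a recursive call we free everything except what that call needs''), but taken literally this is impossible for the first two calls --- after the first child returns, the parent still needs $V$, $E$, $F_v$, $C_v$, $\{v'\}$ to set up the remaining calls --- and you never isolate the one place where freeing is legitimate, namely before the final (largest) call.

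Two secondary misattributions compound the gap. First, the $v'$-at-maximum-distance bookkeeping adopted from \skeleton{} is a start-vertex optimization for the $F_v\setminus C_v$ call that the paper credits only with empirical \emph{time} savings (it explicitly ``does not change the complexity''); invoking it as ``equivalently'' delivering the halving conflates an unrelated device with the space argument. Second, describing $V\setminus F_v$ as ``the complementary half'' is unwarranted --- it is merely the complement and can have arbitrary size, which is precisely why the largest-branch chain must be discharged by the tail call rather than by any halving property. The parts of your proposal that do match the paper are the $\bigO(1)$-sets-per-frame accounting (including the constant-space fixpoint loops in \texttt{ROut}, \texttt{Attr}, and \texttt{Fwd-NewVertex}) and the observation that independence of the three calls permits reordering; what is missing is the specific combination --- smallest-two-first ordering \emph{plus} deleting the current frame before the largest call --- without which the $\bigO(\log|S|)$ bound does not follow.
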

\begin{proof}
Note that each recursive call, excluding the space used by its children, only stores $\bigO(1)$ number of sets simultaneously. Thus, symbolic space is just the maximum depth of recursive calls reached throughout the algorithm. We use the fact that \texttt{MEC-Decomp-Interleave} is tail-recursive and the three calls can be made in any order. After the first two recursive calls have returned, the memory stored by the current call can be deleted before making the third recursive call. Now, the three calls should be made in increasing order of the sizes of the vertex sets $|V_1|,|V_2|,|V_3|$. The first 2 sizes must be $< 2 p / 3$. So on both branches, the vertex set size becomes at least $2/3$ of the previous. Therefore, the maximum depth that can be reached in the first two recursive calls is $\log_{\frac{3}{2}}{|S|} = \bigO(\log |S|)$. $\blacksquare$
\end{proof}

The complexity of MEC decomposition algorithms in prior work are given in terms of parameters $n = |S|\cdot |A|$ and $m = |S|^2\cdot |A|$ (see Table~\ref{tab:results}). The corollary below is immediate from Theorem~\ref{thm:time} and Theorem~\ref{thm:space} and presents the complexity of \interleave{} in those parameters for a fair comparison of algorithms: 

\begin{corollary}\label{cor:complexity}
    Let $\mdp = (S, A, d_{\init}, \delta)$ be an MDP, $(V,E) = G(\mdp)$ and $n,m$ be as defined above. Then $\texttt{MEC-Decomp-Interleave}(V,E,v_{\arb{}})$ performs $\bigO(n^2)$ symbolic operations and uses $\bigO(\log n)$ symbolic space.
\end{corollary}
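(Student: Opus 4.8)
The plan is to derive the corollary directly from Theorem~\ref{thm:time} and Theorem~\ref{thm:space} by translating their bounds from the parameters $|S|$ and $|A|$ into the parameters $n = |S|\cdot|A|$ and $m = |S|^2\cdot|A|$. Both translations rely only on the standing well-formedness assumption that every action is enabled in some state, which guarantees $|A|\geq 1$. This is what makes replacing factors of $|A|$ (or dropping them inside a logarithm) safe as an \emph{over}-approximation.

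For the symbolic-operation count, Theorem~\ref{thm:time} supplies $\bigO(|S|^2\cdot|A|)$. I would observe that $|S|^2\cdot|A| \leq |S|^2\cdot|A|^2 = (|S|\cdot|A|)^2 = n^2$, where the inequality uses $|A|\geq 1$. Hence $\bigO(|S|^2\cdot|A|) \subseteq \bigO(n^2)$, yielding the claimed $\bigO(n^2)$ symbolic operations. Equivalently, one may note that $|S|^2\cdot|A| = m$ exactly, so the bound of Theorem~\ref{thm:time} is really $\bigO(m)$, and $m \leq n^2$ gives the same conclusion; I would state the bound in both forms so the comparison with Table~\ref{tab:results} is transparent.

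For the space bound, Theorem~\ref{thm:space} supplies $\bigO(\log|S|)$. Since $|A|\geq 1$ we have $|S| \leq |S|\cdot|A| = n$, and by monotonicity of $\log$ this gives $\log|S| \leq \log n$. Therefore $\bigO(\log|S|) = \bigO(\log n)$, establishing the claimed $\bigO(\log n)$ symbolic space.

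There is no substantive obstacle: the corollary is a pure reparametrization, and each inequality is a one-line consequence of $|A|\geq 1$. The only point requiring care is the \emph{direction} of each comparison, so that every bound becomes an upper bound in the new parameters; in both cases I would verify that a quantity is being replaced by something at least as large ($|A|$ by $|A|^2$ in the operation count, and $|S|$ by $|S|\cdot|A|$ inside the logarithm) rather than the reverse.
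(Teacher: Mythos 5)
Your proposal is correct and takes essentially the same route as the paper, which simply declares the corollary ``immediate'' from Theorems~\ref{thm:time} and~\ref{thm:space}; you just make the implicit reparametrization explicit via $|S|^2\cdot|A| \leq (|S|\cdot|A|)^2 = n^2$ and $\log|S| \leq \log n$, both valid since $|A| \geq 1$. One cosmetic quibble: your final assertion $\bigO(\log|S|) = \bigO(\log n)$ should strictly be the inclusion $\bigO(\log|S|) \subseteq \bigO(\log n)$ (equality could fail if $|A|$ is superpolynomial in $|S|$), but only the upper-bound direction is needed, so the conclusion stands.
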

Complexity of other state-of-the-art MEC decomposition approaches is given in Table~\ref{tab:results}. We see that \interleave{} has the same complexity of \basic{}, warranting an extended empirical evaluation to examine algorithmic performance. 

\section{Empirical Evaluation}%
\label{sec:empirical-evaluation}

The goal of the empirical analysis is to examine the performance of \interleave{} against existing state-of-the-art algorithms for MEC decomposition. 

\subsubsection{Experimental Setup.}

We compare our algorithm \interleave{} to two state-of-the-art symbolic algorithms for MEC decomposition, namely \naive{}~\cite{dealfaro-phd} and  \lockstep{}~\cite{lockstep}. 
We use benchmarks from the {\em Quantitative Verification Benchmark Set (QVBS)}~\cite{qvbs}. QVBS consists of $379$ MDP benchmarks derived from probabilistic models and Markov automata. We use $368/379$ of these benchmarks, eliminating 11 benchmarks as they are not supported by \storm{}.

We record the runtime and number of symbolic operations required by each tool on each benchmark. To evaluate the number of symbolic operations, we count every 
non-basic set operation such as \texttt{exists}. We choose to count these operations since they directly correspond to the number of $\Pre$/$\Post$ operations, which are counted in the theoretical analysis of the number of symbolic operations.

The experiments measuring runtime and number of symbolic operations were performed on a machine equipped with an Intel(R) Xeon(R) CPU E5-2695 v2 processor running at 2.40GHz. The machine had 8 cores and 8 GB RAM. For the experiments measuring peak memory usage, we used a machine equipped with an AMD EPYC 7V13 Processor running at 2.50GHz. The machine had 24 cores and 220GB RAM. The memory limit for \texttt{CUDD} was set as 4 GB for each benchmark and each experiment was run with a timeout of 4 minutes.

\subsubsection{Implementation Details.}
\interleave{} has been written in the \storm{} model checker. We choose the \storm{} platform since  implementations of \naive{} and \lockstep{} are publicly available in a custom build of \storm~\cite{faber:code,faber:thesis}. Building on the \storm{} platform, therefore, ensures that all three algorithm share several commonalities, including using the same MDP graph representations, the same implementation for the \skeleton{} algorithm for symbolic SCC decomposition~\cite{skeleton}, the same library for symbolic operations (\cudd{} ~\cite{cudd}), and so on. This enables a fair comparison of the tool's performance. 
The implementation code is open source and is available at \url{https://github.com/Ramneet-Singh/storm-masters-thesis/tree/stable}.
Table~\ref{tab:full-runtimes} (\ref{sec:Appendix 3}) shows the runtimes of each each algorithm (\basic{}, \lockstep{} and \interleave{}) on all $368$ benchmarks.

\subsubsection{Observations and Inferences.} 

\begin{figure}[t]
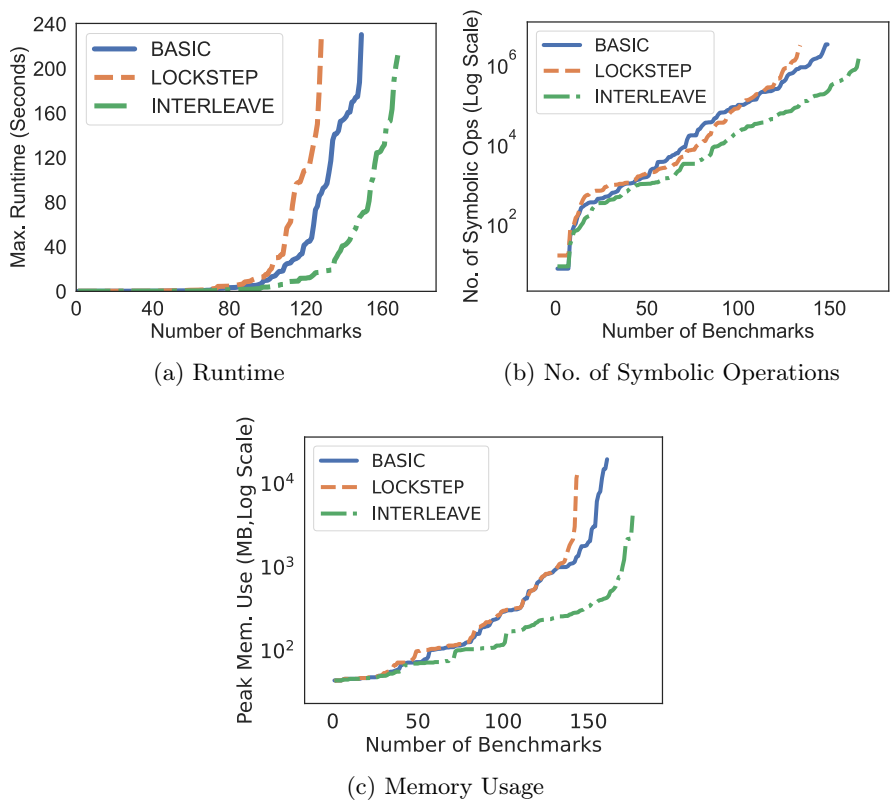

\centering

\subfloat[Runtime \label{fig:quantile-runtime}]{%
    \includesvg[width=0.48\textwidth]{quantilePlotRuntimes}%
}\hfil
\subfloat[No. of Symbolic Operations \label{fig:quantile-symOps}]{%
    \includesvg[width=0.48\textwidth]{quantilePlotSymOps}%
}\hfil
\subfloat[Memory Usage \label{fig:quantile-memory}]{%
    \includesvg[width=0.48\textwidth]{quantilePlotMemory}%
}

\caption{Cactus plots of performance measures}
\label{fig:quantile}
\end{figure}

\paragraph{\interleave{} solves the most benchmarks.} 
Figure~\ref{fig:quantile} shows the cactus plots on the runtime of the three tools \basic{}, \lockstep{} and \interleave{} on the QVBS. Fig.~\ref{fig:quantile-runtime} shows that \interleave{} solves the most number of benchmarks by solving 168 benchmarks. We note that \interleave{} solves all 128 and 149 benchmarks solved by \lockstep{} and \basic{}, respectively. Hence, \interleave{} solves strictly more benchmarks than the other two tools. 

\begin{figure}[t]
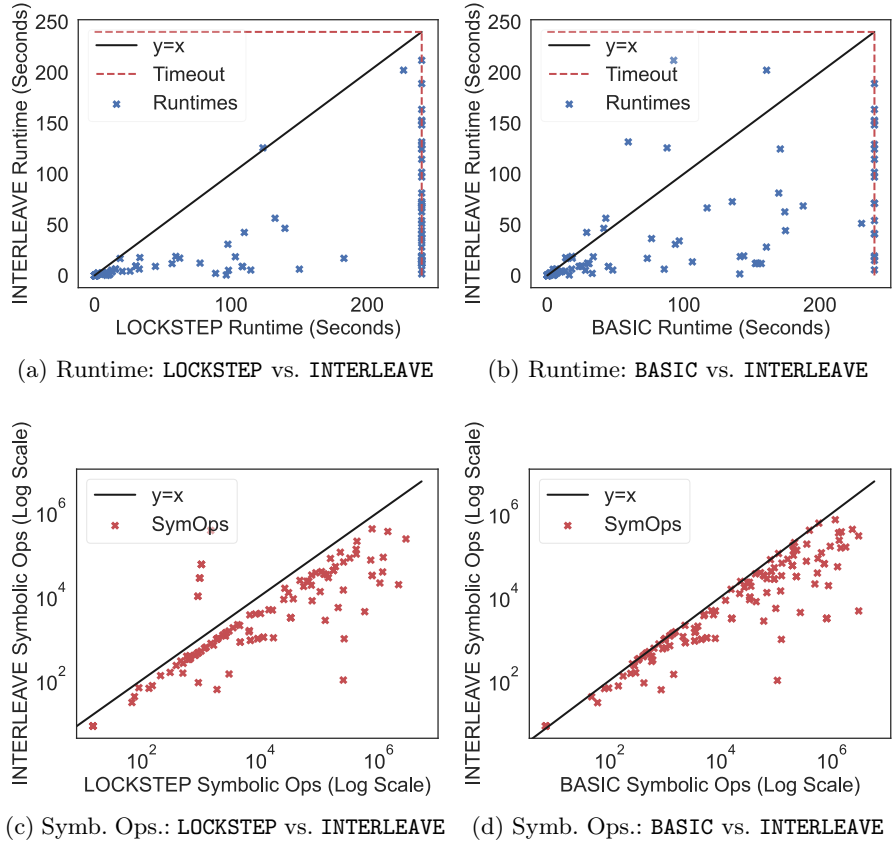

\centering

\subfloat[Runtime: \lockstep{} vs. \interleave{}\label{fig:scatter-runtime-lockstep-interleave}]{%
    \includesvg[width=0.48\textwidth]{scatterPlotRuntimes_LOCKSTEP_INTERLEAVE}%
}\hfil
\subfloat[Runtime: \basic{} vs. \interleave{}\label{fig:scatter-runtime-basic-interleave}]{%
    \includesvg[width=0.48\textwidth]{scatterPlotRuntimes_BASIC_INTERLEAVE}%
}\hfil
\subfloat[Symb. Ops.: \lockstep{} vs. \interleave{}\label{fig:scatter-symOps-lockstep-interleave}]{%
    \includesvg[width=0.48\textwidth]{scatterPlotSymOps_LOCKSTEP_INTERLEAVE}%
}\hfil
\subfloat[Symb. Ops.: \basic{} vs. \interleave{}\label{fig:scatter-symOps-basic-interleave}]{%
    \includesvg[width=0.48\textwidth]{scatterPlotSymOps_BASIC_INTERLEAVE}%
}

\caption{Scatter plots of the runtime and number of symbolic operations}
\label{fig:scatter}
\end{figure}

\begin{figure}[t]
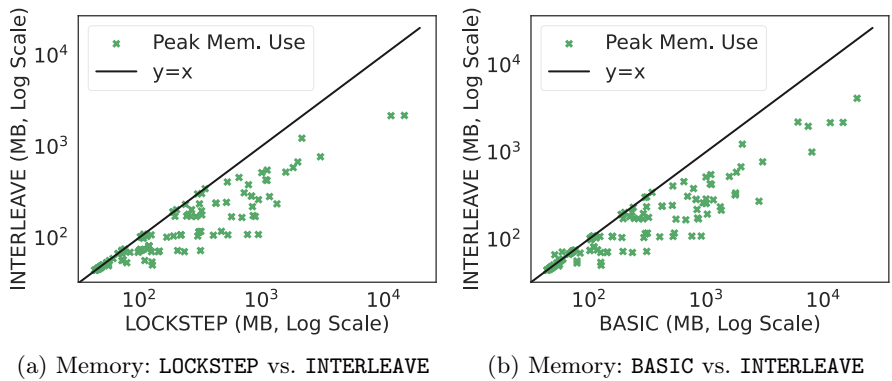

    \centering
    
    \subfloat[Memory: \lockstep{} vs. \interleave{}\label{fig:scatter-memory-lockstep-interleave}]{%
        \includesvg[width=0.48\textwidth]{scatterPlotMemory_LOCKSTEP_INTERLEAVE}%
    }\hfil
    \subfloat[Memory: \basic{} vs. \interleave{}\label{fig:scatter-memory-basic-interleave}]{%
        \includesvg[width=0.48\textwidth]{scatterPlotMemory_BASIC_INTERLEAVE}%
    }
    \caption{Scatter plots of the peak memory usage}
    \label{fig:scatter-memory}
\end{figure}

\begin{figure}[t]
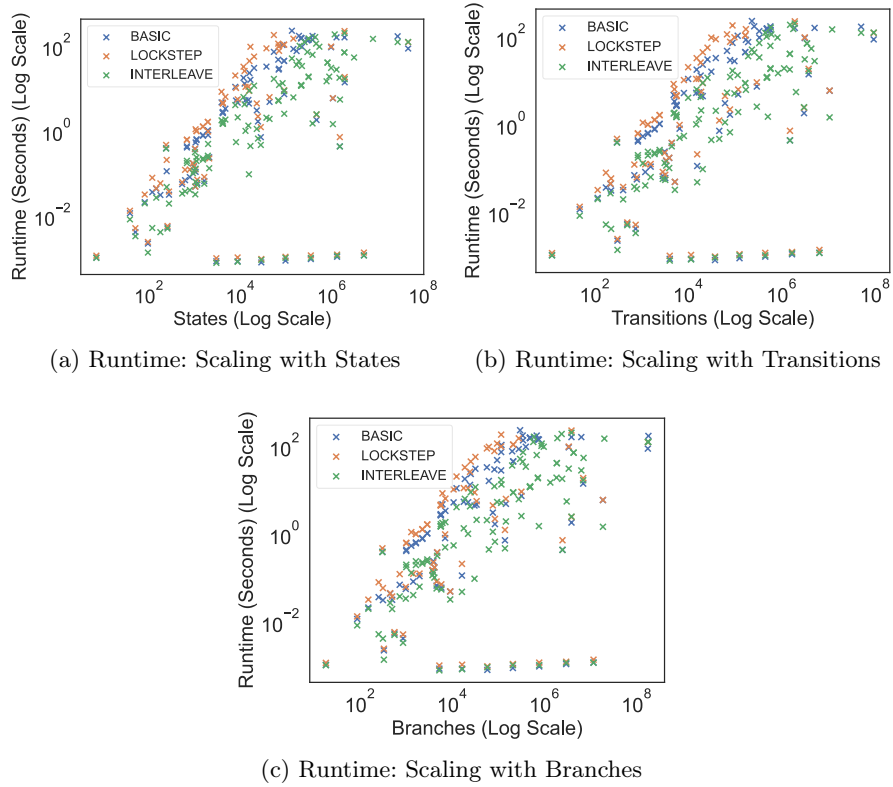

\centering

\subfloat[Runtime: Scaling with States\label{fig:scatter-runtime-states}]{%
    \includesvg[width=0.48\textwidth]{scatterPlotRuntimeVsStates}%
}\hfil
\subfloat[Runtime: Scaling with Transitions\label{fig:scatter-runtime-transitions}]{%
    \includesvg[width=0.48\textwidth]{scatterPlotRuntimeVsTransitions}%
}\hfil
\subfloat[Runtime: Scaling with Branches\label{fig:scatter-runtime-branches}]{%
    \includesvg[width=0.48\textwidth]{scatterPlotRuntimeVsBranches}%
}

\caption{Scatter plots of the runtime vs the number of states, transitions, branches}
\label{fig:scatter-scale}
\end{figure}

\paragraph{\interleave{} requires the fewest symbolic operations.} 
In stark contrast to the theoretical analysis, \interleave{} requires the fewest number of symbolic operations. 
 Recall from Table~\ref{tab:results} that \naive{} requires $\bigO(n^2)$ symbolic operations, which is identical to those required by \interleave{}, while $\lockstep{}$~\cite{lockstep} requires $\bigO(n\sqrt{m})$ symbolic operations only. 
 However, Figure~\ref{fig:quantile-symOps} shows that \interleave{} requires significantly fewer symbolic operations in practice. We attribute this to our interleaving approach that enables the elimination of several redundant operations that the other approaches would perform. This also explains the reduced runtime, as demonstrated in Figure~\ref{fig:quantile-runtime}, highlighting the practical advantages of our approach.   

\paragraph{\interleave{} uses the least memory.}
All three algorithms -- \basic{}, \lockstep{} and \interleave{} have the same worst-case symbolic space complexity ($\bigO(\log n)$).
However, Figure~\ref{fig:quantile-memory} shows that \interleave{}'s peak memory usage is significantly lower in practice.
This is an important advantage, as memory is often a bottleneck for symbolic algorithms.

\paragraph{\interleave{} outperforms on individual benchmarks.}
In order to compare tool performances on individual benchmarks, we analyze the scatter plots of \interleave{} vs \basic{} and \lockstep{} on runtime and the number of symbolic operations in Figure~\ref{fig:scatter}.

The runtime plots (Figure~\ref{fig:scatter-runtime-lockstep-interleave} and Figure~\ref{fig:scatter-runtime-basic-interleave}) are on the $168$ benchmarks that at least one algorithm solved within the timeout.
The marks on the vertical/horizontal red lines show benchmarks where one of the algorithms timed out and the other did not. As noted earlier, \interleave{} strictly solves more benchmarks than the others. Out of $168$, \interleave{} is slower than \lockstep{} on $2$ benchmarks and slower than \basic{} on $24$ benchmarks. On average, \interleave{} records an average speedup of 6.41 and 3.81 on the 128 and 149 benchmarks solved by \lockstep{} and \basic{}, respectively. The average speedup is computed as the mean of the inverse ratios of runtimes of both algorithms over all benchmarks which both solved.

The scatter plots for symbolic operations (Figure \ref{fig:scatter-symOps-lockstep-interleave} and Figure \ref{fig:scatter-symOps-basic-interleave}) are on all benchmarks that both algorithms solve within the timeout. The general trends are similar to those of runtime with \interleave{} performing fewer or the same number of operations on almost all benchmarks. 
This seems to correlate with the runtimes, though the difference in runtimes is more stark.

Similarly, Figure~\ref{fig:scatter-memory-lockstep-interleave} and Figure~\ref{fig:scatter-memory-basic-interleave} are on all benchmarks that both algorithms solve within the timeout. They show that \interleave{} has a lower or equal peak memory usage on almost all benchmarks.

\paragraph{\interleave{} scales to large MDPs.}%
\label{par:scale}
Figure \ref{fig:scatter-scale} shows scatter plots of the runtime of each algorithm vs. the number of states, transitions, and branches in the MDP. These are on the $110$ benchmarks (out of the $168$ which at least one algorithm solved within timeout) for which the number of states, transitions, and branches are available in the QVBS. \interleave{} scales with each of these dimensions better than the other two algorithms, illustrated by the points to the far right of each graph where only \interleave{} was able to finish within the timeout.

\section{Concluding Remarks}

This paper introduced  \interleave{} a novel symbolic algorithm for Maximal End Component (MEC) decomposition that interleaves the removal of redundant state-action pairs with SCC decomposition. By eliminating unnecessary computations early in the process, \interleave{} achieves significant practical performance improvements over existing state-of-the-art algorithms, as demonstrated by our empirical evaluation on 368 benchmarks from the Quantitative Verification Benchmark Set (QVBS). Despite having the same theoretical complexity as the prior state-of-the-art approach \basic{} algorithm, \interleave{} consistently outperforms it in runtime and symbolic operation counts, solving more benchmarks within the same time constraints.

Future work could explore further optimizations in the symbolic representation and decomposition process, particularly for larger and more complex MDPs. Additionally, integrating \interleave{} with other probabilistic model checking techniques, such as those for \(\omega\)-regular properties or learning-based approaches, could enhance its applicability in real-world verification tasks. Finally, investigating parallel or distributed implementations of \interleave{} could unlock further performance gains, especially for large-scale systems.

\begin{credits}
    \subsubsection{\discintname}
    The authors have no competing interests to declare that are
relevant to the content of this paper.
\end{credits}

\appendix
\renewcommand{\thesection}{Appendix~\arabic{section}}
 \section{Algorithm Details}%
 \label{sec:Appendix 1}

\subsection{$\ROut$ and $\Attr$ Algorithms}
\label{sec:alg:routattr}

We mention the computation of  $\ROut$ and $\Attr$ using $\exists$ and other basic set operations.

\algoH{\texttt{ROut}(U, V, E)}{algo:rout}
{A vertex set $U \subseteq V$ and a labelled graph $(V,E) = G(T,\pi)$ for some sub-MDP $(T,\pi)$ of some MDP $\mdp = (S,A,d_{\init},\delta)$.}
{The set $\ROut_{(T,\pi)}(U)$ of vertex-label pairs which can go outside $U$.}
{
    \State \Return $\{ (s,\alpha) \in V \times A \,\mid\, s \in U \,\land\, \exists\, s' \in V \,.\, (s' \not\in U \,\land\, (s,\alpha,s') \in E) \}$
    \State \Comment{If $U$ is given by a BDD $f_{U}(x_1,\dots,x_t)$ and $E$ is given by a BDD $t_E(x_1,\dots,x_t,u_1,\dots,u_l,x_1',\dots,x_t')$, then this can be implemented as $f_{U}(x_1, \dots, x_t) \,\land\, \exists\, \{x_1', \dots, x_t'\} \,.\, (\neg f_{U}(x_1', \dots, x_t') \,\land\, t_E(x_1, \dots, x_t, u_1, \dots, u_l, x_1', \dots, x_t'))$.}
}

\algoH{\texttt{Attr}(X, V, E)}{algo:attr}
{A state-action pair set $X \subseteq \sa(T,\pi)$ and a labelled graph $(V,E) = G(T,\pi)$ for some sub-MDP $(T,\pi)$ of some MDP $\mdp = (S,A,d_{\init},\delta)$.}
{The tuple $(U,X') = \Attr_{(T,\pi)}(X)$ where $U$ are the states and $X'$ are the state-action pairs in the attractor of $X$ in $(T,\pi)$.}
{
    \State $U_0, X_0 \gets \emptyset, \emptyset$
    \State $U_1, X_1 \gets \emptyset, X$
    
    \While{$U_1 \neq U_0$ or $X_1 \neq X_0$}
        \State $U_0, X_0 \gets U_1, X_1$
        \State \Comment{Add states in $V$ for whom all actions in the graph are in $X_0$.}
        \State $U_1 \gets U_0 \cup \{s \in V \,\mid\, \neg (\exists\, s' \in V, \alpha \in A \,.\, ((s,\alpha) \not\in X_0 \,\land\, (s,\alpha,s') \in E) ) \}$
        \State \Comment{Add state-action pairs which can reach states in $U_0$.}
        \State $X_1 \gets X_0 \cup \{ (s,\alpha) \in V \times A \,\mid\, \exists s' \in V \,.\, (s' \in U_0 \,\land\, (s,\alpha,s') \in E) \}$
    \EndWhile
    \State \Return $(U_1, X_1)$
}
\subsection{$\texttt{Fwd-NewVertex}(\{v\}, V, E)$ }

Algorithm \ref{algo:fwd-newvertex} gives the pseudocode for the $\texttt{Fwd-NewVertex}(\{v\}, V, E)$ function.
 
\algoH{\texttt{Fwd-NewVertex}$(\{v\}, V, E)$}{algo:fwd-newvertex}
{A vertex $v \in V$ and a labelled graph $G = (V,E)$.}
{The set of vertices in $V$ reachable from $v$ and a vertex at maximum distance from $v$ (in the graph $G$).}
{
    \State $F \gets \emptyset$
    \State $L_0 \gets \emptyset$
    \State $L_1\gets \{v\}$
    \While{$L_1 \neq \emptyset$}
        \State $F \gets F \cup L_1$
        \State $L_0 \gets L_1$
        \State $L_1 \gets \Post(L_1, (V,E)) \setminus F$
    \EndWhile
    \State \Return $F$, $\Pick(L_0)$
}
\section{Missing Proofs}%
\label{sec:Appendix 2}

\subsubsection*{Proof of Theorem \ref{thm:time}.}
    For an MDP $\mdp = (S, A, d_{\init}, \delta)$, and $(V,E) = G(\mdp)$, $\texttt{MEC-Decomp-Interleave}(V,E,v_{\arb{}})$ performs $\bigO(|S|^2 \cdot |A|)$ symbolic operations.

\begin{proof}
As mentioned before, we will be focusing on the number of $\Pre$/$\Post$ operations. Let $|\{(s,\alpha) \in S \times A \,\mid\, \alpha \in A[s] \}| = q$. Then, there are two kinds of $\Pre$/$\Post$ operations:

\begin{enumerate}
    \item Those in the \texttt{ROut} and \texttt{Attr} computations on lines \ref{line:attr1} and \ref{line:attr2}. Each such operation discovers at least one new state or state-action pair that is thereafter removed from the graph and never seen again. Thus, over the entire algorithm, the cost of these is $\bigO(|S|+q)$ symbolic operations.
    \item Those in the \texttt{SCC-Fwd-NewStart} computation. When given an input graph with $k$ vertices, these can be at most $2k$ operations. Now, if you look at the tree of recursive calls, the cost incurred at the top level is $c_1 = 2|S|$ (recall that $V = S$). At the second level, the cost incurred is $c_2 = 2 ( |V_1| + |V_2| + |V_3| )$. Since $V_1 \subseteq C _v$, $V_2 = F_v \setminus C _v$ and $V_3 \subseteq V \setminus F_v$, we have $c_2 \leq 2 ( |C _v| + |F_v \setminus C_v| + |V \setminus F_v| ) = 2 |V| = 2|S|$. So the cost for each level is at most $2|S|$. Now, how many levels can there be? In every recursive call, the number of states and/or the number of state-action pairs decreases by at least one. Therefore, there can be at most $(|S|+q)$ levels. This means that the number of symbolic operations done by \texttt{SCC-Fwd-NewStart} over the entire algorithm is $\bigO(|S|(|S|+q))$.
\end{enumerate}

    The total cost for the algorithm then becomes $\bigO(|S|(|S|+q))$ symbolic operations. As $q = \bigO(|S| \cdot |A|)$, this algorithm performs $\bigO(|S|^2 \cdot |A|)$ symbolic operations. $\blacksquare$
\end{proof}

\subsubsection*{Proof of Theorem \ref{thm:space}.}
    For an MDP $\mdp = (S, A, d_{\init}, \delta)$, and $(V,E) = G(\mdp)$, $\texttt{MEC-Decomp-Interleave}(V,E,v_{\arb{}})$ uses $\bigO(\log |S|)$ symbolic space.

\begin{proof}
This is the maximal number of sets that the algorithm stores simultaneously at any point of time. Note that each recursive call, excluding the space used by its children, only stores $\bigO(1)$ number of sets simultaneously. Thus, symbolic space is just the maximum depth of recursive calls reached throughout the algorithm.

To achieve efficiency here, we will use the fact that our algorithm is tail-recursive. After the first two recursive calls have returned, the memory stored by the current call can be deleted before making the third recursive call. So we just need to look at the maximum depth reached through the first two recursive calls. Now, the three recursive calls should be made in increasing order of the sizes of the vertex sets $|V_1|,|V_2|,|V_3|$. The sizes of vertex sets for both of the first two recursive calls must be $< 2 p / 3$. On both branches, the vertex set sizes decreases by at least $2/3$. Therefore, the maximum depth that can be reached is $\log_{\frac{3}{2}}{|S|}$. So this algorithm uses $\bigO(\log |S|)$ symbolic space. $\blacksquare$
\end{proof}

To prove the algorithm correct, we will show (1) Soundness (whenever it outputs an MEC, it is actually an MEC) and then (2) Correctness (it outputs all the MECs of the MDP it is called on). Before we show these, we will need to prove that every time we remove a state or state-action pair, we remove one that's not in any MEC. That is what the following series of lemmas is for.

\subsubsection*{Proof of Lemma \ref{lem:no-attr}.}
    Let $(T, \pi)$ be an MEC-closed sub-MDP of some MDP $\mathcal{M} = (S, A, d_{\init}, \delta)$ and $X \subseteq \sa(T,\pi)$ be a state-action pair set such that for all $(T', \pi') \in \MECs(\mathcal{M})$, $\sa(T', \pi') \cap X = \emptyset$. Suppose $\Attr_{(T,\pi)}(X) = (S', X')$. Then, for all $(T', \pi') \in \MECs(\mathcal{M})$, $T' \cap S' = \emptyset$ and $\sa(T', \pi') \cap X' = \emptyset$.
\begin{proof}
    Let $(T', \pi') \in \MECs(\mathcal{M})$. From the definition of $\Attr$, we have $(S',X') = \Attr_{(T,\pi)}(X) = \left( \bigcup_{i \in \nat}{S_i},  \bigcup_{i \in \nat}{X_i} \right)$. We will show by induction on $i$ that for all $i \in \nat$, $T' \cap S_i = \emptyset$ and $\sa(T',\pi') \cap X_i = \emptyset$. \\

    \textbf{Base Case ($i=0$):} By definition of $\Attr$, $S_0 = \emptyset$ and $X_0 = X$. The first implies $T' \cap S_0 = \emptyset$. The second, along with our assumption, implies $\sa(T',\pi') \cap X_0 = \emptyset$.

    \textbf{Induction Step:} We assume that $T' \cap S_i = \emptyset$ and $\sa(T',\pi') \cap X_i = \emptyset$.

    \begin{itemize}
        \item Assume, for the sake of contradiction, that $T' \cap S_{i+1} \neq \emptyset$. Pick $s \in T' \cap S_{i+1}$. Since $T' \cap S_i = \emptyset$ from the induction hypothesis, we must have $s \in T' \cap (S_{i+1} \setminus S_i)$. Note that we have $(T',\pi') = \MEC_{\mathcal{M}}(s)$. We have: 

        \begin{alignat}{2}
            & s \in T && \quad \because s \in (S_{i+1} \setminus S_i) \text{ and definition of $\Attr$ } \label{eq:1} \\
            & \forall\, \alpha \in \pi(s) \,.\, (s,\alpha) \in X_i && \quad \because s \in (S_{i+1} \setminus S_i) \text{ and definition of $\Attr$ } \label{eq:2} \\
            & (T',\pi') \subseteq (T,\pi) && \quad \because \text{ eq. \ref{eq:1},} (T',\pi') = \MEC_{\mathcal{M}}(s), (T,\pi) \text{ is MEC-closed} \label{eq:3}
        \end{alignat}

        Pick $\alpha \in \pi'(s)$ (exists by definition since $(T',\pi')$ is a sub-MDP). Then, $\alpha \in \pi(s)$ from equation \ref{eq:3}. This implies $(s,\alpha) \in X_i$ ($\because$ equation \ref{eq:2}). Note that since $s \in T'$ and $\alpha \in \pi'(s)$, we also have $(s,\alpha) \in \sa(T',\pi')$. But from the induction hypothesis, we have that $\sa(T',\pi') \cap X_i = \emptyset$, which is a contradiction. Therefore, we must have $T' \cap S_{i+1} = \emptyset$. 

        \item Assume, for the sake of contradiction, that $\sa(T',\pi') \cap X_{i+1} \neq \emptyset$. Pick $(s,\alpha) \in \sa(T',\pi') \cap X_{i+1}$. Since $\sa(T',\pi') \cap X_i = \emptyset$ from the induction hypothesis, we must have $(s,\alpha) \in \sa(T',\pi') \cap (X_{i+1} \setminus X_i)$. Since $(s,\alpha) \in (X_{i+1} \setminus X_i)$, the definition of $\Attr$ tells us that $\exists\, s' \in S_i \,.\, (\delta(s,\alpha,s') > 0)$. Since $(T',\pi')$ is a sub-MDP, $s \in T'$ and $\alpha \in \pi'(s)$, this implies $s' \in T'$. But from the induction hypothesis, $T' \cap S_i = \emptyset$, which is a contradiction. Therefore, we must have $\sa(T',\pi') \cap X_{i+1} = \emptyset$.
    \end{itemize}

    This completes the induction step, and the proof. $\blacksquare$
\end{proof}

For all the theorems and lemmas to come, let $(T, \pi)$ be an MEC-closed sub-MDP of some MDP $\mathcal{M} = (S,A,d_{\init},\delta)$ and $(V,E) = G(T,\pi)$. Consider the execution of \texttt{MEC-Decomp-Interleave}$(V,E \{v\})$ where $\{v\} = v_{\arb{}}$ or $v \in V$. If $\{v\} = v_{\arb{}}$, let $s \in V$ be the vertex picked on line 1. Otherwise, let $s = v$.

\subsubsection*{Proof of Lemma \ref{lem:u1-x1}.}
    For any $(T',\pi') \in \MECs(\mathcal{M})$, $T' \cap U_1 = \emptyset$ and $\sa(T',\pi') \cap X_1 = \emptyset$.
\begin{proof}
    We know, from the algorithm, that $(U_1, X_1) = \Attr_{(T,\pi)}(\ROut_{(T,\pi)}(C_s))$, where $C_s$ is the SCC in $(V,E)$ of $s$. First, we will show that for any $(T',\pi') \in \MECs(\mathcal{M})$, $\sa(T',\pi') \cap \ROut_{(T,\pi)}(C_s) = \emptyset$. Then, applying lemma \ref{lem:no-attr} proves the final result (note that the definition of $\ROut$ guarantees the other precondition of lemma \ref{lem:no-attr}, i.e., $\ROut_{(T,\pi)}(C_s) \subseteq \sa(T,\pi)$). So, assume, for the sake of contradiction, that there is some $(T',\pi') \in \MECs(\mathcal{M})$ such that $\sa(T',\pi') \cap \ROut_{(T,\pi)}(C_s) \neq \emptyset$.

    Pick $(s',\alpha) \in \sa(T',\pi') \cap \ROut_{(T,\pi)}(C_s)$. Since $(s',\alpha) \in \ROut_{(T,\pi)}(C_s)$, from the definition of $\ROut$, we have $t \in (T \setminus C_s)$ such that $\delta(s',\alpha,t) > 0$. As $(T',\pi')$ is a sub-MDP, $s' \in T'$ and $\alpha \in \pi'(s')$, this implies $t \in T'$. Now, $(T',\pi')$ is an end-component and $s',t \in T'$. So, we have that $s'$ and $t$ can reach each other in $(T',\pi')$. Since $s' \in T$ and $(T,\pi)$ is MEC-closed, we have $(T',\pi') = \MEC_{\mathcal{M}}(s') \subseteq (T,\pi)$. Thus $s'$ and $t$ can also reach each other in $(T,\pi)$. This implies that $t \in \SCC_{(V,E)}(s') = C_s$, which is a contradiction to the fact that $t \in T \setminus C_s$. Therefore, we must have that for all $(T',\pi') \in \MECs(\mathcal{M})$, $\sa(T',\pi') \cap \ROut_{(T,\pi)}(C_s) = \emptyset$. As argued before, applying lemma \ref{lem:no-attr} now proves the final result. $\blacksquare$
\end{proof}

Lemma \ref{lem:u2-x2} is the reason why we don't need to compute $\ROut$ before the second recursive call.

\subsubsection*{Proof of Lemma \ref{lem:u2-x2}.} 
    $\ROut_{(T,\pi)}(F_s \setminus C_s) = \emptyset$.
\begin{proof}
    Assume, for the sake of contradiction, that there is some $(s_1,\alpha) \in \ROut_{(T,\pi)}(F_s \setminus C_s)$. From the definition of $\ROut$, there is some $s_2 \in (T \setminus (F_s \setminus C_s)) = ((T \setminus F_s) \uplus C_s)$ such that $\delta(s_1, \alpha, s_2) > 0$. First note that $s_2 \in F_s$ since $s_1 \in F_s$ and there is an edge (labelled $\alpha$) from $s_1$ to $s_2$. So we must have $s_2 \in C_s$. But then, there is a path from $s_2$ to $s$, and thus from $s_1$ to $s$.
    Combined with $s_1 \in F_s$, this means $s_1 \in C_s$, contradicting the fact that $(s_1,\alpha) \in \ROut_{(T,\pi)}(F_s \setminus C_s)$. Therefore, we must have $\ROut_{(T,\pi)}(F_s \setminus C_s) = \emptyset$. $\blacksquare$
\end{proof}

\subsubsection*{Proof of Lemma \ref{lem:u3-x3}.}
    For any $(T',\pi') \in \MECs(\mathcal{M})$, $T' \cap U_3 = \emptyset$ and $\sa(T',\pi') \cap X_3 = \emptyset$.
\begin{proof}
    We know, from the algorithm, that $(U_3, X_3) = \Attr_{(T,\pi)}(\ROut_{(T,\pi)}(T \setminus F_s))$, where $F_s$ is the forward reachable set in $(V,E)$ of $s \in T$. First, we will show that for any $(T',\pi') \in \MECs(\mathcal{M})$, $\sa(T',\pi') \cap \ROut_{(T,\pi)}(T \setminus F_s) = \emptyset$. Then, applying lemma \ref{lem:no-attr} proves the final result (note that the definition of $\ROut$ guarantees the other precondition of lemma \ref{lem:no-attr}, i.e., $\ROut_{(T,\pi)}(T \setminus F_s) \subseteq \sa(T,\pi)$).

    So, assume, for the sake of contradiction, that there is some $(T',\pi') \in \MECs(\mathcal{M})$ such that $\sa(T',\pi') \cap \ROut_{(T,\pi)}(T \setminus F_s) \neq \emptyset$. Pick $(s_1,\alpha)$ in the intersection. From the definition of $\ROut$, we have $s_2 \in (T \setminus (T \setminus F_s)) = F_s$ such that $\delta(s_1,\alpha,s_2) > 0$. As $(T',\pi')$ is a sub-MDP, $s_1 \in T'$ and $\alpha \in \pi'(s_1)$, this implies $s_2 \in T'$. Now, $(T',\pi')$ is an end-component and $s_1,s_2 \in T'$. So, we have that $s_2$ can reach $s_1$ in $(T',\pi')$. Since $s_1 \in T$ and $(T,\pi)$ is MEC-closed, we have $(T',\pi') = \MEC_{\mathcal{M}}(s_1) \subseteq (T,\pi)$. Thus $s_2$ can also reach $s_1$ in $(T,\pi)$. Since $s_2 \in F_s$, this implies that $s_1 \in F_s$, which is a contradiction to the fact that $(s_1,\alpha) \in \ROut_{(T,\pi)}(T \setminus F_s)$. Therefore, we must have that for all $(T',\pi') \in \MECs(\mathcal{M})$, $\sa(T',\pi') \cap \ROut_{(T,\pi)}(T \setminus F_s) = \emptyset$. As argued before, applying lemma \ref{lem:no-attr} now proves the final result. $\blacksquare$
\end{proof}

Now that we have shown we never remove a "wrong" state or state-action pair, we are ready to prove soundness -- if we say something is an MEC, then it is an MEC.

\subsubsection*{Proof of Theorem \ref{thm:sound} (Soundness).}
    If $X_1 = \emptyset$, then $(C_s, E \cap (C_s \times A \times C_s)) = G(\MEC_{\mathcal{M}}(s))$.
\begin{proof}
    From the algorithm, we know $(U_1, X_1) = \Attr_{(T,\pi)}(\ROut_{(T,\pi)}(C_s))$. From the definition of $\Attr$, if $X_1 = \emptyset$, then we must have $\ROut_{(T,\pi)}(C_s) = \emptyset$. Define $\pi' = \pi \mid_{C_s}$, i.e., the function $\pi$ restricted to the domain $C_s$.

    \textbf{Claim 1:} $(C_s,\pi')$ is a sub-MDP of $\mathcal{M}$. \textbf{Proof:} This is true because $C_s \neq \emptyset$ (as $s \in C_s$) and for each $s' \in C_s \subseteq T$, $\pi'(s') = \pi(s')$. Since $(T,\pi)$ is a sub-MDP, this implies that $\emptyset \neq \pi'(s') \subseteq A[s']$. Lastly, assume there are $s_1 \in C_s, \alpha \in \pi'(s_1), s_2 \in S$ such that $\delta(s_1,\alpha,s_2) > 0$. Then, since $s_1 \in T, \pi'(s_1) = \pi(s_1)$ and $(T,\pi)$ is a sub-MDP, $s_2 \in T$. Now, if $s_2 \not\in C_s$, then we must have $(s_1,\alpha) \in \ROut_{(T,\pi)}(C_s)$, but since we don't, $s_2 \in C_s$.

    \textbf{Claim 2:} $(C_s, \pi')$ is an end-component. \textbf{Proof:} Since $C_s$ is an SCC, every $s_1 \in C_s$ can reach every $s_2 \in C_s$ using only vertices in $C_s$ and actions from their $\pi$ sets. Since the $\pi'$ sets for each vertex in $C_s$ are equal to their $\pi$ sets, all pairs of vertices in $C_s$ can reach other in $(C_s,\pi')$. The claim now follows from Claim 1.

    \textbf{Claim 3:} $(C_s,\pi') = \MEC_{\mathcal{M}}(s)$. \textbf{Proof:} Since $s \in T$ and $(T,\pi)$ is MEC-closed, we know $\MEC_{\mathcal{M}}(s) \subseteq (T,\pi)$. Suppose it is $(T_0,\pi_0)$. We first show that $(T_0,\pi_0) \subseteq (C_s,\pi')$. Take any $s_0 \in T_0$, there is a path from $s_0$ to $s$ (and vice versa) using states from $T_0$ and actions from their $\pi_0$ sets. Since $T_0 \subseteq T$ and $\pi_0(s') \subseteq \pi(s') \,\forall\, s' \in T_0$, $s_0$ and $s$ can reach each other in $(T,\pi)$. This implies that $s_0 \in C_s$. So $T_0 \subseteq C_s$. Next, for any $s' \in T_0$, we have $\pi_0(s') \subseteq \pi(s') = \pi'(s')$ since $s' \in C_s$. Thus, $(T_0,\pi_0) \subseteq (C_s, \pi')$. But both are end components and $(T_0,\pi_0)$ is maximal, therefore $(C_s,\pi')$ must be $\MEC_{\mathcal{M}}(s)$.

    \textbf{Claim 4:} $(C_s, E \cap (C_s \times A \times C_s)) = G(C_s,\pi')$. \textbf{Proof:}  We will show that $E \cap (C_s \times A \times C_s) = \{ (s_1,\alpha,s_2) \in C_s \times A \times C_s \,\mid\, \alpha \in \pi'(s_1) \,\land\, (\delta(s_1,\alpha,s_2) > 0) \}$. If $s_1 \in C_s, \alpha \in \pi'(s_1), s_2 \in C_s$ such that $\delta(s_1,\alpha,s_2) > 0$, then since $C_s \subseteq T$, $\pi'(s_1) = \pi(s_1)$, and $(V,E) = G(T,\pi)$, we must have $(s_1,\alpha,s_2) \in E$, and thus, $(s_1,\alpha,s_2) \in E \cap (C_s \times A \times C_s)$. On the other hand, if $(s_1,\alpha,s_2) \in E \cap (C_s \times A \times C_s)$, then since $(V,E) = G(T,\pi)$, we must have $\alpha \in \pi(s_1) = \pi'(s_1)$ and $\delta(s_1,\alpha,s_2) > 0$. $\blacksquare$
\end{proof}

We will next prove completeness by induction on the size of the sub-MDP. However, before we do that, we need to prove that all our recursive calls satisfy the preconditions that we require from our inputs.

\subsubsection*{Proof of Lemma \ref{lem:v1-e1}.}
     If $X_1 \neq \emptyset$ and $V_1 \neq \emptyset$, then $(V_1, E_1 \cap (V_1 \times A \times V_1)) = G(T_1,\pi_1)$ for some MEC-closed sub-MDP $(T_1,\pi_1)$.
\begin{proof}
    Define, for all $s' \in V_1$, $\pi_1(s') = \{ \alpha \in A \,\mid\, \exists t \in V_1 \,.\, (s',\alpha,t) \in E_1 \}$. It is clear from this definition that $(V_1, E_1 \cap (V_1 \times A \times V_1)) = G(V_1,\pi_1)$. Recall that $V_1 = C_s \setminus U_1$ and $E_1 = E \setminus (X_1 \times V)$, where $(U_1,X_1) = \Attr_{(T,\pi)}(\ROut(C_s, (T,\pi)))$.

    \textbf{Claim 1:} $(V_1,\pi_1)$ is a sub-MDP. \textbf{Proof:} $V_1 \neq \emptyset$ from assumption.

    \begin{itemize}
        \item For any state $s' \in V_1$, there must be some $\alpha \in \pi(s')$ such that $(s',\alpha) \not\in X_1$. This is because, if not, then, from the definition of $\Attr$, we will have $s' \in U_1$, contradicting the fact that $s' \in V_1 = C_s \setminus U_1$. Now, since $\alpha \in \pi(s')$, there is some $t \in T$ such that $(s',\alpha,t) \in E$. We have $(s',\alpha,t) \in E_1$ since $(s',\alpha) \not\in X_1$. If $t \in U_1$, then $(s',\alpha) \in X_1$ (definition of $\Attr$). If $t \in T \setminus C_s$, then $(s',\alpha) \in \ROut_{(T,\pi)}(C_s)$, and thus $(s',\alpha) \in X_1$. These contradict the fact that $(s',\alpha) \not\in X_1$. Thus, we must have $t \in V_1$. So $(s',\alpha,t) \in E_1 \cap (V_1 \times A \times V_1)$, and thus $\alpha \in \pi_1(s') \neq \emptyset$. We also have $\pi_1(s') \subseteq \pi(s') \subseteq A[s']$.

        \item Suppose $s' \in V_1, \alpha \in \pi_1(s'), t \in S$ such that $\delta(s',\alpha,t) > 0$. Since $\pi_1(s') \subseteq \pi(s')$, $t \in T$. If $t \in T \setminus C_s$ or $t \in U_1$, then $(s',\alpha) \in X_1$, which contradicts the fact that $\alpha \in \pi_1(s')$. Thus $t \in V_1$.
    \end{itemize}

    \textbf{Claim 2:} $(V_1,\pi_1)$ is MEC-closed. \textbf{Proof:} Take any $s_1 \in V_1$. We will show that $\MEC_{\mathcal{M}}(s_1) \subseteq (V_1,\pi_1)$. Suppose $\MEC_{\mathcal{M}}(s_1) = (T_{s_1}, \pi_{s_1})$. Since $(T,\pi)$ is MEC-closed and $s_1 \in V_1 \subseteq T$, $(T_{s_1}, \pi_{s_1}) \subseteq (T,\pi)$.

    \begin{itemize}
        \item Now, for any $s_2 \in T_{s_1}$, from the definition of an MEC, $s_1$ and $s_2$ can reach each other using only state-action pairs from $\sa(T_{s_1},\pi_{s_1})$. Since $\sa(T_{s_1}, \pi_{s_1}) \subseteq \sa(T,\pi)$, it follows that $s_1,s_2$ are in the same SCC (i.e., $C_s$). Now, if $s_2 \not\in V_1$, then $s_2 \in U_1$, but then, from lemma \ref{lem:u1-x1}, it cannot be in any MEC. Thus $s_2 \in V_1$. This shows that $T_{s_1} \subseteq V_1$.
        \item Take any $s_2 \in T_{s_1}$ and $\alpha \in \pi_{s_1}(s_2)$. Since $(T_{s_1}, \pi_{s_1})$ is a sub-MDP, there exists $t \in T_{s_1} \subseteq V_1$ such that $\delta(s_2,\alpha,t) > 0$. Further, $(s_2,\alpha,t) \in E_1$ because if not, then $(s_2,\alpha) \in X_1$ which is not possible due to lemma \ref{lem:u1-x1}. The previous two assertions imply that $\alpha \in \pi_1(s_2)$. Thus, $\pi_{s_1}(s_2) \subseteq \pi_1(s_2)$ for all $s_2 \in T_{s_1}$.
    \end{itemize} $\blacksquare$
\end{proof}

\subsubsection*{Proof of Lemma \ref{lem:v2-e2}.}
     If $V_2 \neq \emptyset$, then $(V_2, E \cap (V_2 \times A \times V_2)) = G(T_2,\pi_2)$ for some MEC-closed sub-MDP $(T_2,\pi_2)$.
\begin{proof}
    Recall that $V_2 = F_s \setminus C_s$. Define, for all $s' \in V_2$, $\pi_2(s') = \{ \alpha \in A \,\mid\, \exists t \in V_2 \,.\, ((s',\alpha,t) \in E) \}$. It is clear from this definition that $(V_2,\pi_2) = G(V_2, E \cap (V_2 \times A \times V_2))$.

    \textbf{Claim 1:} $(V_2,\pi_2)$ is a sub-MDP. \textbf{Proof:} $V_2 \neq \emptyset$ by assumption.

    \begin{itemize}
        \item For any $s' \in V_2$, take any $\alpha \in \pi(s')$ (exists because $(T,\pi)$ is a sub-MDP). From the definition of a sub-MDP, there is $t \in T$ such that $\delta(s',\alpha,t) > 0$. If $t \not\in V_2$, then $(s',\alpha) \in \ROut_{(T,\pi)}(F_s \setminus C_s)$, but we know that it is empty (lemma \ref{lem:u2-x2}). Thus $t \in V_2$, implying $\alpha \in \pi_2(s') \neq \emptyset$. Also, $\pi_2(s') \subseteq \pi(s') \subseteq A[s']$.

        \item Let $s' \in V_2, \alpha \in \pi_2(s'), t \in S$ such that $\delta(s',\alpha,t) > 0$. Since $\pi_2(s') \subseteq \pi(s')$ and $(T,\pi)$ is a sub-MDP, $t \in T$. If $t \not\in V_2 = F_s \setminus C_s$, then $(s',\alpha) \in \ROut_{(T,\pi)}(F_s \setminus C_s)$, but we know that it is empty (lemma \ref{lem:u2-x2}). Thus, $t \in V_2$.
    \end{itemize}

    \textbf{Claim 2:} $(V_2,\pi_2)$ is MEC-closed. \textbf{Proof:} Take any $s_1 \in V_2$. Suppose $\MEC_{\mathcal{M}}(s_1) = (T_{s_1}, \pi_{s_1})$. We will show that $(T_{s_1}, \pi_{s_1}) \subseteq (V_2,\pi_2)$. Since $(T,\pi)$ is MEC-closed and $s_1 \in V_2 \subseteq T$, $(T_{s_1}, \pi_{s_1}) \subseteq (T,\pi)$.

    \begin{itemize}
        \item Let $s_2 \in T_{s_1}$. From the definition of an MEC, $s_1$ can reach $s_2$ in $(T_{s_1}, \pi_{s_1})$. Since $(T_{s_1}, \pi_{s_1}) \subseteq (T,\pi)$, $s_1$ can reach $s_2$ in $(T,\pi)$. Since $s_1 \in F_s$, this implies $s_2 \in F_s$. Further, if $s_2 \in C_s$, then $s_2$ can reach $s$, so $s_1$ can reach $s$, contradicting the fact that $s_1 \in F_s \setminus C_s$. Thus, $s_2 \in F_s \setminus C_s$. Since $s_2$ was arbitrary, we get $T_{s_1} \subseteq V_2$.

        \item Let $s_2 \in T_{s_1}, \alpha \in \pi_{s_1}(s_2)$. Since $(T_{s_1}, \pi_{s_1})$ is a sub-MDP, there is $t \in T_{s_1}$ such that $\delta(s_2,\alpha,t) > 0$. As $T_{s_1} \subseteq V_2$, we have $t \in V_2$. Then, by definition, $\alpha \in \pi_2(s_2)$. Thus, $\pi_{s_1}(s_2) \subseteq \pi_2(s_2)$ for all $s_2 \in T_{s_1}$.
    \end{itemize} $\blacksquare$
\end{proof}

\subsubsection*{Proof of Lemma \ref{lem:v3-e3}.}
     If $V_3 \neq \emptyset$, then $(V_3, E_3 \cap (V_3 \times A \times V_3)) = G(T_3,\pi_3)$ for some MEC-closed sub-MDP $(T_3,\pi_3)$.
\begin{proof}
    Define, for all $s' \in V_3$, $\pi_3(s') = \{ \alpha \in A \,\mid\, \exists t \in V_3 \,.\, (s',\alpha,t) \in E_3 \}$. It is clear from this definition that $(V_3, E_3 \cap (V_3 \times A \times V_3)) = G(V_3,\pi_3)$.Recall that $V_3 = (V \setminus F_s) \setminus U_3$ and $E_3 = E \setminus (X_3 \times V)$, where $(U_3,X_3) = \Attr_{(T,\pi)}(\ROut((V \setminus F_s), (T,\pi)))$.

    \textbf{Claim 1:} $(V_3,\pi_3)$ is a sub-MDP. \textbf{Proof:} $V_3 \neq \emptyset$ from assumption.

    \begin{itemize}
        \item For any state $s' \in V_3$, there must be some $\alpha \in \pi(s')$ such that $(s',\alpha) \not\in X_3$. This is because, if not, then, from the definition of $\Attr$, we will have $s' \in U_3$, contradicting the fact that $s' \in V_3 = (V \setminus F_s) \setminus U_3$. Now, since $\alpha \in \pi(s')$, there is some $t \in T$ such that $(s',\alpha,t) \in E$. We have $(s',\alpha,t) \in E_3$ since $(s',\alpha) \not\in X_3$. If $t \in U_3$, then $(s',\alpha) \in X_3$ (definition of $\Attr$). If $t \in F_s$, then $(s',\alpha) \in \ROut_{(T,\pi)}(V \setminus F_s)$, and thus $(s',\alpha) \in X_3$. These contradict the fact that $(s',\alpha) \not\in X_3$. Thus, we must have $t \in V_3$. So $(s',\alpha,t) \in E_3 \cap (V_3 \times A \times V_3)$, and thus $\alpha \in \pi_3(s') \neq \emptyset$. We also have $\pi_3(s') \subseteq \pi(s') \subseteq A[s']$.

        \item Suppose $s' \in V_3, \alpha \in \pi_3(s'), t \in S$ such that $\delta(s',\alpha,t) > 0$. Since $\pi_3(s') \subseteq \pi(s')$, $t \in T$. If $t \in F_s$ or $t \in U_3$, then $(s',\alpha) \in X_3$, which contradicts the fact that $\alpha \in \pi_3(s')$. Thus $t \in V_3$.
    \end{itemize}

    \textbf{Claim 2:} $(V_3,\pi_3)$ is MEC-closed. \textbf{Proof:} Take any $s_1 \in V_3$. Suppose $\MEC_{\mathcal{M}}(s_1) = (T_{s_1}, \pi_{s_1})$. We will show that $(T_{s_1}, \pi_{s_1}) \subseteq (V_3,\pi_3)$. Since $(T,\pi)$ is MEC-closed and $s_1 \in V_3 \subseteq T$, $(T_{s_1}, \pi_{s_1}) \subseteq (T,\pi)$.

    \begin{itemize}
        \item Now, for any $s_2 \in T_{s_1}$, from the definition of an MEC, $s_2$ can reach $s_1$ in $(T_{s_1}, \pi_{s_1})$. Since $(T_{s_1}, \pi_{s_1}) \subseteq (T,\pi)$, $s_2$ can also reach $s_1$ in $(T,\pi)$. So, since $s_1 \in V \setminus F_s$, we have $s_2 \in V \setminus F_s$.  Now, if $s_2 \in U_3$, then, from lemma \ref{lem:u3-x3}, it cannot be in any MEC. Thus $s_2 \in (V \setminus F_s) \setminus U_3 = V_3$. This shows that $T_{s_1} \subseteq V_3$.

        \item Take any $s_2 \in T_{s_1}$ and $\alpha \in \pi_{s_1}(s_2)$. Since $(T_{s_1}, \pi_{s_1})$ is a sub-MDP, there exists $t \in T_{s_1} \subseteq V_3$ such that $\delta(s_2,\alpha,t) > 0$. Further, $(s_2,\alpha,t) \in E_3$ because if not, then $(s_2,\alpha) \in X_3$ which is not possible due to lemma \ref{lem:u3-x3}. The previous two assertions imply that $\alpha \in \pi_3(s_2)$. Thus, $\pi_{s_1}(s_2) \subseteq \pi_3(s_2)$ for all $s_2 \in T_{s_1}$.
    \end{itemize} $\blacksquare$
\end{proof}

Finally, we are ready to prove the correctness of our algorithm.

\subsubsection*{Proof of Theorem \ref{thm:correctness} (Correctness).}
    Let $\mathcal{M} = (S, A, d_{\init}, \delta)$ be an MDP, $(T,\pi)$ be an MEC-closed sub-MDP of $\mathcal{M}$ and $G(T,\pi) = (V,E)$. Then, \texttt{MEC-Decomp-Interleave}$(V,E \{v\})$ where $\{v\} = v_{\arb{}}$ or $v \in V$ outputs the graphs of all MECs in $\MECs_{\mathcal{M}}(T)$.
\begin{proof}
    Proof is by strong induction on the number of edges in $G(T,\pi) = (V,E)$. Note that from the definitions of a sub-MDP and its graph, $V$ must have at least one vertex and each vertex must have at least one outgoing edge.

    \textbf{Base Case ($|E|=1$)}: If $\{v\} = v_{\arb{}}$, let $s \in V$ be the vertex picked on line 1. In this case, since every vertex has at least one outgoing edge, $V$ must have just one vertex. So $(V,E) = (\{s\}, \{(s,\alpha,s)\})$ and $(T,\pi) = (\{s\}, s \mapsto \{\alpha\})$.

    We have $F_s = C_s = \{s\}$, so $\ROut_{(T,\pi)}(C_s)$, from the definition of ROut, is equal to $\emptyset$ (note that we must have $\delta(s,\alpha,s) = 1$ since $(T,\pi)$ is a sub-MDP). Then, $\Attr_{(T,\pi)}(\emptyset)$ is equal to $\emptyset$ (definition of Attr). So our algorithm will output $(C_s, E \cap (C_s \times A \times C_s)) = (\{s\}, \{(s,\alpha,s)\})$ as an MEC. This is indeed the MEC of $s$ (also the only MEC in $(T,\pi)$) because $(T,\pi)$ is MEC-closed, i.e., $\MEC_{\mathcal{M}}(s) \subseteq (T,\pi)$, and an MEC must have at least one state and state-action pair. Note also that since $V \setminus F_s = \emptyset$ and $F_s \setminus C_s = \emptyset$, there will be no recursive calls.

    \textbf{Induction Hypothesis:} For some $k \in \nat$, for all sub-MDPs $(T,\pi)$ with $G(T,\pi) = (V_{(T,\pi)}, E_{(T,\pi)})$ and $|E_{(T,\pi)}| \leq k$, \texttt{MEC-Decomp-Interleave}$(V_{(T,\pi)}, E_{(T,\pi)}, \{v\})$ with $v \in V_{(T,\pi)}$ or $\{v\} = v_{\arb{}}$ outputs the graphs of all MECs of $(T,\pi)$.

    \textbf{Induction Step ($|E| = k+1$):} Let $(V,E) = G(T,\pi)$ and $|E| = k+1$. Consider the call $\texttt{MEC-Decomp-Interleave}(V,E,\{v\})$. If $\{v\} = v_{\arb{}}$, let $s \in V$ be the vertex picked on line 1. Otherwise, let $s = v$. To simplify this proof, assume that a call \texttt{MEC-Decomp-Interleave}$(V',E', \{v'\})$ with $V' = \emptyset$ doesn't output anything and simply returns (this preserves the behaviour of the algorithm since such a call is never made).

    \begin{enumerate}
        \item If $X_1 = \emptyset$, then the algorithm outputs $(C_s, E \cap (C_s \times A \times C_s))$, which from theorem \ref{thm:sound}, is $G(\MEC_{\mathcal{M}}(s))$. If $X_1 \neq \emptyset$, then the recursive call $\texttt{MEC-Decomp-Interleave}(V_1, E_1 \cap (V_1 \times A \times V_1), v_{\arb{}})$ is made. From the induction hypothesis (can apply due to lemma \ref{lem:v1-e1}), it outputs $\{G(\MEC_{\mathcal{M}}(s')) \,\mid\, s' \in V_1 \}$. Since $V_1 = C_s \setminus U_1$ and no state in $U_1$ has an MEC (lemma \ref{lem:u1-x1}), this is equal to $\{ G(\MEC_{\mathcal{M}}(s')) \,\mid\, s' \in C_s \}$. Thus the algorithm outputs $\{ G(\MEC_{\mathcal{M}}(s')) \,\mid\, s' \in C_s \}$ in both cases.
        
        \item From the induction hypothesis (can apply due to lemma \ref{lem:v2-e2}), the recursive call $\texttt{MEC-Decomp-Interleave}(V_2, E \cap (V_2 \times A \times V_2), \{v'\})$ outputs $\{ G(\MEC_{\mathcal{M}}(s')) \,\mid\, s' \in V_2 \}$. We know $V_2 = F_s \setminus C_s$. Thus the algorithm outputs $\{ G(\MEC_{\mathcal{M}}(s')) \,\mid\, s' \in F_s \setminus C_s \}$.

        \item From the induction hypothesis (can apply due to lemma \ref{lem:v3-e3}), the recursive call $\texttt{MEC-Decomp-Interleave}(V_3, E_3 \cap (V_3 \times A \times V_3), v_{\arb{}})$ outputs $\{ G(\MEC_{\mathcal{M}}(s')) \,\mid\, s' \in V_3 \}$. From lemma \ref{lem:u3-x3}, no state in $U_3$ is in an MEC, thus since $V_3 = (V \setminus F_s) \setminus U_3$, this is equal to $\{ G(\MEC_{\mathcal{M}}(s')) \,\mid\, s' \in (V \setminus F_s) \}$. So, the algorithm outputs $\{ G(\MEC_{\mathcal{M}}(s')) \,\mid\, s' \in V \setminus F_s \}$.
    \end{enumerate}

    Putting the above three points together, the algorithm outputs $\{ G(\MEC_{\mathcal{M}}(s')) \,\mid\, s' \in V \}$. Since $T = V$, this completes the proof of the theorem. $\blacksquare$
\end{proof}
\section{Experimental Results}%
\label{sec:Appendix 3}

\subsection{Table of Runtimes}
\label{sub:table-of-runtimes}

Table \ref{tab:full-runtimes} shows the runtimes of each algorithm (\basic{}, \lockstep{} and \interleave{}) on all $368$ benchmarks. When an algorithm timed out, the corresponding cell says ``TO''. When there was a memory error, the corresponding cell says ``ME''.

\begin{longtable}{|p{5.9cm}|p{1.5cm}|p{2cm}|p{2.3cm}|}
\caption{Full Table of Runtimes on $368$ QVBS Benchmarks}\label{tab:full-runtimes} \\
\hline
Model & \basic{}~(s) & \lockstep{}~(s) & \interleave{}~(s) \\
\hline
\endfirsthead

\hline
Model & \basic{}~(s) & \lockstep{}~(s) & \interleave{}~(s) \\
\hline
\endhead

\hline
\endfoot

\hline
\endlastfoot

beb.3-4-3.LineSeized & $3.36778$ & $6.40488$ & $1.86619$ \\
beb.4-8-7.LineSeized & TO & TO & TO \\
beb.5-16-15.LineSeized & TO & TO & TO \\
beb.6-16-15.LineSeized & TO & TO & TO \\
bitcoin-attack.20-6.T\_MWinMin & $0.00543$ & $0.00583$ & $0.00505$ \\
blocksworld.10.goal & TO & TO & TO \\
blocksworld.14.goal & TO & TO & TO \\
blocksworld.18.goal & TO & TO & TO \\
blocksworld.5.goal & $0.05998$ & $0.07143$ & $0.05462$ \\
boxworld.10-10.goal & TO & TO & TO \\
boxworld.10-5.goal & TO & TO & TO \\
boxworld.15-10.goal & TO & TO & TO \\
boxworld.15-15.goal & TO & TO & TO \\
boxworld.20-20.goal & TO & TO & TO \\
cabinets.2-1-false.Unreliability & $0.98188$ & $2.12682$ & $0.83204$ \\
cabinets.2-1-true.Unreliability & $4.28990$ & $15.20884$ & $6.11342$ \\
cabinets.2-2-false.Unreliability & $33.72824$ & $103.28698$ & $18.41486$ \\
cabinets.2-2-true.Unreliability & $41.46359$ & $139.65007$ & $46.46606$ \\
cabinets.2-3-false.Unreliability & TO & TO & $129.14507$ \\
cabinets.2-3-true.Unreliability & TO & TO & TO \\
cabinets.3-1-false.Unreliability & $6.50806$ & $15.41534$ & $6.45539$ \\
cabinets.3-1-true.Unreliability & $28.90242$ & $109.84593$ & $42.40428$ \\
cabinets.3-2-false.Unreliability & TO & TO & TO \\
cabinets.3-2-true.Unreliability & TO & TO & TO \\
cabinets.3-3-false.Unreliability & TO & TO & TO \\
cabinets.3-3-true.Unreliability & TO & TO & TO \\
cabinets.4-1-false.Unreliability & $59.38801$ & TO & $131.58545$ \\
cabinets.4-1-true.Unreliability & TO & TO & TO \\
cabinets.4-2-false.Unreliability & TO & TO & TO \\
cabinets.4-2-true.Unreliability & TO & TO & TO \\
cabinets.4-3-false.Unreliability & TO & TO & TO \\
cabinets.4-3-true.Unreliability & TO & TO & TO \\
cdrive.10.goal & TO & TO & TO \\
cdrive.2.goal & $0.01162$ & $0.01330$ & $0.00832$ \\
cdrive.3.goal & $0.04393$ & $0.04500$ & $0.02701$ \\
cdrive.6.goal & TO & TO & TO \\
consensus.10-2.c1 & TO & TO & TO \\
consensus.2-16.c1 & $0.24110$ & $0.22628$ & $0.13320$ \\
consensus.2-2.c1 & $0.03178$ & $0.03768$ & $0.01960$ \\
consensus.2-4.c1 & $0.05614$ & $0.06239$ & $0.03382$ \\
consensus.2-8.c1 & $0.10954$ & $0.12565$ & $0.06321$ \\
consensus.4-2.c1 & $3.06851$ & $4.46903$ & $1.12455$ \\
consensus.4-4.c1 & $4.86226$ & $6.29285$ & $1.92993$ \\
consensus.6-2.c1 & $160.75118$ & TO & $28.01302$ \\
consensus.8-2.c1 & TO & TO & TO \\
csma.2-2.all\_before\_max & $0.53996$ & $1.10417$ & $0.13053$ \\
csma.2-4.all\_before\_max & $5.24277$ & $10.35857$ & $0.54951$ \\
csma.2-6.all\_before\_max & $48.03819$ & $114.58396$ & $5.34654$ \\
csma.3-2.all\_before\_max & $32.94315$ & $88.76712$ & $2.06696$ \\
csma.3-4.all\_before\_max & TO & TO & $18.04564$ \\
csma.3-6.all\_before\_max & TO & TO & TO \\
csma.4-2.all\_before\_max & TO & TO & $14.86889$ \\
csma.4-4.all\_before\_max & TO & TO & TO \\
csma.4-6.all\_before\_max & TO & TO & TO \\
dpm.4-4-25.PminQueuesFull & $3.50088$ & $4.43740$ & $1.35601$ \\
dpm.4-4-5.PminQueuesFull & $3.71325$ & $4.64626$ & $1.46462$ \\
dpm.4-6-100.PminQueuesFull & $10.00008$ & $12.67635$ & $3.81613$ \\
dpm.4-6-25.PminQueuesFull & $9.85445$ & $12.75386$ & $3.81365$ \\
dpm.4-6-50.PminQueuesFull & $9.67057$ & $12.87018$ & $3.80224$ \\
dpm.4-8-100.PminQueuesFull & $24.80290$ & $30.20676$ & $9.20930$ \\
dpm.4-8-25.PminQueuesFull & $25.15920$ & $30.27026$ & $9.15514$ \\
dpm.4-8-5.PminQueuesFull & $25.54233$ & $30.52951$ & $9.09870$ \\
dpm.6-4-5.PminQueuesFull & $187.70899$ & TO & $68.51318$ \\
dpm.6-6-5.PminQueuesFull & TO & TO & TO \\
dpm.6-8-5.PminQueuesFull & TO & TO & TO \\
dpm.8-4-5.PminQueuesFull & TO & TO & TO \\
dpm.8-6-5.PminQueuesFull & TO & TO & TO \\
dpm.8-8-5.PminQueuesFull & TO & TO & TO \\
eajs.2-100-5.ExpUtil & $5.28789$ & $10.97765$ & $1.00996$ \\
eajs.3-150-7.ExpUtil & $85.78058$ & $150.25044$ & $6.20863$ \\
eajs.4-200-9.ExpUtil & TO & TO & $19.17800$ \\
eajs.5-250-11.ExpUtil & TO & TO & $54.07435$ \\
eajs.6-300-13.ExpUtil & TO & TO & $148.31996$ \\
elevators.a-11-9.goal & TO & TO & TO \\
elevators.a-3-3.goal & $0.17290$ & $0.17536$ & $0.10180$ \\
elevators.b-11-9.goal & TO & TO & TO \\
elevators.b-3-3.goal & $0.37680$ & $0.38474$ & $0.31920$ \\
erlang.10-10-5.PminReach & $0.00535$ & $0.00987$ & $0.00390$ \\
erlang.5000-10-5.PminReach & $3.30679$ & $5.39013$ & $1.67957$ \\
erlang.5000-100-5.PminReach & $3.33009$ & $5.02611$ & $1.66271$ \\
erlang.5000-100-50.PminReach & $3.27364$ & $5.44527$ & $1.70947$ \\
exploding-blocksworld.10.goal & TO & TO & TO \\
exploding-blocksworld.15.goal & ME & ME & ME \\
exploding-blocksworld.17.goal & TO & TO & TO \\
exploding-blocksworld.5.goal & TO & TO & TO \\
firewire.false-3-200.elected & $2.82819$ & $4.47028$ & $1.48567$ \\
firewire.false-3-400.elected & $2.69889$ & $4.67851$ & $1.67104$ \\
firewire.false-3-600.elected & $2.76699$ & $4.52898$ & $1.52137$ \\
firewire.false-3-800.elected & $2.65113$ & $4.47663$ & $1.49119$ \\
firewire.false-36-200.elected & $156.96403$ & TO & $11.80093$ \\
firewire.false-36-400.elected & $153.45023$ & TO & $12.23500$ \\
firewire.false-36-600.elected & $154.86720$ & TO & $11.81098$ \\
firewire.false-36-800.elected & $152.64692$ & TO & $11.76852$ \\
firewire.true-3-200.elected & $106.41678$ & TO & $13.42700$ \\
firewire.true-3-400.elected & TO & TO & $163.63802$ \\
firewire.true-3-600.elected & TO & TO & TO \\
firewire.true-3-800.elected & TO & TO & TO \\
firewire.true-36-200.elected & TO & TO & TO \\
firewire.true-36-400.elected & TO & TO & TO \\
firewire.true-36-600.elected & TO & TO & TO \\
firewire.true-36-800.elected & TO & TO & TO \\
firewire\_abst.3.elected & $0.06981$ & $0.12212$ & $0.03799$ \\
firewire\_abst.36.elected & $0.08255$ & $0.12411$ & $0.04073$ \\
firewire\_dl.3-200.deadline & $5.41774$ & $11.11125$ & $1.34614$ \\
firewire\_dl.3-400.deadline & $29.50724$ & $77.38792$ & $12.20087$ \\
firewire\_dl.3-600.deadline & $76.46746$ & TO & $36.36082$ \\
firewire\_dl.3-800.deadline & $135.68837$ & TO & $72.70215$ \\
firewire\_dl.36-200.deadline & $28.01186$ & $98.22866$ & $5.12248$ \\
firewire\_dl.36-400.deadline & $96.76437$ & TO & $34.04203$ \\
firewire\_dl.36-600.deadline & $174.26305$ & TO & $62.64702$ \\
firewire\_dl.36-800.deadline & TO & TO & $97.00497$ \\
flexible-manufacturing.21-1.M2Fail\_S & $2.37270$ & $2.42908$ & $2.35371$ \\
flexible-manufacturing.3-1.M2Fail\_S & $0.04298$ & $0.04033$ & $0.02787$ \\
flexible-manufacturing.9-1.M2Fail\_S & $0.33784$ & $0.33256$ & $0.31237$ \\
ftpp.1-1-false.Unreliability & $4.50287$ & $8.64948$ & $3.56815$ \\
ftpp.1-1-true.Unreliability & $17.89922$ & $59.58795$ & $18.78480$ \\
ftpp.1-2-false.Unreliability & $169.80257$ & TO & $81.20645$ \\
ftpp.1-2-true.Unreliability & TO & TO & $152.94060$ \\
ftpp.2-1-false.Unreliability & TO & TO & TO \\
ftpp.2-1-true.Unreliability & TO & TO & TO \\
ftpp.2-2-false.Unreliability & TO & TO & TO \\
ftpp.2-2-true.Unreliability & TO & TO & TO \\
ftpp.3-1-false.Unreliability & TO & TO & TO \\
ftpp.3-1-true.Unreliability & TO & TO & TO \\
ftpp.3-2-false.Unreliability & TO & TO & TO \\
ftpp.3-2-true.Unreliability & TO & TO & TO \\
ftpp.4-1-false.Unreliability & TO & TO & TO \\
ftpp.4-1-true.Unreliability & TO & TO & TO \\
ftpp.4-2-false.Unreliability & TO & TO & TO \\
ftpp.4-2-true.Unreliability & TO & TO & TO \\
hecs.false-1-1.Unreliability & $13.42223$ & $33.20557$ & $17.61795$ \\
hecs.false-2-1.Unreliability & TO & TO & TO \\
hecs.false-2-2.Unreliability & TO & TO & TO \\
hecs.false-3-1.Unreliability & TO & TO & TO \\
hecs.false-3-2.Unreliability & TO & TO & TO \\
hecs.false-3-3.Unreliability & TO & TO & TO \\
hecs.false-4-1.Unreliability & TO & TO & TO \\
hecs.false-4-2.Unreliability & TO & TO & TO \\
hecs.false-4-3.Unreliability & TO & TO & TO \\
hecs.false-4-4.Unreliability & TO & TO & TO \\
hecs.false-5-1.Unreliability & TO & TO & TO \\
hecs.false-5-5.Unreliability & TO & TO & TO \\
hecs.false-6-1.Unreliability & TO & TO & TO \\
hecs.false-6-6.Unreliability & TO & TO & TO \\
hecs.false-7-1.Unreliability & TO & TO & TO \\
hecs.false-7-7.Unreliability & TO & TO & TO \\
hecs.false-8-1.Unreliability & TO & TO & TO \\
hecs.false-8-8.Unreliability & TO & TO & TO \\
hecs.true-1-1.Unreliability & $92.73001$ & TO & $212.00548$ \\
hecs.true-2-1.Unreliability & TO & TO & TO \\
hecs.true-2-2.Unreliability & TO & TO & TO \\
hecs.true-3-1.Unreliability & ME & ME & TO \\
hecs.true-3-2.Unreliability & ME & ME & ME \\
hecs.true-3-3.Unreliability & TO & TO & TO \\
hecs.true-4-1.Unreliability & TO & TO & TO \\
hecs.true-4-2.Unreliability & TO & TO & TO \\
hecs.true-4-3.Unreliability & TO & TO & TO \\
hecs.true-4-4.Unreliability & TO & TO & TO \\
ij.10.stable & $0.04909$ & $0.04935$ & $0.03231$ \\
ij.20.stable & $5.90492$ & $5.97483$ & $1.45773$ \\
ij.3.stable & $0.00105$ & $0.00117$ & $0.00103$ \\
ij.30.stable & TO & TO & TO \\
ij.40.stable & TO & TO & TO \\
ij.50.stable & TO & TO & TO \\
jobs.10-3.completiontime & $16.12021$ & $96.61028$ & $0.44485$ \\
jobs.15-3.completiontime & TO & TO & TO \\
jobs.5-2.completiontime & $0.03654$ & $0.07952$ & $0.00522$ \\
mcs.1-1-10-false.Unreliability & TO & TO & TO \\
mcs.1-1-10-true.Unreliability & TO & TO & TO \\
mcs.1-1-11-false.Unreliability & TO & TO & TO \\
mcs.1-1-11-true.Unreliability & TO & TO & TO \\
mcs.1-1-12-false.Unreliability & TO & TO & TO \\
mcs.1-1-12-true.Unreliability & TO & TO & TO \\
mcs.1-1-13-false.Unreliability & TO & TO & TO \\
mcs.1-1-13-true.Unreliability & TO & TO & TO \\
mcs.1-1-14-false.Unreliability & TO & TO & TO \\
mcs.1-1-14-true.Unreliability & TO & TO & TO \\
mcs.1-1-2-false.Unreliability & $18.35503$ & $62.46321$ & $17.06914$ \\
mcs.1-1-2-true.Unreliability & $117.24206$ & TO & $66.55933$ \\
mcs.1-1-3-false.Unreliability & TO & TO & TO \\
mcs.1-1-3-true.Unreliability & ME & ME & TO \\
mcs.1-1-4-false.Unreliability & TO & TO & TO \\
mcs.1-1-4-true.Unreliability & TO & TO & TO \\
mcs.1-1-5-false.Unreliability & TO & TO & TO \\
mcs.1-1-5-true.Unreliability & TO & TO & TO \\
mcs.1-1-6-false.Unreliability & TO & TO & TO \\
mcs.1-1-6-true.Unreliability & TO & TO & TO \\
mcs.1-1-7-false.Unreliability & TO & TO & TO \\
mcs.1-1-7-true.Unreliability & TO & TO & TO \\
mcs.1-1-8-false.Unreliability & TO & TO & TO \\
mcs.1-1-8-true.Unreliability & TO & TO & TO \\
mcs.1-1-9-false.Unreliability & TO & TO & TO \\
mcs.1-1-9-true.Unreliability & TO & TO & TO \\
mcs.2-1-2-false.Unreliability & TO & TO & TO \\
mcs.2-1-2-true.Unreliability & TO & TO & TO \\
mcs.2-1-3-false.Unreliability & TO & TO & TO \\
mcs.2-1-3-true.Unreliability & TO & TO & TO \\
mcs.2-1-4-false.Unreliability & TO & TO & TO \\
mcs.2-1-4-true.Unreliability & TO & TO & TO \\
mcs.2-2-2-false.Unreliability & TO & TO & TO \\
mcs.2-2-2-true.Unreliability & TO & TO & TO \\
mcs.2-2-3-false.Unreliability & TO & TO & TO \\
mcs.2-2-3-true.Unreliability & TO & TO & TO \\
mcs.2-2-4-false.Unreliability & TO & TO & TO \\
mcs.2-2-4-true.Unreliability & TO & TO & TO \\
mcs.3-1-2-false.Unreliability & TO & TO & TO \\
mcs.3-1-2-true.Unreliability & TO & TO & TO \\
mcs.3-1-3-false.Unreliability & TO & TO & TO \\
mcs.3-1-3-true.Unreliability & TO & TO & TO \\
mcs.3-1-4-false.Unreliability & TO & TO & TO \\
mcs.3-1-4-true.Unreliability & TO & TO & TO \\
mcs.3-2-2-false.Unreliability & TO & TO & TO \\
mcs.3-2-2-true.Unreliability & TO & TO & TO \\
mcs.3-2-3-false.Unreliability & TO & TO & TO \\
mcs.3-2-3-true.Unreliability & TO & TO & TO \\
mcs.3-2-4-false.Unreliability & TO & TO & TO \\
mcs.3-2-4-true.Unreliability & TO & TO & TO \\
mcs.3-3-2-false.Unreliability & TO & TO & TO \\
mcs.3-3-2-true.Unreliability & TO & TO & TO \\
mcs.3-3-3-false.Unreliability & TO & TO & TO \\
mcs.3-3-3-true.Unreliability & TO & TO & TO \\
mcs.3-3-4-false.Unreliability & TO & TO & TO \\
mcs.3-3-4-true.Unreliability & TO & TO & TO \\
pacman.100.crash & TO & TO & TO \\
pacman.5.crash & $0.39577$ & $0.47240$ & $0.38645$ \\
pacman.60.crash & TO & TO & TO \\
philosophers-mdp.10.eat & TO & TO & TO \\
philosophers-mdp.20.eat & TO & TO & TO \\
philosophers-mdp.3.eat & $0.14659$ & $0.15220$ & $0.11582$ \\
philosophers-mdp.30.eat & TO & TO & TO \\
pnueli-zuck.10.live & TO & TO & TO \\
pnueli-zuck.15.live & TO & TO & TO \\
pnueli-zuck.3.live & $0.80554$ & $0.97827$ & $0.49784$ \\
pnueli-zuck.5.live & TO & TO & $151.82080$ \\
rabin.10.live & TO & TO & TO \\
rabin.3.live & $0.71796$ & $1.23793$ & $0.47090$ \\
rabin.5.live & $170.92853$ & TO & $124.63751$ \\
random-predicates.a.goal & TO & TO & TO \\
random-predicates.b.goal & TO & TO & TO \\
random-predicates.c.goal & TO & TO & TO \\
random-predicates.d.goal & TO & TO & TO \\
readers-writers.20.pr\_many\_requests & $14.04527$ & $18.73343$ & $16.74789$ \\
readers-writers.35.pr\_many\_requests & $87.86490$ & $123.52754$ & $125.71199$ \\
readers-writers.40.pr\_many\_requests & $160.88377$ & $226.68553$ & $202.26106$ \\
readers-writers.5.pr\_many\_requests & $0.06697$ & $0.08292$ & $0.05762$ \\
rectangle-tireworld.11.goal & $0.11328$ & $0.20920$ & $0.04817$ \\
rectangle-tireworld.30.goal & TO & TO & TO \\
rectangle-tireworld.5.goal & $0.00418$ & $0.00516$ & $0.00334$ \\
resource-gathering.1000000-0-0.expgold & $0.00224$ & $0.00245$ & $0.00138$ \\
resource-gathering.1300-100-100.expgold & $93.93715$ & $97.78449$ & $30.77645$ \\
resource-gathering.200-15-15.expgold & $1.68257$ & $2.20992$ & $0.66625$ \\
resource-gathering.400-30-30.expgold & $7.38126$ & $9.17804$ & $2.56058$ \\
sf.1-10.Unreliability & TO & TO & TO \\
sf.1-12.Unreliability & TO & TO & TO \\
sf.1-2.Unreliability & $0.65876$ & $1.07010$ & $0.49508$ \\
sf.1-4.Unreliability & $5.17592$ & $12.58888$ & $5.29230$ \\
sf.1-6.Unreliability & $42.92124$ & $132.39630$ & $56.39476$ \\
sf.1-8.Unreliability & TO & TO & TO \\
sf.2-10.Unreliability & TO & TO & TO \\
sf.2-12.Unreliability & TO & TO & TO \\
sf.2-2.Unreliability & TO & TO & TO \\
sf.2-4.Unreliability & TO & TO & TO \\
sf.2-6.Unreliability & TO & TO & TO \\
sf.2-8.Unreliability & TO & TO & TO \\
sf.3-2.Unreliability & TO & TO & TO \\
sf.3-4.Unreliability & TO & TO & TO \\
sf.3-6.Unreliability & TO & TO & TO \\
sf.4-2.Unreliability & TO & TO & TO \\
sf.4-4.Unreliability & TO & TO & TO \\
sf.4-6.Unreliability & TO & TO & TO \\
sf.5-2.Unreliability & TO & TO & TO \\
sf.5-4.Unreliability & TO & TO & TO \\
sf.5-6.Unreliability & TO & TO & TO \\
sf.6-2.Unreliability & TO & TO & TO \\
sf.6-4.Unreliability & TO & TO & TO \\
sf.6-6.Unreliability & TO & TO & TO \\
sf.7-2.Unreliability & TO & TO & TO \\
sf.7-4.Unreliability & TO & TO & TO \\
sf.7-6.Unreliability & TO & TO & TO \\
sf.8-2.Unreliability & TO & TO & TO \\
sf.8-4.Unreliability & TO & TO & TO \\
sf.8-6.Unreliability & TO & TO & TO \\
sf.9-2.Unreliability & TO & TO & TO \\
sf.9-4.Unreliability & TO & TO & TO \\
sf.9-6.Unreliability & TO & TO & TO \\
sms.1-false.Unreliability & $0.03668$ & $0.06086$ & $0.04044$ \\
sms.1-true.Unreliability & $0.13555$ & $0.22248$ & $0.15344$ \\
sms.10-1false.Unreliability & $0.09385$ & $0.14610$ & $0.07829$ \\
sms.10-1true.Unreliability & $0.34418$ & $0.52142$ & $0.32658$ \\
sms.11-1false.Unreliability & $0.13580$ & $0.23214$ & $0.12169$ \\
sms.11-1true.Unreliability & $0.48321$ & $0.75258$ & $0.44395$ \\
sms.12-1false.Unreliability & $0.08493$ & $0.13917$ & $0.08515$ \\
sms.12-1true.Unreliability & $0.32026$ & $0.50581$ & $0.30426$ \\
sms.2-false.Unreliability & $0.06310$ & $0.09978$ & $0.06114$ \\
sms.2-true.Unreliability & $0.23772$ & $0.45092$ & $0.24071$ \\
sms.3-false.Unreliability & $0.13162$ & $0.20565$ & $0.12443$ \\
sms.3-true.Unreliability & $0.43601$ & $0.72095$ & $0.43400$ \\
sms.4-false.Unreliability & $0.06760$ & $0.11114$ & $0.06544$ \\
sms.4-true.Unreliability & $0.25370$ & $0.48016$ & $0.25250$ \\
sms.5-false.Unreliability & $0.16482$ & $0.26009$ & $0.15994$ \\
sms.5-true.Unreliability & $0.50717$ & $0.96247$ & $0.48065$ \\
sms.6-false.Unreliability & $0.11507$ & $0.20204$ & $0.10747$ \\
sms.6-true.Unreliability & $0.48677$ & $0.71349$ & $0.41845$ \\
sms.7-false.Unreliability & $0.06114$ & $0.09935$ & $0.05805$ \\
sms.7-true.Unreliability & $0.21315$ & $0.39947$ & $0.22306$ \\
sms.8-false.Unreliability & $0.14753$ & $0.26858$ & $0.15991$ \\
sms.8-true.Unreliability & $0.54229$ & $0.88781$ & $0.54949$ \\
sms.9-false.Unreliability & $0.08382$ & $0.13327$ & $0.07781$ \\
sms.9-true.Unreliability & $0.34419$ & $0.54331$ & $0.33986$ \\
stream.10.exp\_buffertime & $0.03126$ & $0.05885$ & $0.00416$ \\
stream.100.exp\_buffertime & $4.45162$ & $8.84286$ & $0.09597$ \\
stream.1000.exp\_buffertime & TO & TO & $5.51864$ \\
stream.500.exp\_buffertime & $141.20824$ & TO & $1.51205$ \\
tireworld.17.goal & $4.66682$ & $5.59197$ & $2.10206$ \\
tireworld.25.goal & TO & TO & $101.71868$ \\
tireworld.35.goal & TO & TO & TO \\
tireworld.45.goal & TO & TO & TO \\
triangle-tireworld.1681.goal & TO & TO & TO \\
triangle-tireworld.3721.goal & TO & TO & TO \\
triangle-tireworld.441.goal & TO & TO & TO \\
triangle-tireworld.6561.goal & TO & TO & TO \\
triangle-tireworld.9.goal & $0.02119$ & $0.03207$ & $0.01989$ \\
vgs.4-10000.MaxPrReachFailed & TO & TO & TO \\
vgs.5-10000.MaxPrReachFailed & TO & TO & TO \\
wlan.0-0.collisions & $0.00086$ & $0.00101$ & $0.00079$ \\
wlan.1-0.collisions & $0.00086$ & $0.00106$ & $0.00083$ \\
wlan.2-0.collisions & $0.00081$ & $0.00097$ & $0.00093$ \\
wlan.3-0.collisions & $0.00089$ & $0.00110$ & $0.00103$ \\
wlan.4-0.collisions & $0.00097$ & $0.00118$ & $0.00104$ \\
wlan.5-0.collisions & $0.00110$ & $0.00124$ & $0.00117$ \\
wlan.6-0.collisions & $0.00119$ & $0.00138$ & $0.00117$ \\
wlan\_dl.0-80.deadline & $174.73649$ & TO & $44.18776$ \\
wlan\_dl.1-80.deadline & TO & TO & $114.49123$ \\
wlan\_dl.2-80.deadline & TO & TO & $189.09456$ \\
wlan\_dl.3-80.deadline & TO & TO & TO \\
wlan\_dl.4-80.deadline & TO & TO & TO \\
wlan\_dl.5-80.deadline & TO & TO & TO \\
wlan\_dl.6-80.deadline & TO & TO & TO \\
zenotravel.10-5-3.goal & TO & TO & TO \\
zenotravel.20-10-6.goal & TO & TO & TO \\
zenotravel.4-2-2.goal & $1.82314$ & $2.43064$ & $2.48957$ \\
zenotravel.6-5-3.goal & TO & TO & TO \\
zeroconf.1000-2-false.correct\_max & $44.98833$ & $108.25225$ & $8.90328$ \\
zeroconf.1000-2-true.correct\_max & $0.43777$ & $0.61697$ & $0.19968$ \\
zeroconf.1000-4-false.correct\_max & $141.56439$ & TO & $18.56993$ \\
zeroconf.1000-4-true.correct\_max & $0.61489$ & $1.00084$ & $0.23951$ \\
zeroconf.1000-6-false.correct\_max & TO & TO & $40.81978$ \\
zeroconf.1000-6-true.correct\_max & $0.80356$ & $1.29067$ & $0.22663$ \\
zeroconf.1000-8-false.correct\_max & TO & TO & $71.08324$ \\
zeroconf.1000-8-true.correct\_max & $1.02794$ & $1.65606$ & $0.23718$ \\
zeroconf.20-2-false.correct\_max & $44.79488$ & $108.66664$ & $8.72933$ \\
zeroconf.20-2-true.correct\_max & $0.41057$ & $0.63162$ & $0.17627$ \\
zeroconf.20-4-false.correct\_max & $144.12881$ & TO & $19.28031$ \\
zeroconf.20-4-true.correct\_max & $0.64247$ & $0.99677$ & $0.19692$ \\
zeroconf.20-6-false.correct\_max & TO & TO & $41.28531$ \\
zeroconf.20-6-true.correct\_max & $0.77448$ & $1.26899$ & $0.21170$ \\
zeroconf.20-8-false.correct\_max & TO & TO & $71.19856$ \\
zeroconf.20-8-true.correct\_max & $1.04045$ & $1.63260$ & $0.26323$ \\
zeroconf\_dl.1000-1-false-10.deadline\_max & $13.11458$ & $26.04009$ & $4.26625$ \\
zeroconf\_dl.1000-1-false-20.deadline\_max & $73.35154$ & $182.94441$ & $16.90351$ \\
zeroconf\_dl.1000-1-false-30.deadline\_max & $230.57075$ & TO & $51.28748$ \\
zeroconf\_dl.1000-1-false-40.deadline\_max & TO & TO & $124.58189$ \\
zeroconf\_dl.1000-1-false-50.deadline\_max & TO & TO & TO \\
zeroconf\_dl.1000-1-true-10.deadline\_max & $4.74060$ & $8.43195$ & $1.78943$ \\
zeroconf\_dl.1000-1-true-20.deadline\_max & $11.00584$ & $20.37992$ & $4.09016$ \\
zeroconf\_dl.1000-1-true-30.deadline\_max & $17.90511$ & $32.80180$ & $6.32464$ \\
zeroconf\_dl.1000-1-true-40.deadline\_max & $23.26186$ & $44.64032$ & $8.94716$ \\
zeroconf\_dl.1000-1-true-50.deadline\_max & $30.61569$ & $56.84998$ & $11.72189$ \\
\end{longtable}

%
%
%
\bibliographystyle{splncs04}
\bibliography{refs.bib}

\end{document}